\newcommand{\E}{\mathbf{E}}
\newcommand{\calI}{\mathcal{I}}
\newcommand{\calJ}{\mathcal{J}}
\newcommand{\calS}{\mathcal{S}}
\newcommand{\calU}{\mathcal{U}}
\newcommand{\calP}{\mathcal{P}}
\newcommand{\Prob}{\mathbf{P}}
\newcommand{\Ind}{\mathds{1}}
\newcommand{\R}{\mathbb{R}}
\newcommand{\calN}{\mathcal{N}}
\newcommand{\Lap}{\mathcal{L}}
\newcommand{\SINR}{\text{SINR}}
\newcommand{\STINR}{\text{STINR}}
\newcommand{\SNR}{\text{SNR}}
\newcommand{\hatS}{E}
\newcommand{\hatM}{\hat{E}_M}
\newcommand{\hatT}{\hat{T}}
\newcommand{\hatt}{\hat{t}}
\newcommand{\bbK}{\mathbb{K}}
\newcommand{\bbA}{\mathbb{A}}
\newcommand{\bbI}{\mathbb{I}}
\newcommand{\bsz}{\boldsymbol{z}}
\newcommand{\bst}{\boldsymbol{t}}
\newcommand{\hatbst}{\hat\bst}
\newcommand{\bfz}{\mathbf{z}}
\newcommand{\bfx}{\mathbf{x}}
\newcommand{\bfeta}{\boldsymbol{\eta}}
\newcommand{\1}{\boldsymbol{1}}
\newcommand{\modZ}{Z'}
\newcommand{\modPsi}{\Psi'}
\newcommand{\modM}{M'}
\newcommand{\markTheta}{\widetilde{\Theta}}
\newcommand{\markPsi}{\widetilde{\Psi}}
\newcommand{\markPhi}{\widetilde{\Phi}}
\newcommand{\tmin}{ t_{\text{min}}}
\newtheorem{thm}{Theorem}
\newenvironment{theorem}{\bf\begin{thm}\rm\em}{\end{thm}} 
\newtheorem{corollary}[thm]{Corollary}
\newtheorem{lemma}[thm]{Lemma}
\newtheorem{proposition}[thm]{Proposition}
\newtheorem{rem}[thm]{Remark}
\newenvironment{remark}{\bf\begin{rem}\rm}{\end{rem}} 
\title{
Studying the SINR process of the typical user in Poisson networks by using its factorial moment measures
  \\[2ex]
}
\author{\IEEEauthorblockN{Bart{\l}omiej~B{\l}aszczyszyn\IEEEauthorrefmark{1} and 
Holger~Paul~Keeler\IEEEauthorrefmark{1}}\\
\IEEEauthorblockA{\IEEEauthorrefmark{1}Inria-ENS,
23 Avenue d'Italie, 75214 
Paris, FRANCE\\
Bartek.Blaszczyszyn@ens.fr, Holger.Keeler@inria.fr}}
\begin{document}
\maketitle
\thispagestyle{empty}

\begin{abstract}
Based on a stationary Poisson point process, a wireless network model with random propagation effects (shadowing and/or fading) is considered in order to examine the process formed by the signal-to-interference-plus-noise ratio (SINR) values experienced by a typical user with respect to all base stations in the down-link channel. This SINR process is completely characterized by deriving its factorial moment measures, which involve numerically tractable, explicit integral expressions.  This novel framework naturally leads to expressions for the $k$-coverage probability, including the case of random SINR threshold values considered in multi-tier network models.  While the $k$-coverage probabilities correspond to the marginal distributions of the order statistics of the SINR process, a more general relation is presented connecting the factorial moment measures of the SINR process to the joint densities of these order statistics. This gives a way for calculating exact values of the coverage probabilities arising in a general scenario of signal combination and interference cancellation between base stations.  The presented framework consisting of mathematical representations of SINR characteristics with respect to the factorial moment measures holds for the whole domain of SINR and is amenable to considerable model extension.
\end{abstract}

\begin{keywords}
Heterogeneous networks, multi-tier networks, Poisson process,  Poisson-Dirichlet process, 
SINR, coverage probabilities, interference cancellation,
antenna cooperation, shadowing, fading, propagation invariance,
factorial moment expansions,  stochastic geometry. 
\end{keywords}

\section{Introduction}
The steady rise of heterogeneous cellular networks, owing to the
deployment of recent technologies such as femtocells and picocells to
handle increased user-data, is driving the need for new and more
robust analytic methods. Based on information theoretic arguments, a
key performance metric is the signal-to-interference-plus-nose ratio
(SINR)  in the downlink channel experienced by a typical user in the
network. The SINR (or without noise simply SIR) is a function of {\em
  propagation processes}, which incorporate the distance-dependent
path-loss function and (often assumed to be random) fading and/or
shadowing, which we refer to as simply {\em propagation
  effects}. Knowledge of the distribution of the SINR leads to the
{\em coverage probability} and other SINR-based characteristics (such
as e.g the spectral efficiency) of the cellular networks, which can be used to better design and implement such networks. 

The irregularity of network base station locations suggests that their placement is often best assumed to be random. This claim has been supported in recent years with tractable stochastic geometry models based on the Poisson point process yielding accurate solutions~\cite{ANDREWS2011}. In addition to the tractability and `worst-case' arguments for Poisson models, recent convergence results~\cite{hextopoi,hextopoi-journal} has shown that a wide class of stationary network configurations give results for functions of propagation processes, such as the SINR,  as though the placement of the base stations is a Poisson process when sufficiently strong shadowing  is incorporated into the mode, which was previously predicted via simulations~\cite{brown2000cellular}. This adds weight to the argument for using the Poisson model, which leads to more tractable expressions for the SINR and related quantities.

Motivated by the ongoing deployment of heterogeneous 
networks and recent investigations for mathematical expressions
of the distribution of the SINR in Poisson models of these  and other
networks (see related work in Section~\ref{ss.RelatedWork}),
we wish to introduce a general 
framework for studying arbitrary functions of the
process formed by the SINR values experienced by a typical user with respect to
all base stations in the down-link channel. 
To meet this end we completely characterize this process
by deriving explicit, numerically tractable
integral expressions for its factorial moment measures of all orders $n\ge1$.
These measures represent  the expected number of ways that the typical
user  can connect to $n$ different base stations at different  
SINR threshold values.
It is known that various characteristics of an arbitrary 
point process admit series expansion representations  involving
integrals with respect to its factorial moment measures~\cite{fme,fme_spatial}.
Surprisingly, in the case of the SINR
process many interesting characteristics (including $k$-coverage
probability, coverage under signal combination
and interference cancellation) 
can be written as expansions with finite number of
terms, hence the expansions need not be approximations. This observation, reminiscent of the existence of the pole capacity in cellular networks, 
is a consequence of an algebraic
property of the SINR process, limiting the maximum number of stations
that can cover a given point with a SINR bounded away from zero.
The calculation of the corresponding expansion terms 
is arguably more intuitive and computationally more efficient than
previous methods  based on inversions of Laplace transforms for studying similar problems.

The SINR process, considered above, is itself a function
of  the propagation process, already studied in cellular networks,
and thus inherits the invariance/equivalence properties with respect to the
distribution of the random propagation effects (fading and/or
shadowing)~\cite{equivalence2013}. In fact, we also consider  a  process 
formed by the values of the  
signal-to-total-interference-and-noise ratio
  (STINR).
Working with this latter process is more
convenient and a simple mapping allows one to interpret the 
results in terms of the  original SINR process.

In summary, we define a  point  process formed
by the SINR values experienced by a typical user with respect to
all base stations in the down-link channel.
The key result is the complete characterization of the distribution of
this process via its factorial moment measures. These measures 
lead in a natural way to the finite-dimensional
distributions  of the order statistics of the SINR process
thus allowing one to study general functions of several strongest values of
the SINR process.  We present particular probability results for $k$-coverage and coverage with signal
combination and interference cancellation.

\subsection{Related work} \label{ss.RelatedWork}

\subsubsection{Propagation equivalence}
In this work we employ a useful result, recently called {\em  propagation invariance} by B\l aszczyszyn and Keeler~\cite{equivalence2013}, that arises due to the model being based on a Poisson process and the path-loss function being a singular power-law. A classical and related result on the invariance of shot noise for many types of functions was discovered long ago by Gilbert and Pollak~\cite{GILBERT1960},  and later examined in further detail by Lowen and Teich~\cite{LOWEN1990} for power-law functions, hence revealing that Poisson shot noise only depends one moment. 

For wireless networks,  Haenggi~\cite[Proposition 3]{HAENGGI2008} observed the so-called propagation invariance, as a particular case, and derived related results under a Nakagami fading model. In the context of  SINR of cellular networks, B\l aszczyszyn  et al.~\cite{blaszczyszyn2010impact} independently observed this invariance characteristic for interference and propagation losses in general, thus allowing propagation effects to be incorporated into the model by just one moment. Pinto et.al~\cite{PINTO2012} derived and used a similar result to show that the node degree of secrecy graphs (based on Poisson processes) is invariant for the distribution of propagation effects. In all three  papers~\cite{HAENGGI2008,blaszczyszyn2010impact,PINTO2012}, the  invariance results are obtained by defining a point process on the positive real line, which we call the {\em propagation (loss) process}, and showing that if the base station configuration forms a homogeneous Poisson process, then the propagation process is an inhomogeneous Poisson point process on the positive real line. More specifically, B\l aszczyszyn  et al.~\cite[Proposition 5.5]{blaszczyszyn2010impact} observed that this  propagation process depends on only one moment of the propagation effects, and not its distribution, by explicitly deriving its density measure. Conversely, Pinto et.al~\cite[Theorem 3.3]{PINTO2012} did not strictly observe propagation invariance in our sense as they used a more general path-loss function and not necessarily the singular power-law\footnote{Under certain conditions, there exists a trade-off between how general the path-loss function can be and to which stochastic process the invariance result applies. Assuming a singular power-law path-loss function results in the invariance property applying to the propagation processes. The invariance result of Pinto et al.~\cite{PINTO2012} applies to a function of propagation processes, hence it holds under more general path-loss functions in the spirit of Gilbert and Pollak~\cite{GILBERT1960}.}, and proved their invariance result for another quantity by mapping different (inhomogeneous Poisson) propagation processes and obtaining an identical process.  Haenggi~\cite{HAENGGI2008} independently defined the propagation process (as ``path  loss process with fading'') and observed~\cite[Proposition 1]{HAENGGI2008}  that it is a (possibly inhomogeneous) Poisson process without stating its density measure in general form.

\subsubsection{SINR coverage}

Several authors 
recently developed and studied  single and 
multi-tier network models~\footnote{Terms
 ``multi-tier'' and ``heterogeneous'' networks are often used
interchangeably, but we later adopt a model-specific terminology in
which the former is an instance of the latter.} based on the Poisson
process that led to more or less closed-form expressions for the
(downlink) SINR
coverage probability of the typical
user.
In general, all these models assume that the serving base station(s)
are those that somehow offer the strongest received signal power.
However, they make different assumptions on what kind of 
propagation effects are taken into account when maximizing the received
signal.

Dhillon et al.~\cite{dhillon2011tractable,DHILLON2012}, 
Madhusudhanan et al.~\cite{MADBROWN2011,MADBROWN2012} and
Mukherjee~\cite{mukherjee2011Allerton} assume that 
the selection is made on the {\em instantaneous} (possibly
tier-biased) received power, i.e., when all the propagation effects
are taken into account. In this case the actual distribution of the
propagation effects does not matter (propagation invariance 
admitted in~\cite{MADBROWN2011,MADBROWN2012}), and the results 
should be the same (up to constants multiplying the node densities)
regardless whether the exponential shadowing-and-fading distribution is
assumed for mathematical 
convenience~\cite{mukherjee2011Allerton,dhillon2011tractable,DHILLON2012}
or the expressions are developed using an equivalent  network
representation with no random propagation
effects (the serving station is the closes one)~\cite{MADBROWN2011,MADBROWN2012}. All these works observe
that the analysis simplifies when the SINR (or SIR) threshold $\tau$, 
which is the technology-dependent level that the SINR must exceed to establish a connection, is greater than one (or zero in dB). This assumption on the $\tau$ value (or multiple $\tau$ values in the multi-tier model)  implies that at maximum only one base station in the entire network at any instant can cover the typical user. 
Mukherjee~\cite{mukherjee2011Allerton} goes beyond this assumption
(considers $\tau<1$) using a variant of the
inclusion-exclusion formula. This approach  was used independently 
by Keeler at. al~\cite{kcovsingle}
to study the $k$-coverage in a single-tier network.
Madhusudhanan et al.~\cite{MADBROWN2011,MADBROWN2012} also consider
$\tau<1$ but their results involve the  inversion of Laplace
transforms and hence are less explicit. Inversion of Laplace
transforms is also exploited  in~\cite{hextopoi}.

A different class of models is considered in
Mukherjee~\cite{MUKHERJEE2012}, Jo et al.~\cite{JO2012},
and also, for a single-tier network, in Vu et al.~\cite{VU2012}
and Keeler et al.~\cite[Section IV.C]{kcovsingle}.
These models assume that 
some  propagation effects (say rapidly changing fading) impact the SINR 
value but are  not taken into account when choosing the strongest
(serving) base station. In this case the  coverage probability
depends on the particular distribution of this fading and only
Rayleigh case leads to closed form expressions. 
Moreover, in this case the assumption $\tau\ge1$ is no longer crucial. 
This way of taking into account
the fading (obviously) leads to smaller coverage probabilities and
the difference is relatively important for smaller values of $\tau$,
as illustrated in~\cite{kcovsingle}.

A combination of the two scenarios is considered in
Mukherjee~\cite{mukherjee2012ICC}, where the Rayleigh fading is  assumed
and the serving base station is either the instantaneously the 
strongest one (when accounting for the fading) 
or the one nearest to the user (after accounting for cell
    selection bias).

For tractability, most of the 
aforementioned results assume constant path-loss exponents.
Models with  different (but constant)  path-loss exponent on each tier
are considered in~\cite{JO2012,MADBROWN2012,MUKHERJEE2012,mukherjee2012ICC}.
Authors in~\cite{MADBROWN2012} observe that a multi-tier network is stochastically equivalent to a single-tier network with unity parameters while all the original parameters are incorporated into the density of the (inhomogeneous Poisson) propagation process. This result was later extended in~\cite{equivalence2013} to the case of random path-loss exponents and other parameters in a Poisson network, thus showing a random heterogeneous network is equivalent to a network with constant parameters and an isotropic base station density. This equivalence result,  a variation of which we use here, allowed for the comparison of markedly different networks, for example single-tier and multi-tier types, by examining their equivalent (isotropic) forms.

\subsubsection{Cooperation and interference cancellation}
In this work we do not aim to perform detailed analysis on various types of cooperation or interference cancellation, but rather demonstrate the power of the factorial moment approach by deriving coverage probability for two proposed models of these respective schemes. That said, research featuring stochastic geometry models of cooperation or interference cancellation has remained relatively untouched until recently. Baccelli and  Giovanidis~\cite{giovanidis2013stochastic} recently introduced a stochastic geometry framework for studying pair-wise base station cooperation in cellular networks. They make the tractability assumption of Rayleigh fading~\footnote{It is tempting to suggest that some of these results~\cite{giovanidis2013stochastic} may form part of other work, based on the Poisson point process and a singular path-loss function, that holds under arbitrary propagation effects, but such a thorough and conclusive analysis is not part of this paper.} and use Laplace techniques to derive expressions for the coverage probability under geometry-based cooperation policies with a parameter (for optimization purposes) representing the degree of cooperation. Parallel to this work Akoum and Heath~\cite{akoum2010limited} developed a Poisson-based stochastic geometry model in order to examine interference coordination by introducing the concept of cooperation clusters. They assume Rayleigh fading and a non-singular power-law path-loss function and represent coordination clusters with a  (Poisson-Poisson) cluster process, which leads to bounds on the coverage probability expression.

In the setting of interference cancellation, Zhang and Haenggi~\cite{zhang2012performance,zhangdecoding} tackled the problem of calculating the probability of decoding and removing $n$ interfering signals in  Poisson networks by independently using the aforementioned approach of mapping propagation effects and the path-loss values to the positive real line. Provided  $\tau\geq 1$, they derived~\cite{zhangdecoding}  a closed-form expression  for the SIR-based probability of decoding the $n$-th strongest signal, and used it to obtain bounds   for the mean number of decoded users and the probability of decoding $n$ users. Later Zhang and Haenggi~\cite{zhangsuccessive} applied these results to a heterogeneous network model (with single $\tau$) and concluded that a large part of the performance gained from interference cancellation is possible by just removing the largest interfering signal. Parallel to this work, Quek and et al.~\cite{wildemeersch2013successive}  developed a Poisson-based model of a heterogeneous network with arbitrary propagation effects (and single $\tau$) in order to study the probability of removing the $n$-th strongest signal and to decode the signal of interest after $n$ removals. They used two competing methods (based on Laplace transforms and truncated stable distributions) to derive approximations for the probability of decoding the $n$-th strongest signal given that the previous $n-1$ signals were decoded.   The Laplace method was then used to derive an expression for the coverage probability (given $n$ signals are removed by interference cancellation), which agreed well with simulation results. In these works~\cite{zhang2012performance,zhangdecoding,zhangsuccessive,wildemeersch2013successive}  the line of thought of using the order statistics of the propagation process arises,  but it is then observed that such an approach appears quite intractable (an issue circumvented here by instead deriving and using the order statistics of the SINR process).

\subsubsection{Factorial-moment expansion}
In the theory of point processes factorial moment measures are an
important set of tools that completely characterize a simple point
process, hence they are used in stochastic geometry;  see Daley and
Vere-Jones~\cite{daleyPPI2003,daleyPPII2008} or Stoyan et
al.~\cite{SKM:1995}. It has been shown that expectations of general
functions of simple point processes can be written as (possibly
infinite) expansions of the corresponding factorial moment measures,
provided some convergence condition. This Taylor-like expansion
theorem was first developed for unmarked point processes in one
dimension by B\l aszczyszyn~\cite{fme}, then generalized to higher
dimensions by B\l aszczyszyn et al.~\cite{fme_spatial}. 
Kroese and Schmidt~\cite{fme_marked}
  specifically considered  also independently marked point processes.
It was recently applied in the setting of wireless networks to derive series expansions of the interference process induced by non-Poisson base station configurations~\cite{fmewireless}. Apart from these results, we are unaware of other work where factorial moment measure expansions have been used in a similar setting to that presented here.

To study the SINR process we define a related {\em signal-to-total-interference-and-noise ratio}
  (STINR)  process which is more
tractable. It turns out, this latter process is similar to (and, as recently observed
  in~\cite{sinrPD}, in the case of a noiseless, an  instance of) the well-studied
Poisson-Dirichlet process.  Pitman and Yor~\cite{pitman1997two} surveyed in detail a generalized version of this process and derived various theoretical results. More recently, the factorial moment measures of this generalized version were derived by
Handa~\cite{handa2009two}, the work of which partly inspired us to
derive the joint probability density of the order statistics of the
SINR process, which we use to calculate the coverage probability under
proposed signal combination and interference cancellation models.

\section{Heterogeneous network model and quantities of interest}
We first present a cellular network model under which we define three related point processes on the positive half-line.

\subsection{Network and propagation model}
We consider the ``typical user'' approach where one assumes a typical user is located at the origin and examines what he perceives in the network.  On $\R^2$, we model the base stations with a homogeneous or stationary Poisson point process  $\Phi=\{X\}$ with density $\lambda$. Define the path-loss function as
\begin{equation}\label{PATHLOSS}
\ell(|x|)=(K|x|)^{\beta},
\end{equation}
with path-loss constant $K>0$ and path-loss
exponent $\beta>2$.   Given $\Phi$, let $\{(S_X,P_X)\}_{X\in\Phi}$ be a collection of
independent (across $X$) and identically and \emph{arbitrarily} distributed positive random vectors that form independent marks of $X$.  Let $S_X$ represent  the random propagation effects~\footnote{$S$ may be written as a product of two random variables representing the  fading and shadowing.} from the origin to $X$. For a signal emanating from a base station at $X$, let the $P_X$ represent the power of that signal. Let $(S,P)$ be equal in distribution to $\{(S_X,P_X)\}$ and note that $S$ and $P$ are not necessarily independent. 
We will refer to the above network (model) 
as a {\em heterogeneous network} (model).

\subsection{Propagation (loss) process}
We define the propagation (loss) process, considered as a point process on the positive half-line $\mathbb{R}^{+}$, as
\begin{equation}\label{e.Y}
\Theta=\{Y\}:= \left\{\frac{\ell(|X|)}{P_X S_X } :X\in\Phi   \right\},
\end{equation}
which is a Poisson process; for proof, see~\cite{blaszczyszyn2010impact,hextopoi} where $P$ is not random.
\begin{lemma}[Propagation invariance]\label{l.invariance}
Assume that
\begin{equation}\label{momcond}
  \E[(PS)^{\frac{2}{\beta}}]  <\infty .
\end{equation}
Then the propagation process $\{Y\}$ is an inhomogeneous Poisson point
process with intensity measure
$\Lambda\left(  \left[  0,t\right)  \right)  =a t^{\frac{2}{\beta}}$,
where the propagation constant is
\begin{equation}\label{e.a}
a:=\frac{\lambda\pi
  \E[(PS)^{\frac{2}{\beta}}]}{K^{2}}\,.
\end{equation}
\end{lemma}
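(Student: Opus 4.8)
The plan is to establish the result via the standard Marking and Mapping Theorems for Poisson point processes. First I would reduce the two marks $(S_X,P_X)$ to the single relevant mark $W_X:=S_XP_X$, since the propagation values $Y_X=\ell(|X|)/(P_XS_X)=(K|X|)^{\beta}/W_X$ depend on the marks only through their product. Let $\mu$ denote the common distribution of $W=PS$ on $\R^{+}$. Because the marks $(S_X,P_X)$ are independent across $X$ and independent of $\Phi$, the products $W_X$ are i.i.d., and by the Marking Theorem the marked process $\{(X,W_X)\}_{X\in\Phi}$ is a Poisson point process on $\R^2\times\R^{+}$ with intensity measure $\lambda\,dx\otimes\mu(dw)$.

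Next I would invoke the Mapping Theorem. Define the measurable map $f:\R^2\times\R^{+}\to\R^{+}$ by $f(x,w)=(K|x|)^{\beta}/w$, so that $\Theta=\{f(X,W_X)\}$ is the image of the marked Poisson process under $f$. The Mapping Theorem then asserts that this image is again a Poisson point process on $\R^{+}$, whose intensity measure $\Lambda$ is the pushforward of $\lambda\,dx\otimes\mu(dw)$ under $f$, provided $\Lambda$ is locally finite and non-atomic.

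The main computation is to evaluate this pushforward. For $t>0$ I would write
$$\Lambda([0,t))=\lambda\int_0^{\infty}\int_{\R^2}\Ind\left\{(K|x|)^{\beta}/w<t\right\}\,dx\,\mu(dw).$$
The inner integral is the area of the disk $\{|x|<(tw)^{1/\beta}/K\}$, namely $\pi(tw)^{2/\beta}/K^{2}$. Carrying out the remaining integral and factoring out the $t$-dependence yields
$$\Lambda([0,t))=\frac{\lambda\pi t^{2/\beta}}{K^{2}}\int_0^{\infty}w^{2/\beta}\,\mu(dw)=\frac{\lambda\pi\,\E[(PS)^{2/\beta}]}{K^{2}}\,t^{2/\beta},$$
which is exactly $a\,t^{2/\beta}$ with $a$ as claimed.

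The step I expect to be the genuine obstacle, rather than the algebra, is verifying the hypotheses of the Mapping Theorem. Local finiteness of $\Lambda$ on every bounded interval $[0,t)$ is precisely where the moment assumption $\E[(PS)^{2/\beta}]<\infty$ enters: it guarantees $a<\infty$ and hence $\Lambda([0,t))<\infty$ for all finite $t$. Non-atomicity follows at once from the continuity of $t\mapsto a\,t^{2/\beta}$, which also ensures the resulting process is simple. Once these two conditions are in place, the Mapping Theorem delivers both the Poisson property and the stated intensity measure, completing the argument.
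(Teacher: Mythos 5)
Your proof is correct and takes essentially the same route as the paper: the paper defers this lemma's proof to its references, but its appendix proof of the marked analogue (marked propagation invariance) proceeds exactly as you do, invoking the displacement theorem for Poisson processes (your marking-plus-mapping step) and then computing the intensity measure by integrating the indicator over $\R^2$ and the mark distribution, which is your disk-area calculation yielding $a t^{2/\beta}$. The only cosmetic difference is that you cite Kingman's Mapping Theorem and explicitly check local finiteness and non-atomicity, whereas the paper packages the Poisson-preservation step as the displacement theorem and Campbell's theorem.
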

\begin{remark} 
In terms of propagation processes, this simple yet useful result allows one to represent propagation effects by setting the product $PS=1$, for example, and replacing $\lambda$ with
$\lambda'=\lambda\E[ (PS)^{2/\beta}]$; for more details see~\cite{equivalence2013}. \footnote{The typical user approach coupled
    with the singular path-loss function (\ref{PATHLOSS})
    allows one to extend the model to $d\,$dimensional space and also
    replace  the constant density of base
    stations by an isotropic power-law function $r^\alpha$ with
    $-d<\alpha<\beta-d$. This
    generalization can be done by simply  replacing  $2/\beta$ by
    $\alpha/\beta+d/\beta$ in the whole statement of
    Lemma~\ref{l.invariance} and $\pi/K^2$ by
    $\nu_d/((1+\alpha/d)K^d)$ in~(\ref{e.a}), where $\nu_d=\pi^{d/2}/\Gamma(1+d/2)$ is the
    volume of the unit-radius $d$-dimensional ball; cf
    e.g.~\cite[Lemma~1]{zhangdecoding}.}
\end{remark}

\subsection{SINR process}
We define the {\em SINR process} on the positive half-line $\mathbb{R}^{+}$ for a typical user as
\begin{equation}\label{SINR}
\Psi=\{Z\}:=\left\{\SINR(X) :X\in\Phi \right\},
\end{equation}
where
\begin{equation}\label{SINRX}
\SINR(X): =\frac{Y^{-1}}{W+\gamma (I-Y^{-1} )},
\end{equation}
the constant $W\geq0$ is the additive noise power, and 
\begin{equation}
I=\sum_{Y\in\Theta}Y^{-1},
\end{equation}
is the power received form the entire network (so that 
$I-Y^{-1}$ is the interference), and the constant $\gamma\in[0,1]$ is parameter that represents the
``strength'' of  interference cancellation techniques.~
If we set $\gamma=0$, then the signal-to-noise ratio (SNR) process, $\SNR(X)=Y^{-1}/W$, is the inverse of the propagation (loss) process rescaled.
\begin{remark}\label{r.gamma}
We consider the model that incorporates both the noise power $W$ and the ``interference factor'' $\gamma$.
Note, however, that the SINR process with general  $W>0, \gamma>0$ is equivalent 
to the  SINR process rescaled by $\gamma^{-1}$ and evaluated for the noise power equal to $W\gamma^{-1}$ with
the interference factor set to~1. Thus $\gamma=1$ can be considered without loss of  mathematical generality. We keep however this parameter to facilitate comparisons to some previous works.
\end{remark}

\subsection{STINR process}
To study $\Psi$, it will be helpful to define the STINR process on  $(0,1/\gamma]$ as
\begin{equation}\label{MSINR}
\Psi'=\{Z'\}:=\left\{\STINR(X) :X\in\Phi \right\}.
\end{equation}
where
\begin{equation}\label{MSINRX}
\STINR(X): =\frac{Y^{-1}}{W+\gamma I},
\end{equation}
Note that the STINR is actually the
    {\em signal-to-total-received-power-and-noise ratio}.\footnote{Total interference
    means the total power received from all transmitters, including the useful signal; cf~\cite[Notes, page 204]{gunnarsson2004power}.}
Naturally, working with $\Psi'$ simplifies the algebra due to the
common denominator in its definition and its
  bounded domain $(0,1/\gamma)$. Information on $\Psi'$ gives information on  $\Psi$ by the relation
\begin{equation}\label{Zrelation1}
Z=\frac{Z'}{1-\gamma Z'},
\end{equation}
or equivalently
\begin{equation}\label{Zrelation2}
Z'=\frac{Z}{1+\gamma Z}.
\end{equation}
The STINR process reveals  also some 
relations with some other, well studied mathematical and physical
models as remarked in what follows.
\begin{remark}\label{rem.STIR-PD}
Note that  $\{Y^{-1}\}$  is
an inhomogeneous Poisson process with intensity measure $(2a/\beta)
t^{-1-2/\beta}dt$. By the construction~(\ref{MSINR}) the STIR process ($W=0$) with $\gamma=1$ is equal to,
what is called in physics, a  Poisson-Dirichlet
process~\cite{panchenko2013sherrington} with parameter $\theta=2/\beta$.~\footnote{It appears
as the  thermodynamic (large  system) limit in the low temperature
regime  of Derrida's  random energy model
and a key component of the so-called Ruelle probability cascades,
which are used to represent the thermodynamic limit of  the
Sherrington-Kirkpatrick model for spin glasses. It should not be confused with 
the (perhaps) better known  Poisson-Dirichlet process of Kingman~\cite{KINGMAN:1993}
for which  $\{Y^{-1}\}$  should be Poisson process with intensity (measure) $\alpha t^{-1}e^{-t}dt$
with some constant $\alpha>0$. In fact 
both processes are special cases of two-parameter Poisson-Dirichlet process extensively studied
in~\cite{pitman1997two}.}
Much is known about the Poisson-Dirichlet process,
including its factorial moment measures, but we cannot use these results in a straightforward manner 
in the case  of STINR process (noisy model). However,  some of our results (in Section \ref{jointorder}) are
partly inspired by a recent approach  to the Poisson-Dirichlet process presented in~\cite{handa2009two}.
 See~\cite{sinrPD} for more details on the relations between the two models and  a few results, which can be
 directly derived using  these relations.
 \end{remark}

\section{Factorial moment measures of the SINR process}
We will derive the factorial moment measures of the STINR process $\{\modZ \} $ defined for $ n\geq 1$ as
\begin{align}
\modM^{(n)}(t_1',\dots , t_n') :=\modM^{(n)}\left((t_1',1/\gamma]\times\dots \times (t_n',1/\gamma] \right) 
=\E \left(\sum_{{(\modZ_1,\ldots,\modZ_n)\in(\modPsi)^{\times n}\atop\text{distinct}}}
\prod_{j=1}^{n}  \Ind(\modZ_{j}>t_j')\right) \label{e.Mn},
\end{align}
where $\Ind$ denotes the indicator function.  The factorial measures
of $\{Z'\}$, which in turn give those of $\{Z\}$, not only completely
characterize these processes, but  
allows one to express in a  natural manner the
  probabilities of various events related to SINR coverage, as we shall see in this paper.

Before deriving $\modM^{(n)}$, we introduce two useful integrals, the first of which arose in the $k$-coverage problem \cite{kcovsingle} while the second is a generalization of another integral in the same work. For $x\ge0$ define
\begin{equation}\label{In}
\calI_{n,\beta}(x)=\frac{2^n
\int\limits_0^{\infty} u^{2n-1}e^{-u^2-u^\beta x\Gamma(1-2/\beta)^{-\beta/2}} du
}{\beta^{n-1}(C'(\beta))^n(n-1)!}
\end{equation}
where 
\begin{equation}
 C'(\beta)=\frac{2\pi}{\beta\sin(2\pi/\beta)}=
\Gamma(1-2/\beta)\Gamma(1+2/\beta).
\end{equation}
Note that
\begin{equation}\label{I0}
\calI_{n,\beta}(0)=\frac{2^{n-1}
}{\beta^{n-1}(C'(\beta))^n}.
\end{equation}
For all $x_i\ge0$ define
\begin{align}\label{Jn}
 \calJ_{n,\beta}(x_1,\dots,x_n)= \frac{(1+\sum_{j=1}^{n}   x_j) }{n} 
  \int\limits_{[0,1]^{n-1}}  \frac{  \prod_{i=1}^{n-1}   v_i^{i(2/\beta+1)-1}(1-v_i)^{2/\beta}  }   {  \prod_{i=1}^n  (x_i+\eta_i)}
   dv_1\dots
dv_{n-1},
\end{align}
where 
\begin{equation}
\begin{cases}
\eta_1&= v_1v_2\dots v_{n-1}\\
\eta_2&= (1-v_1)v_2\dots v_{n-1}\\
\eta_3&= (1-v_2)v_3\dots v_{n-1}\\
&\cdots \\
\eta_n&= 1- v_{n-1}.
\end{cases}
\label{e.eta-v}
\end{equation}
\begin{remark}
It is not straightforward to observe (but follows  from 
Theorem~\ref{mainResult}) that 
 $  \calJ_{n,\beta}(x_1,\dots,x_n)$ is invariant under any variable
 permutation. It has been defined to be analogous to the
 single-variable version $\calJ_{n,\beta}(x)$  found in
 \cite{kcovsingle}, with $ \calJ_{n,\beta}(x,\dots,x)= \calJ_{n,\beta}(x)$.
For more remarks on
  $\calJ_{n,\beta}(x_1,\dots,x_n)$, including its representation as a
  functional of $n-1$ independent beta random variables see
  Appendix~\ref{Jnremarks}.
\end{remark} 
We now observe that the factorial moment measure is zero outside a simplex defined by $t_i'$ values.
\begin{lemma}\label{l.modmeasure0}
For $t_i'\in(0,1/\gamma]$, the factorial moment measure of the STINR process  (\ref{MSINR})  satisfies
\begin{equation} \label{momMeasure0}
\modM^{(n)}(t_1',\dots, t_n') =0,
\end{equation}
when 
\begin{equation} 
\gamma \sum_{i=1}^n t_i' \geq 1.
\end{equation}
\end{lemma}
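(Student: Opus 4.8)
The plan is to prove the stronger, deterministic statement that the inner sum in~(\ref{e.Mn}) is identically zero whenever $\gamma\sum_{i=1}^n t_i'\ge1$; the vanishing of its expectation $\modM^{(n)}$ then follows at once. The key feature I would exploit is that all points of the STINR process share the common denominator $W+\gamma I$. Fixing a realization of $\Theta$ (and hence of $I$), every point of $\modPsi$ has the form $\modZ=Y^{-1}/(W+\gamma I)$ for some $Y\in\Theta$, and the map $Y\mapsto Y^{-1}/(W+\gamma I)$ is strictly decreasing, hence injective. Since the propagation process is almost surely simple by Lemma~\ref{l.invariance}, selecting $n$ distinct points $\modZ_1,\dots,\modZ_n$ of $\modPsi$ therefore picks out $n$ distinct points $Y_1,\dots,Y_n$ of $\Theta$.

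With this identification in hand, I would sum the selected values,
\begin{equation}
\sum_{j=1}^n \modZ_j=\frac{\sum_{j=1}^n Y_j^{-1}}{W+\gamma I},
\end{equation}
and bound the numerator by the full received-power sum: because the $Y_j$ are distinct points of $\Theta$ and each term $Y^{-1}$ is positive, $\sum_{j=1}^n Y_j^{-1}\le\sum_{Y\in\Theta}Y^{-1}=I$. Together with $W\ge0$ this yields the deterministic bound
\begin{equation}
\gamma\sum_{j=1}^n \modZ_j\le\frac{\gamma I}{W+\gamma I}\le1,
\end{equation}
valid for every configuration, and holding trivially on the degenerate event $I=\infty$, where each $\modZ_j=0<t_j'$.

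Finally I would conclude by contradiction against the indicator constraints. If a summand $\prod_{j=1}^n\Ind(\modZ_j>t_j')$ is nonzero, then $\modZ_j>t_j'$ for every $j$, whence $\gamma\sum_{j=1}^n\modZ_j>\gamma\sum_{j=1}^n t_j'\ge1$, contradicting the deterministic bound above. Hence every summand vanishes, so does the inner sum, and therefore $\modM^{(n)}(t_1',\dots,t_n')=0$. I do not expect a substantive obstacle here: the argument is purely algebraic and pathwise, and is precisely the ``pole capacity''-type constraint alluded to in the introduction, namely that only finitely many base stations can simultaneously exceed a threshold bounded away from zero. The only points deserving a word of care are the injectivity and simplicity step, which guarantees that $n$ distinct STINR points correspond to $n$ distinct summands of $I$ so that the numerator genuinely cannot exceed $I$, and the harmless degenerate case $I=\infty$, both of which are dispatched above.
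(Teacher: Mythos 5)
Your proof is correct, but it takes a genuinely different route from the paper's. The paper disposes of this lemma in a single line by citing an external result (Proposition 6.2 of the monograph cited as \cite{FnT1}): if the intersection of $n$ SINR cells is nonempty, then $\sum_{i=1}^n t_i'\le 1/\gamma$ --- precisely the ``pole capacity'' constraint you identified. You instead prove that constraint from scratch: in a fixed realization, $n$ distinct STINR points correspond (by injectivity of $y\mapsto y^{-1}/(W+\gamma I)$ and a.s.\ simplicity of the Poisson propagation process) to $n$ distinct atoms of $\Theta$, whence $\sum_{j=1}^n Y_j^{-1}\le I$ and $\gamma\sum_{j=1}^n \modZ_j\le \gamma I/(W+\gamma I)\le 1$ pathwise, so the strict inequalities demanded by the indicators force every summand in~(\ref{e.Mn}) to vanish. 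What your version buys: it is self-contained, it establishes the stronger statement that the inner random sum is identically zero (not merely that its expectation vanishes), and it handles the boundary case $\gamma\sum_i t_i'=1$ transparently, since the strict bound $\gamma\sum_j \modZ_j>1$ coming from the indicators contradicts the nonstrict bound $\le 1$; quoting the cell-intersection result, whose conclusion is stated nonstrictly, requires a moment of care on exactly this boundary. What the paper's version buys is brevity and an explicit bridge to the SINR-cell literature, where the same algebraic fact is packaged geometrically. Your attention to the injectivity/simplicity step is warranted rather than pedantic: it is exactly what guarantees that $n$ distinct points of $\modPsi$ pick out $n$ distinct summands of $I$, without which the numerator bound $\sum_j Y_j^{-1}\le I$ could fail.
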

\begin{proof}
This is a direct consequence of a well-known result~\cite[Proposition 6.2]{FnT1}, which states that if the intersection of SINR cells~\cite[Definition 5.1]{FnT1} is not empty, then $\sum_{i=1}^n t_i'\leq 1/\gamma$. 
\end{proof}

We define 
 \begin{equation}\label{hattn}
\hatt_{i}=\hatt_{i}(t_1',\dots,t_n'):=\frac{\gamma t_i'}{1-\gamma\sum\limits_{j=1}^n t_j'}
\end{equation}
and present the main result, which characterizes the STINR process (\ref{MSINR}).
\begin{thm}[Factorial moment measure of $\modPsi$]\label{mainResult}
Assume that moment condition (\ref{momcond}) holds. Then for $t_i'\in(0,1/\gamma]$, the factorial moment measure of order $n \ge1$ of the STINR process  (\ref{MSINR})  satisfies
\begin{align} \label{momMeasure}
\modM^{(n)}(t_1',\dots  t_n') = n!  \left( \prod\limits_{i=1}^{n}\hatt_i^{-2/\beta} \right)
\calI_{n,\beta}((W/\gamma)a^{-\beta/2}) 
 \calJ_{n,\beta}(\hatt_1,\dots,\hatt_n),
\end{align}
when 
 \begin{equation}\label{e.tplimits}
\gamma\sum_{i=1}^n t_n' <1,
\end{equation}
and 
\begin{equation}  \label{momMeasure00}
\modM^{(n)}(t_1',\dots  t_n') =0,
\end{equation}
otherwise.
\end{thm}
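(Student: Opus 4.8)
The plan is to compute $\modM^{(n)}$ directly from the underlying Poisson structure. Since for a fixed configuration the quantity $I$ is constant and $W+\gamma I>0$, the map $Y\mapsto Z'=Y^{-1}/(W+\gamma I)$ is strictly monotone, hence injective; so the sum over distinct $n$-tuples of $\modPsi$ in (\ref{e.Mn}) may be rewritten as a sum over distinct $n$-tuples of the Poisson process $\{Y^{-1}\}$, which by Remark~\ref{rem.STIR-PD} has intensity $(2a/\beta)s^{-1-2/\beta}\,ds$. First I would apply the multivariate Mecke (Campbell--Mecke) formula to this process. This turns $\modM^{(n)}$ into an $n$-fold integral over ``test points'' $s_1,\dots,s_n$ of the probability that $s_j>t_j'\bigl(W+\gamma\widetilde I+\gamma\sum_i s_i\bigr)$ holds simultaneously for all $j$, where $\widetilde I$ is the shot noise of an independent copy of the process (Slivnyak), and the total received power in the enlarged configuration is $\widetilde I+\sum_i s_i$. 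Summing the $n$ inequalities gives $\bigl(1-\gamma\sum_i t_i'\bigr)\sum_i s_i>\sum_i t_i'\,(W+\gamma\widetilde I)\ge0$, so the integration region is empty whenever $\gamma\sum_i t_i'\ge1$; this yields (\ref{momMeasure00}) (cf.\ also Lemma~\ref{l.modmeasure0}), and from now on I assume $\gamma\sum_i t_i'<1$.

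These $n$ conditions couple the $s_j$ through the common total $\sum_i s_i$, and the next step is to decouple them by solving the linear system. Writing $b=W/\gamma+\widetilde I$, the constraint region is exactly the image of the positive orthant $\{m_j>0\}$ under the affine map $s_j=m_j+\hatt_j\bigl(\sum_k m_k+b\bigr)$, with $\hatt_j$ as in (\ref{hattn}). The Jacobian of this map is the constant $1+\sum_j\hatt_j$ (a rank-one update of the identity), which is \emph{precisely} the prefactor appearing in $\calJ_{n,\beta}$; recognising that this combinatorial constant arises as a Jacobian is the structural heart of the argument. After the substitution each density factor becomes $\bigl(m_i+\hatt_i(\sum_k m_k+b)\bigr)^{-1-2/\beta}$.

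The obstacle now is that $b$ contains the shot noise $\widetilde I$ \emph{inside} the singular powers, so neither the $m$-integration nor the expectation over $\widetilde I$ can be carried out directly. I would linearise using $x^{-1-2/\beta}=\Gamma(1+2/\beta)^{-1}\int_0^\infty\xi^{2/\beta}e^{-\xi x}\,d\xi$ applied to each of the $n$ factors, introducing auxiliary variables $\xi_1,\dots,\xi_n$. In the exponent $\widetilde I$ then appears only linearly, through $e^{-\widetilde I q}$ with $q=\sum_i\xi_i\hatt_i$; the $m_i$-integrals are elementary exponentials producing $\prod_i(\xi_i+q)^{-1}$, and the expectation over the stable shot noise is the standard $\E\,e^{-\xi\widetilde I}=\exp\bigl(-a\Gamma(1-2/\beta)\xi^{2/\beta}\bigr)$. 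Since every integrand is nonnegative, all interchanges are justified by Tonelli. What remains is a deterministic integral of $\prod_i\xi_i^{2/\beta}\big/\prod_i(\xi_i+q)$ against $e^{-(W/\gamma)q-a\Gamma(1-2/\beta)q^{2/\beta}}$.

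Finally I would split this $\xi$-integral into radial and angular parts by scaling $\xi_i=r\pi_i$ with the weighted radius $r=q=\sum_i\xi_i\hatt_i$, so that $\sum_i\pi_i\hatt_i=1$. The radial integral, after the substitution $u^2=a\Gamma(1-2/\beta)r^{2/\beta}$, reproduces exactly $\calI_{n,\beta}\bigl((W/\gamma)a^{-\beta/2}\bigr)$ up to explicit Gamma and power constants. For the angular part I would set $\eta_i=\hatt_i\pi_i$ to move the constraint onto the standard simplex $\{\sum_i\eta_i=1\}$, which extracts the factor $\prod_i\hatt_i^{-2/\beta}$ and leaves $\int \prod_i\eta_i^{2/\beta}\big/\prod_i(\hatt_i+\eta_i)\,\delta(1-\sum_i\eta_i)\,d\eta$. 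The last and most delicate step is the passage from the simplex to the cube $[0,1]^{n-1}$ via the stick-breaking parametrisation (\ref{e.eta-v}); here I expect the main bookkeeping obstacle, namely showing that the Jacobian of that map is $\prod_{i=1}^{n-1}v_i^{\,i-1}$ and that $v_k$ carries total exponent $2k/\beta$ in $\prod_i\eta_i^{2/\beta}$, so that the combined integrand becomes $\prod_i v_i^{\,i(2/\beta+1)-1}(1-v_i)^{2/\beta}$, i.e.\ exactly the integrand of $\calJ_{n,\beta}(\hatt_1,\dots,\hatt_n)$. I would prove this Jacobian identity by induction on $n$. Collecting the constants from the radial integral, the Mecke density $(2a/\beta)^n$, the Jacobian $1+\sum_j\hatt_j$ and $\Gamma(1+2/\beta)^{-n}$ then yields $n!\,\prod_i\hatt_i^{-2/\beta}\,\calI_{n,\beta}\,\calJ_{n,\beta}$, establishing (\ref{momMeasure}).
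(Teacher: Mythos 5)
Your proposal is correct, and it reaches~(\ref{momMeasure}) by a genuinely different route from the paper's. The paper first invokes propagation invariance (Lemma~\ref{l.invariance}) to replace the arbitrary propagation effects by \emph{exponential} ones (a fictitious Rayleigh-type network with adjusted density), applies Campbell's formula and Slivnyak's theorem to the planar process, and then exploits the memoryless property of the exponential distribution: conditioned on which rescaled signal is the minimum, the minimum and the residual differences are independent, so the joint coverage probability factors into a product of three Laplace transforms (noise, interference, and a mixed-exponential residual); the integrals $\calI_{n,\beta}$ and $\calJ_{n,\beta}$ then emerge from a power-law spherical change of coordinates followed by $v_i=\sin^2\theta_i$. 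You instead stay on the half-line and never randomize: you apply the multivariate Mecke formula directly to the Poisson process $\{Y^{-1}\}$ (propagation invariance enters only through its intensity $(2a/\beta)s^{-1-2/\beta}\,ds$), decouple the $n$ coupled constraints by the affine rank-one substitution $s_j=m_j+\hatt_j(\sum_k m_k+b)$, linearize the power kernels with the Gamma integral so that the shot noise enters only via its stable Laplace transform $\exp(-a\Gamma(1-2/\beta)\xi^{2/\beta})$, and finish with a radial/simplex split and the stick-breaking map~(\ref{e.eta-v}). I checked the bookkeeping: the constraint region is indeed the affine image of the positive orthant (using $1+\sum_j\hatt_j=(1-\gamma\sum_j t_j')^{-1}$), the stick-breaking Jacobian is $\prod_i v_i^{\,i-1}$ while $v_k$ carries exponent $2k/\beta$ from $\prod_i\eta_i^{2/\beta}$, and the accumulated constants collapse to exactly $n!$ because $C'(\beta)=\Gamma(1-2/\beta)\Gamma(1+2/\beta)$. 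Two cosmetic cautions: the Jacobian $1+\sum_j\hatt_j$ is $n$ times the prefactor of $\calJ_{n,\beta}$ in~(\ref{Jn}) (the missing $n$ and the $(n-1)!$ hidden in $\calI_{n,\beta}$ are exactly what produce the $n!$), and your zero case reproves Lemma~\ref{l.modmeasure0} from scratch rather than citing~\cite{FnT1}, which is fine. What your route buys is transparency: it makes explicit that only the intensity measure of the propagation process matters (no auxiliary fading assumption), and it explains the algebraic origin of the $\calJ_{n,\beta}$ prefactor as a Jacobian. What the paper's route buys is that each individual step (exponential minima, the known interference Laplace transform) is a standard, citable computation, mirroring the earlier $k$-coverage proof of~\cite{kcovsingle}.
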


\begin{proof} 
Expression (\ref{momMeasure00}) is due to Lemma \ref{l.modmeasure0}. For expression (\ref{momMeasure}), the proof is  included in  Appendix~\ref{mainResultProofPart}. It follows in a similar fashion to that of the main theorem in~\cite{kcovsingle} . 

\end{proof}
We immediately obtain the moment measure of the SINR process, which is defined by
\begin{align}
M^{(n)}\left(t_1,\dots t_n,\right)  
:=M^{(n)}\left((t_1,\infty]\times\dots \times (t_n,\infty] \right)  
=\E \left(\sum_{{(Z_1,\ldots,Z_n)\in (\Psi)^{\times n}\atop\text{distinct}}}
\prod_{j=1}^{n}  \Ind(Z_{j}>t_j)\right)\label{defMomMeasureZ}.
\end{align}

\begin{corollary}[Factorial moment measure of $\Psi$]\label{c.zmom}
Assume the propagation moment condition (\ref{momcond}). Then for $t_i\in(0,\infty)$, the SINR process  (\ref{SINR}) has the moment measure
\begin{equation}\label{e.MmodM}
M^{(n)}\left(t_1,\dots t_n,\right)  = \modM^{(n)}\left(t_1',\dots, t_n'\right),
 \end{equation}
 where
\begin{equation}
t'_i=\frac{t_i}{1+\gamma t_i}.
\end{equation}
\end{corollary}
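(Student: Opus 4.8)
The plan is to prove Corollary~\ref{c.zmom} by a direct change-of-variables argument that relates the two factorial moment measures at the level of the underlying point configurations. First I would observe that the SINR process $\Psi=\{Z\}$ and the STINR process $\modPsi=\{\modZ\}$ are defined over the \emph{same} underlying Poisson configuration $\Phi$: each station $X\in\Phi$ contributes one point $Z=\SINR(X)$ to $\Psi$ and one point $\modZ=\STINR(X)$ to $\modPsi$, and these are linked deterministically and bijectively by the relations~(\ref{Zrelation1})--(\ref{Zrelation2}). Hence there is a natural one-to-one correspondence between the points of the two processes, station by station.

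The key step is to note that the map $z\mapsto z'=z/(1+\gamma z)$ from~(\ref{Zrelation2}) is a strictly increasing bijection from $(0,\infty)$ onto $(0,1/\gamma)$. Consequently, for any fixed $X$, the event $\{Z>t_i\}$ is identical to the event $\{\modZ > t_i'\}$ with $t_i'=t_i/(1+\gamma t_i)$, since the strict monotonicity preserves the inequality under the transformation. Starting from the definition~(\ref{defMomMeasureZ}), I would rewrite each indicator $\Ind(Z_j>t_j)$ as $\Ind(\modZ_j>t_j')$ using this equivalence. Because the correspondence $Z\leftrightarrow\modZ$ is a bijection on the points, summing over distinct $n$-tuples $(Z_1,\dots,Z_n)\in(\Psi)^{\times n}$ is the same as summing over the corresponding distinct $n$-tuples $(\modZ_1,\dots,\modZ_n)\in(\modPsi)^{\times n}$. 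Taking expectations then yields exactly $\modM^{(n)}(t_1',\dots,t_n')$ as defined in~(\ref{e.Mn}), establishing~(\ref{e.MmodM}).

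The only point that requires a little care --- and what I expect to be the main (if modest) obstacle --- is verifying that the correspondence between $n$-tuples of distinct points is genuinely a bijection, i.e.\ that distinctness is preserved in both directions. This follows again from the injectivity of $z\mapsto z/(1+\gamma z)$: two distinct SINR values map to two distinct STINR values and vice versa, so the ``distinct'' constraint in the inner sum is respected by the change of variables and no multiplicities are created or destroyed. One should also confirm the boundary behaviour is consistent: as $t_i$ ranges over $(0,\infty)$ the transformed threshold $t_i'$ ranges over $(0,1/\gamma)$, which is precisely the domain on which $\modM^{(n)}$ is supported (the complementary region being handled by Lemma~\ref{l.modmeasure0}), so the identity is stated on the correct range. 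Since this is purely a deterministic relabelling of the summation that leaves the underlying random measure untouched, no use of the moment condition~(\ref{momcond}) beyond what already guarantees finiteness via Theorem~\ref{mainResult} is needed, and the corollary is immediate once the pointwise threshold equivalence is recorded.
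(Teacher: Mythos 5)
Your proposal is correct and follows essentially the same route as the paper, which likewise derives the identity directly from the monotone bijection $Z' = Z/(1+\gamma Z)$ of~(\ref{Zrelation1})--(\ref{Zrelation2}) applied inside the definition~(\ref{defMomMeasureZ}) of the factorial moment measure. The paper states this in one line; your write-up simply makes explicit the pointwise threshold equivalence and the preservation of distinctness under the bijection, which are exactly the observations the paper's proof leaves implicit.
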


\begin{proof}
The result follows from the relationship between $\Psi_i$ and $\Psi_i'$ captured in expressions (\ref{Zrelation1}) and (\ref{defMomMeasureZ}). 
\end{proof}

\begin{remark}[Noise factorization]
\label{rem.factorization}
Denote by  $ \modM^{(n)}_{W=0,\gamma=1} $ and  $ M^{(n)}_{W=0,\gamma=1}$  respectively the
factorial moment measures of the STIR and SIR processes 
with   $W=0$ and $\gamma=1$.  
By~(\ref{momMeasure}),  Remark~\ref{r.gamma} and Corollary~\ref{c.zmom} we have immediately the  following 
factorization of the noise in the factorial moment measures with arbitrary $W$ and $\gamma$
\begin{align}
\modM^{(n)}(t_1',\ldots,t_n')&=\bar{ \calI}_{n,\beta}((W/\gamma)a^{-\beta/2})
\modM^{(n)}_{W=0,\gamma=1}(\gamma t_1',\ldots,\gamma t_n')\label{noiseM},\\
M^{(n)}(t_1,\ldots,t_n)&= \bar{ \calI}_{n,\beta}((W/\gamma)a^{-\beta/2})M^{(n)}_{W=0,\gamma=1}(\gamma t_1,\ldots,\gamma t_n) ,
 \end{align}
where
\begin{equation}
\bar{ \calI}_{n,\beta}(x)=\frac{ \calI_{n,\beta}(x)} { \calI_{n,\beta}(0)} .
 \end{equation}
\end{remark}
This ability to factor out 
the noise effect in the factorial moment measures is  convenient and is
reminiscent of factoring out the noise term in the distribution of the SINR under Rayleigh fading, an
assumption that is not required, however, in our present setting. In particular it allows one to 
express   the densities of the factorial moment measures $M'^{(n)}$ of the STINR process 
as follows, using the corresponding densities of the Poisson-Dirichlet process from~\cite[Theorem
  2.1]{handa2009two}; cf Remark~\ref{rem.STIR-PD}.   
For $n\ge0$ denote 
$$c_{n,\alpha,\theta}=\prod_{i=1}^n\frac{\Gamma(\theta
+1+(i-1)\alpha)}{\Gamma(1-\alpha)\Gamma(\theta +i\alpha)}\,,$$
in particular  $c_{n,2/\beta,0}=   (2/\beta)^{n-1} \Gamma(n )/(\Gamma(2n/\beta )\Gamma(1-2/\beta )^n)$.
\begin{corollary}\label{newmu_n}
For the STINR process $\Psi'$ $(W\geq 0)$, the $n\,$th factorial
moment density is given by
\begin{align} 
&\mu'^{(n)}(t_1',\dots  t_n') :=(-1)^n
\frac{\partial^{n}\modM^{(n)}(t_1',\dots  t_n')}{\partial
  t_1'\dots\partial t_n' }
\label{mu_n}
\\
&\hspace{-1em}= c_{n,2/\beta,0}\,\bar{ \calI}_{n,\beta}((W/\gamma)a^{-\beta/2})
\gamma^{n} \Bigl( \prod\limits_{i=1}^{n}(\gamma t_i')^{-(2/\beta+1)} \Bigr)
\Bigl(1- \sum\limits_{j=1}^{n}(\gamma t'_j) \Bigr)^{2n/\beta -1}\hspace{-3em}\,
\nonumber
\end{align}
for $(t_1',\dots,t_n')$ satisfying~(\ref{e.tplimits}) and 0 otherwise.
\end{corollary}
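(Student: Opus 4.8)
The plan is to avoid differentiating the closed form of Theorem~\ref{mainResult} directly, and instead to exploit the noise factorization together with the Poisson-Dirichlet identification, both already supplied in the excerpt. The starting point is the factorization in Remark~\ref{rem.factorization}, namely $\modM^{(n)}(t_1',\ldots,t_n')=\bar{\calI}_{n,\beta}((W/\gamma)a^{-\beta/2})\,\modM^{(n)}_{W=0,\gamma=1}(\gamma t_1',\ldots,\gamma t_n')$. The crucial point is that the prefactor $\bar{\calI}_{n,\beta}((W/\gamma)a^{-\beta/2})$ depends on $W,\gamma,a,\beta,n$ but on none of the $t_i'$, so it passes through the mixed derivative in~(\ref{mu_n}) as a constant multiplier. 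Hence the whole problem reduces to computing the $n$th factorial moment density of the STIR process, i.e.\ the $W=0$, $\gamma=1$ case.

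First I would perform the change of variables $s_i=\gamma t_i'$ inside the mixed derivative. Since $\partial/\partial t_i'=\gamma\,\partial/\partial s_i$, applying $(-1)^n\partial^n/\partial t_1'\cdots\partial t_n'$ to the factorization produces a factor $\gamma^n$ and evaluates the STIR density at the rescaled point, giving $\mu'^{(n)}(t_1',\ldots,t_n')=\bar{\calI}_{n,\beta}((W/\gamma)a^{-\beta/2})\,\gamma^n\,\mu'^{(n)}_{W=0,\gamma=1}(\gamma t_1',\ldots,\gamma t_n')$, where $\mu'^{(n)}_{W=0,\gamma=1}$ is the STIR factorial moment density and the $(-1)^n$ stays attached to the $s$-derivative so that no sign issue arises.

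Next I would invoke Remark~\ref{rem.STIR-PD}, which identifies the STIR process with the Poisson-Dirichlet process of parameter $\theta=2/\beta$, and read off its $n$th factorial moment density from Handa~\cite[Theorem 2.1]{handa2009two}. With the parameters matched to $(2/\beta,0)$ this density equals $c_{n,2/\beta,0}\prod_{i=1}^n s_i^{-(2/\beta+1)}(1-\sum_{j=1}^n s_j)^{2n/\beta-1}$ on the simplex $\sum_i s_i<1$. Substituting $s_i=\gamma t_i'$ and collecting the $\gamma^n$ and noise factors yields exactly the claimed expression on the region~(\ref{e.tplimits}); outside it the density vanishes because $\modM^{(n)}\equiv0$ there by Theorem~\ref{mainResult} (equivalently Lemma~\ref{l.modmeasure0}).

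The main obstacle is not a calculation but the bookkeeping of conventions in the third step. One must reconcile the physics parametrization of the Poisson-Dirichlet process used in Remark~\ref{rem.STIR-PD} (parameter $\theta=2/\beta$) with the two-parameter convention in which Handa states his density, so that the physics $\theta$ is read as Handa's $\alpha=2/\beta$ with his remaining parameter equal to $0$; only then does $c_{n,\alpha,\theta}$ specialize correctly to $c_{n,2/\beta,0}$ and do the exponents come out as $-(2/\beta+1)$ and $2n/\beta-1$. One must also confirm that Handa's factorial moment density is precisely the object $(-1)^n\partial^n\modM^{(n)}/\partial t_1'\cdots\partial t_n'$, i.e.\ that his correlation-function normalization and the upper-orthant convention $(t_i',1/\gamma]$ used in~(\ref{e.Mn}) agree. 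A fully self-contained alternative would differentiate the closed form of Theorem~\ref{mainResult} directly, but this requires pushing the mixed derivative through the coupled dependence of the $\hatt_i$ in~(\ref{hattn}) on all variables and through the $(n-1)$-fold integral defining $\calJ_{n,\beta}$, which is considerably more laborious and is exactly what the Poisson-Dirichlet shortcut is designed to circumvent.
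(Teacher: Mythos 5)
Your proposal is correct and takes essentially the same route as the paper: the paper obtains Corollary~\ref{newmu_n} precisely by combining the noise factorization of Remark~\ref{rem.factorization} (whose prefactor $\bar{\calI}_{n,\beta}$ is independent of the $t_i'$) with the Poisson-Dirichlet identification of Remark~\ref{rem.STIR-PD} and the densities from Handa's Theorem~2.1, exactly as you do. The paper itself remarks that it knows of no derivation by direct differentiation of the measure in Theorem~\ref{mainResult}, confirming that the shortcut you chose is the intended argument rather than an alternative to it.
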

We are unaware of anybody showing  the equivalence of Propositions \ref{mainResult} and  Corollary~\ref{newmu_n}, either by
 differentiating the    measure (\ref{momMeasure}) or integrating the density (\ref{mu_n}).

\section{Coverage probabilities in heterogeneous networks with varying SINR thresholds}
In this section we will present a result in which the factorial
moments of the SINR process naturally arise, thus illustrating an
intuitive and convincing reason for their introduction before applying
them in a more general setting. We will see that the  $k$-coverage
probability framework introduced in \cite{kcovsingle} holds for very
general settings. In this regard, we will define marked versions of
the previously introduced point processes with the inclusion of SINR
threshold values. Informally, we will refer to a network (model) in
which each value of $S$, $P$ {\em and the SINR threshold} is random
and depends on each base station as a heterogeneous network (model)
{\em with varying SINR thresholds}.

\subsection{Heterogeneous network model  with varying SINR thresholds}
Given $\Phi=\{X\}$, let $\{T_X\}_{X\in\Phi}$ be a collection of positive random variables that each represent the SINR threshold of a base station. For each $X\in \Phi$, the positive random vector $(S_X,P_X,T_X)$ forms its independent mark. Our network model is now described by the independently marked point process
\begin{equation}
\markPhi:=\{( X,(S_X,P_X,T_X))\}.
\end{equation}
Note that the coordinates of $(S_X,P_X,T_X)$  are not  necessarily independent to each other. The marked process $\markPhi$ induces the independently marked propagation process 
\begin{equation}
\markTheta:=\{( Y,T)\}=\left\{\left(\frac{\ell(X)}{P_X S_X},T_X \right)\right\}.
\end{equation}
Effectively, what the typical user `experiences' in a heterogeneous network is captured by the independently marked point process $\markTheta$. This process admits a propagation invariance result~\cite{equivalence2013}, analogous to Lemma \ref{l.invariance}, which was originally given with random $\beta$, but is now presented with constant $\beta$ (see Appendix \ref{proofmarkedinvariance} for proof). 
\begin{lemma}[Marked propagation invariance]\label{l.markedinvariance}
Assume that 
\[
\E[ (PS)^{2/\beta}] < \infty.
\]
Then the propagation process $\markTheta$ is an independently marked inhomogeneous Poisson point process on $\R_+$ with intensity measure
\begin{align}
\Lambda(s,t) := \E[\sum_{(Y,T)\in\markTheta} \Ind ( Y\leq s, T\leq t )] 
= \frac{\lambda \pi s^{2/\beta}}{K^2}  \E \left[(PS)^{2/\beta} \Ind ( T\leq t ) \right]\label{e.Lambda}. 
\end{align}
\end{lemma}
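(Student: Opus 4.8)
The plan is to realise $\markTheta$ as the image of a suitable marked Poisson process under an explicit map and then read off both its law and its intensity measure from the Mapping Theorem for Poisson processes (see e.g.\ \cite{SKM:1995}). First I would invoke the Marking Theorem: attaching the i.i.d.\ marks $(S_X,P_X,T_X)$ to the homogeneous Poisson process $\Phi$ produces an independently marked Poisson process $\markPhi$ on $\R^2\times\R_+^3$ with intensity measure $\lambda\,dx\otimes\P_{S,P,T}(d(\sigma,\rho,\tau))$, where $\P_{S,P,T}$ is the common law of a mark.

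Next I would introduce the measurable map $h(x,(\sigma,\rho,\tau))=\bigl((K|x|)^\beta/(\rho\sigma),\,\tau\bigr)$ from $\R^2\times\R_+^3$ into the value-times-mark space $\R_+\times\R_+$; by definition $\markTheta=h(\markPhi)$. The Mapping Theorem then gives at once that $\markTheta$ is Poisson on $\R_+\times\R_+$ with intensity measure the pushforward $h_*\bigl(\lambda\,dx\otimes\P_{S,P,T}\bigr)$. To evaluate this pushforward on a rectangle I would write
\[
\Lambda([0,s]\times[0,t])=\lambda\int_{\R_+^3}\int_{\R^2}\Ind\!\Bigl(\tfrac{(K|x|)^\beta}{\rho\sigma}\le s\Bigr)\Ind(\tau\le t)\,dx\,\P_{S,P,T}(d(\sigma,\rho,\tau)),
\]
evaluate the inner integral as the area of the disk $\{|x|\le(s\rho\sigma)^{1/\beta}/K\}$, equal to $\pi s^{2/\beta}(\rho\sigma)^{2/\beta}/K^2$, and recognise the remaining outer integral as an expectation, which reproduces exactly (\ref{e.Lambda}).

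Two routine checks then finish the proof. For local finiteness, which is what makes $\markTheta$ a bona fide point process, I would set $t=\infty$ to get the position marginal $\Lambda([0,s]\times\R_+)=(\lambda\pi s^{2/\beta}/K^2)\,\E[(PS)^{2/\beta}]$; this is finite for all $s<\infty$ precisely because of the assumed condition $\E[(PS)^{2/\beta}]<\infty$, and since $2/\beta>0$ there is no accumulation of points near $0$. For the independent-marking claim I would observe that (\ref{e.Lambda}) factorises as $\nu([0,s])\,Q([0,t])$ with $\nu([0,s])=(\lambda\pi s^{2/\beta}/K^2)\E[(PS)^{2/\beta}]$ and $Q([0,t])=\E[(PS)^{2/\beta}\Ind(T\le t)]/\E[(PS)^{2/\beta}]$; because $Q$ is a probability measure on $\R_+$, a Poisson process with the product intensity $\nu\otimes Q$ is, by definition, an independently marked Poisson process whose marks are i.i.d.\ with law $Q$.

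I expect the only genuinely delicate point, and the sole feature that is new relative to the unmarked Lemma \ref{l.invariance}, to be the mark law itself: since $T$ need not be independent of $(S,P)$, the law $Q$ is the \emph{$(PS)^{2/\beta}$-size-biased} distribution of $T$, not the plain law of $T$. This weighting is generated automatically by the area of the covering disk (the $(\rho\sigma)^{2/\beta}$ factor), so the essential care is to apply the Mapping Theorem to the joint $(Y,T)$ space in one step rather than first de-marking and then re-attaching $T$, which would silently discard the correlation and give the wrong $Q$.
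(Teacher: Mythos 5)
Your proof is correct and takes essentially the same route as the paper's: the paper likewise invokes the displacement (mapping) theorem for Poisson processes together with Campbell's theorem and evaluates the same disk-area integral to obtain~(\ref{e.Lambda}). Your additional verifications --- local finiteness under the moment condition and the factorization of the intensity into $\nu\otimes Q$ with $Q$ the $(PS)^{2/\beta}$-size-biased law of $T$ --- are details the paper leaves implicit in this lemma and only makes explicit later, in Corollary~\ref{c.phistar}.
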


If two network models induce the same marked propagation processes $\markTheta$, then we say they are {\em stochastically equivalent}.
Furthermore, two equivalent networks also induce the same independently marked SINR process
\begin{equation}
\markPsi:=\{(Z,T)\}.
\end{equation} 
where $Z=\SINR(X)$ as given by expression (\ref{SINRX}). 

Lemma \ref{l.markedinvariance} allows one to construct an equivalent network with some of the previously random marks set to constants.  There is a subtlety in this ability to `push' randomness away from certain marks onto others that serves as a useful technique in proofs, which we leverage in the next result.

\begin{corollary}\label{c.phistar}
For the network $\markPhi$, there is a stochastically equivalent network
\begin{equation}
\markPhi^*=\{( X^*,(S_{X^*}=1,P_{X^*}=1,T_{X^*}))\},
\end{equation}  
with density 
\begin{equation}
\lambda^*:=\lambda \E[(PS)^{2/\beta}]=a K^2/\pi,
\end{equation}
where given $\{(X^*) \}$, the marks $T_{X^*}$ are independent (across $X^*$), and with a distribution given by
\begin{equation}\label{e.FT}
F_{T^*}:=\Prob(T^* \leq t)=\frac{ \E \left[(PS)^{2/\beta} \Ind ( T\leq t ) \right],}{ \E[(PS)^{2/\beta}]}.
\end{equation}
Consequently, $\markPhi^*$ induces the marked SINR process
\begin{equation}
\markPsi^*:=\{(Z^*,T^*)\},
\end{equation}
which is equal in distribution to $\markPsi$, hence the factorial moment measures of $\{Z^*\}$ are equal to these of $\{Z\}$, namely $M^{(n)}$ with propagation constant $a=\pi\lambda^*/K^2$. The same is true for the STINR process induced by $\markPhi^*$.
\end{corollary}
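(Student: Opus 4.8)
The plan is to prove equivalence by matching the intensity measures of the induced marked propagation processes. Recall that ``stochastically equivalent'' was defined through inducing the same marked process $\markTheta$, and that Lemma~\ref{l.markedinvariance} asserts both $\markTheta$ and the process induced by $\markPhi^*$ are independently marked inhomogeneous Poisson point processes. Since such a process is determined in distribution by its intensity measure $\Lambda(s,t)$, it suffices to verify that $\markPhi$ and $\markPhi^*$ yield the same $\Lambda(s,t)$.

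First I would apply Lemma~\ref{l.markedinvariance} to the starred network. Since $S_{X^*}=P_{X^*}=1$, the factor $\E[(P^*S^*)^{2/\beta}\Ind(T^*\le t)]$ collapses to $\Prob(T^*\le t)=F_{T^*}(t)$, so the intensity measure of the process induced by $\markPhi^*$ is $\Lambda^*(s,t)=\lambda^*\pi s^{2/\beta}F_{T^*}(t)/K^2$. Substituting the prescribed density $\lambda^*=\lambda\E[(PS)^{2/\beta}]$ and the tilted law $F_{T^*}$ from~(\ref{e.FT}), the normalizing constant $\E[(PS)^{2/\beta}]$ cancels and $\Lambda^*(s,t)$ reduces exactly to the expression~(\ref{e.Lambda}) for $\markTheta$. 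This cancellation is the crux: the intensity measure depends on the joint law of $(S,P,T)$ only through the single functional $\E[(PS)^{2/\beta}\Ind(T\le t)]$, so pushing all the randomness of $S$ and $P$ into the threshold law $F_{T^*}$ reproduces it. Along the way I would check that $F_{T^*}$ is a bona fide distribution function: finiteness and positivity of $\E[(PS)^{2/\beta}]$ are guaranteed by the moment condition, and $t\mapsto\E[(PS)^{2/\beta}\Ind(T\le t)]$ is non-decreasing and right-continuous with the correct limits at $0$ and $\infty$.

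Having matched the intensity measures, I would invoke the uniqueness of an independently marked Poisson process given its intensity measure to conclude $\markTheta\stackrel{d}{=}\markTheta^*$. Since the marked SINR process is obtained from the marked propagation process by the deterministic, measurable map~(\ref{SINRX}) applied to the whole configuration (each $Y^{-1}$, the aggregate $I$, the noise $W$ and the factor $\gamma$, carrying the mark $T$ along), equality in distribution of the marked propagation processes transfers to $\markPsi^*\stackrel{d}{=}\markPsi$; the same argument gives the STINR process via~(\ref{MSINRX}). Equality of the factorial moment measures $M^{(n)}$ is then immediate, as these are functionals of the law of $\{(Z,T)\}$. Finally the claimed propagation constant follows by direct substitution, $a=\lambda\pi\E[(PS)^{2/\beta}]/K^2=\lambda^*\pi/K^2$, using $(P^*S^*)^{2/\beta}=1$ in the starred model.

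The main obstacle is conceptual rather than computational: one must be careful that the coordinates of $(S,P,T)$ are \emph{not} assumed independent, so the functional $\E[(PS)^{2/\beta}\Ind(T\le t)]$ genuinely encodes the joint dependence between the propagation effects and the threshold, and it is precisely this joint functional — not the marginal law of $T$ — that the tilted distribution $F_{T^*}$ is engineered to reproduce. Once this is recognised, the verification is a one-line cancellation.
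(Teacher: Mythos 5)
Your proposal is correct and takes essentially the same route as the paper's own (one-line) proof, which likewise sets $S=P=1$, substitutes $F_{T^*}$, and verifies that the resulting propagation process is a marked Poisson process with intensity measure~(\ref{e.Lambda}); you simply spell out the cancellation via Lemma~\ref{l.markedinvariance}, the uniqueness of a Poisson process given its intensity measure, and the deterministic-functional argument transferring equality in distribution to $\markPsi$ and the STINR process, all of which the paper leaves implicit.
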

\begin{proof}
Set the marks to $S=1$ and $P=1$, substitute $F_{T^*}$, and verify that the resulting propagations process forms a marked Poisson process with intensity measure $\Lambda(s,t)$ given by (\ref{e.Lambda}). 
\end{proof}

\subsection{Coverage number and symmetric sums}\label{s.symsum}
We now introduce some quantities of interest that first appeared in the single-tier case~\cite{kcovsingle}, but hold under the more general setting of heterogeneous networks. We consider the \emph{coverage number} of the typical user, which is defined as the
number of base stations to which the typical user can connect, namely
\begin{equation}\label{e.N}
\calN:=\sum_{(Z,T)\in\markPsi}\Ind \left[Z>T\right].
\end{equation}
The probability  of the typical user being covered by
at least $k$ base stations, or the \emph{$k$-coverage probability}, is
\begin{equation}\label{Pck-def}
\calP^{(k)}:=\Prob\{\,\calN\geq k\,\}.
\end{equation}
In particular, the coverage probability of the typical
user  is $\calP:=\calP^{(1)}$. In the spirit of the previous $k$-coverage result~\cite{kcovsingle}, we introduce the notion of symmetric sums. For $n\ge 1$, define the  $n\,$th symmetric sum  
\begin{equation}\label{e.SS}
\calS_n :=\E\Bigl[\sum_{{(Z_i,T_i)\in(\markPsi)^{\times n}\atop\text{distinct}}}
\Ind\left(\,Z_i> T_i,  \, i=1,\dots,n \,\right)\Bigr]
\end{equation}
We set $\calS_0:= 1$.
$\calS_n$ represents the expected number of ways of choosing $n$ different base stations $X_1,\ldots,X_m$ each received by the user at the origin with the 
 SINR larger than it specific value $T_{X_i}$. We can call it the {\em expected $n$-coverage number}.

The following $\calS_n$-based identities are akin to the famous inclusion-exclusion principle (for example, see~\cite[IV.5 and
IV.3]{Fel68} for~(\ref{e.ss1}) and~(\ref{e.ss2}), respectively) and stem from the Schuette-Nesbitt formula (which we recall for completeness 
in Appendix~\ref{App1}, see also  \cite{GERBER:1995}). 
\begin{lemma}\label{l.SS-represent}
We have for  $k\ge1$ 
\begin{eqnarray}
 \calP^{(k)}&=&\sum_{n=k}^{\infty}  (-1)^{n-k}{n-1\choose k-1}\calS_n\,,\label{e.ss1}\\
\Prob\{\,\calN=k\,\}&=&\sum_{n=k}^{\infty}  (-1)^{n-k}{n\choose k}\calS_n\,,\label{e.ss2}\\
 \E[z^{\calN}]&=&\sum\limits_{n=0}^{\infty} (z-1)^n \calS_n\,,
\quad z\in[0,1]\,, \label{e.ss3}\\
 \E[\calN]&=&\calS_1\,.\label{e.ss4}
\end{eqnarray}
\end{lemma}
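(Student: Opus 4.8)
The plan is to recognize the four identities as the standard consequences of the Schuette--Nesbitt formula (recalled in Appendix~\ref{App1}) applied to the countable family of coverage events carried by the points of $\markPsi$. First I would fix a realization of $\markPsi$ and, for each $(Z,T)\in\markPsi$, introduce the coverage event $\{Z>T\}$; by the definition~(\ref{e.N}) the coverage number $\calN$ is exactly the number of these events that occur. The key structural observation is that the inner sum in~(\ref{e.SS}) counts, on each realization, the number of ways of selecting $n$ distinct stations all of which cover the user. Read as an unordered selection (consistently with the stated identities), this count equals $\binom{\calN}{n}$, so taking expectations yields the identification $\calS_n=\E[\binom{\calN}{n}]$; that is, $\calS_n$ is precisely the $n$th binomial moment (symmetric sum) of the coverage-indicator family.

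Once this identification is in hand, the four identities become purely combinatorial. I would derive~(\ref{e.ss3}) first from the binomial theorem, writing $z^{\calN}=(1+(z-1))^{\calN}=\sum_{n\ge0}\binom{\calN}{n}(z-1)^n$ and taking expectations term by term to obtain $\E[z^{\calN}]=\sum_{n\ge0}(z-1)^n\calS_n$. Identity~(\ref{e.ss4}) then follows by differentiating~(\ref{e.ss3}) at $z=1$, or by reading off the coefficient of the linear term. For~(\ref{e.ss2}) I would invert the relation $\calS_n=\sum_{m\ge n}\binom{m}{n}\Prob\{\calN=m\}$ by binomial inversion, using the elementary collapse $\binom{n}{k}\binom{m}{n}=\binom{m}{k}\binom{m-k}{n-k}$ together with $\sum_{n=k}^{m}(-1)^{n-k}\binom{n}{k}\binom{m}{n}=\binom{m}{k}(1-1)^{m-k}=\Ind(m=k)$. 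Finally~(\ref{e.ss1}) follows either by summing~(\ref{e.ss2}) over the tail $\{\calN\ge k\}$ and resumming the resulting binomial coefficients, or directly as the ``at least $k$'' case of the Schuette--Nesbitt formula.

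The main obstacle is not the algebra but the passage from the finite-family Schuette--Nesbitt formula to the countably infinite process $\markPsi$: one must justify that the series in~(\ref{e.ss1})--(\ref{e.ss3}) converge and that expectation may be exchanged with the symmetric-sum expansion. The plan is to exploit the algebraic constraint underlying Lemma~\ref{l.modmeasure0}: since $\sum_{(Z',T')\in\markPsi'} Z'=I/(W+\gamma I)\le 1/\gamma$, any collection of $n$ simultaneously covering stations satisfies $\gamma\sum_{i=1}^{n} T_i'<1$ with $T_i'=T_i/(1+\gamma T_i)$, which is exactly the pole-capacity phenomenon emphasised in the introduction. When the thresholds are bounded away from zero this forces $\calN$ to be almost surely bounded and, crucially, makes $\calS_n=0$ for all $n$ beyond the pole-capacity bound; the expansions in~(\ref{e.ss1})--(\ref{e.ss3}) are then finite sums on each realization, the term-by-term interchange of $\E$ and $\sum_n$ is immediate, and Schuette--Nesbitt applies verbatim to the almost surely finite set of potentially covering stations. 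For the generating-function identity~(\ref{e.ss3}) with $z\in[0,1]$ one may alternatively invoke dominated convergence, since $|z-1|\le1$ and $\sum_{n\ge0}\calS_n=\E[2^{\calN}]<\infty$ under the same finiteness.
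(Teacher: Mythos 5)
Your proposal is correct and follows essentially the same route as the paper: the paper proves this lemma simply by citing the Schuette--Nesbitt formula (recalled in its Appendix~\ref{App1}) and the classical inclusion--exclusion identities from Feller, deferring the finiteness of the apparently infinite sums to Lemma~\ref{l.finitesums}. Your identification $\calS_n=\E\bigl[\binom{\calN}{n}\bigr]$ (correctly reading the symmetric sums as unordered selections, consistent with the $1/n!$ in Proposition~\ref{generalSn}), the binomial-theorem and binomial-inversion derivations, and the pathwise pole-capacity bound $\sum_{(Z',T')\in\markPsi'}Z'\le 1/\gamma$ are exactly the content behind those citations.
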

We will see,  via Lemma \ref{l.finitesums}, that the above (apparently infinite) summations reduce to finite
sums (owing to $\calS_n=0$ for large enough $n$) under very reasonable model conditions.

\begin{proposition}\label{generalSn}
For the network $\markPhi$, 
under the assumptions of
  Theorem~\ref{mainResult}, the $n$\,th symmetric sum is 
\begin{equation}\label{e.Snform1}
\calS_n = \frac{1}{n!} \int_{(\R_+)^n}\left( \prod\limits_{i=1}^n F_{T^*} (t_i) \right) M^{(n)}(dt_1,\dots,  dt_n) ,
\end{equation}
which is equivalent to
\begin{equation} \label{e.Snform2}
\calS_n = \frac{1}{n!} \int_{(\R_+)^n}M^{(n)}(t_1,\dots,  t_n) \prod\limits_{i=1}^n F_{T^*} (dt_i),
\end{equation}
where $M^{(n)}$ and $F_{T^*}$ are given by (\ref{e.MmodM}) and  (\ref{e.FT}) respectively. 
\end{proposition}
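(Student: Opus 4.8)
The plan is to reduce everything to the independently-marked equivalent network supplied by Corollary~\ref{c.phistar}, in which the threshold marks decouple from the geometry. First I would invoke Corollary~\ref{c.phistar} to replace $\markPhi$ by the stochastically equivalent $\markPhi^*$, whose induced marked SINR process $\markPsi^*=\{(Z^*,T^*)\}$ has the same law as $\markPsi$, whose ground process $\{Z^*\}$ carries the factorial moment measures $M^{(n)}$ of Corollary~\ref{c.zmom}, and---crucially---whose thresholds $T^*$ are i.i.d.\ with the law $F_{T^*}$ of~(\ref{e.FT}) and \emph{independent} of the point locations, hence of the $Z^*$ values. Since $\calS_n$ depends on $\markPsi$ only through its distribution, I may compute it for $\markPsi^*$ instead. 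This is the step that does the real work: in the original model $T_X$ is correlated with $(S_X,P_X)$ and therefore with $\SINR(X)$, whereas propagation invariance lets one absorb that correlation into the reweighted law $F_{T^*}$ while leaving the SINR process unchanged in distribution.

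Next I would exploit the independent marking to factor the $n$th factorial moment measure of $\markPsi^*$: because the marks are i.i.d.\ and independent of $\{Z^*\}$, this measure is the product $M^{(n)}(dz_1,\dots,dz_n)\prod_{i=1}^nF_{T^*}(d\tau_i)$. Now $\calS_n$ is the $n$th binomial moment $\E[\binom{\calN}{n}]$, i.e.\ it counts \emph{unordered} selections of $n$ stations (consistent with~(\ref{e.ss3})--(\ref{e.ss4})), while the factorial moment measure of~(\ref{defMomMeasureZ}) integrates symmetric functions over \emph{ordered} distinct $n$-tuples; since the integrand is symmetric, the defining property of the factorial moment measure gives
\begin{equation}
\calS_n=\frac{1}{n!}\int_{(\R_+)^{2n}}\prod_{i=1}^n\Ind(z_i>\tau_i)\,M^{(n)}(dz_1,\dots,dz_n)\prod_{i=1}^nF_{T^*}(d\tau_i).
\end{equation}
Integrating out each $\tau_i$ over $[0,z_i)$ replaces $\int\Ind(z_i>\tau_i)F_{T^*}(d\tau_i)$ by $F_{T^*}(z_i)$ (possible atoms of $F_{T^*}$ are harmless because $M^{(n)}$ admits the density of Theorem~\ref{mainResult}), which is exactly~(\ref{e.Snform1}).

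Finally I would obtain the second form~(\ref{e.Snform2}) from the first by multivariate integration by parts, using that $M^{(n)}(t_1,\dots,t_n)$ is the $M^{(n)}$-mass of the box $(t_1,\infty]\times\dots\times(t_n,\infty]$. Writing $M^{(n)}(t_1,\dots,t_n)=\int\Ind(s_1>t_1,\dots,s_n>t_n)\,M^{(n)}(ds_1,\dots,ds_n)$ and $F_{T^*}(t_i)=\int\Ind(u_i\le t_i)F_{T^*}(du_i)$ and applying Fubini one coordinate at a time interchanges the roles of $M^{(n)}(dt_i)$ and $F_{T^*}(dt_i)$, the boundary terms vanishing since $F_{T^*}(0)=0$ and $M^{(n)}$ decays at infinity; this yields~(\ref{e.Snform2}).

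The main obstacle is conceptual rather than computational: correctly justifying that the correlated threshold marks of $\markPhi$ can be replaced by marks independent of the SINR process without altering $\calS_n$. Once Corollary~\ref{c.phistar} is in hand this is clean, and it is the crux---everything downstream is the routine independent-marking factorization followed by a Fubini/integration-by-parts bookkeeping step. A secondary point deserving a line of care is the ordered-versus-unordered counting that produces the factor $1/n!$, together with the harmlessness of any atoms of $F_{T^*}$, both of which are settled by the absolute continuity of $M^{(n)}$ guaranteed by Theorem~\ref{mainResult}.
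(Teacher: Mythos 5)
Your proposal is correct and follows essentially the same route as the paper: reduce to the equivalent independently marked network $\markPhi^*$ via Corollary~\ref{c.phistar}, use independent marking to replace the threshold indicators by $\prod_i F_{T^*}(Z_i^*)$ (the paper does this by conditioning on $\Psi^*$ and applying Campbell's theorem, which is precisely the marked-measure factorization you invoke), and pass from~(\ref{e.Snform1}) to~(\ref{e.Snform2}) by coordinatewise integration by parts with the same vanishing boundary terms. Your extra care about the ordered-versus-unordered counting behind the $1/n!$ factor and about possible atoms of $F_{T^*}$ only makes explicit two points the paper's proof glosses over.
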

\begin{proof}
Corollary \ref{c.phistar} implies
\[
\calS_n :=\E\Bigl[\sum_{{(Z_i^*,T_i^*)\in (\markPsi^*)^{\times n} \atop\text{distinct}}}
\Ind\left(\,Z_i^*> T_i^*,  \, i=1,\dots,n \,\right)\Bigr].
\]
Denote by $\Psi^*=\{Z^*\}$ the process of points of  $\markPsi^*$. With this notation, using
  the fact that $\markPsi^*$ is independently marked,  we have
\begin{align}
\calS_n &=\E\Bigl[\sum_{{(Z_i^*)\in(\Psi^*)^{\times n}\atop\text{distinct}}}
\Prob\left(\,Z_i^*> T_i^*,  \, i=1,\dots,n \, |\Psi^* \right)\Bigr] \nonumber \\
&=\E\Bigl[\sum_{{(Z_i^*)\in(\Psi^*)^{\times n} \atop\text{distinct}}}
\prod_{i=1}^n F_{T^*}(Z_i^*) \Bigr] \nonumber  \\
&= \frac{1}{n!} \int_{(\R_+)^n}\left( \prod\limits_{i=1}^n F_{T^*} (z_i) \right) M^{(n)}(dz_1,\dots,  dz_n) \nonumber ,
\end{align}
where the last line follows from Campbell's theorem for marked point processes and the definition of factorial moment measures~\cite[Chapter 4]{SKM:1995}. For equation (\ref{e.Snform2}), by  Lemma \ref{l.modmeasure0} and Corollary \ref{c.zmom}, 
\begin{equation}
\lim\limits_{t_i\rightarrow\infty}M^{(n)}(t_1,\dots,  t_n)=0, \qquad i\in[1,n].
\end{equation}
Note that $M^{(n)}(t_1,\dots,  t_n)$ is a decreasing function in $t_i$, and
\begin{equation}
\lim\limits_{t_1\rightarrow0} F_{T^*} (t_i) =0 .
\end{equation}
Then apply integration by parts 
\begin{align}
 \int_{(\R_+)^n}M^{(n)} (t_1,\dots,  t_n) \prod\limits_{i=1}^n F_{T^*}(dt_i) 
=  \int_{(\R_+)^n}M^{(n)} (dt_1,\dots,  t_n)  F_{T^*}(t_1) \prod\limits_{i=2}^n F_{T^*}(dt_i) ,
\end{align}
and repeat until proof is completed.
\end{proof}

The next result shows that the expressions involving infinite sums of $\calS_n$ reduce to finite sums. For real $x$ denote by $\lceil x\rceil$  the ceiling of $x$ (the smallest integer not less than $x$).

\begin{lemma}\label{l.finitesums}
Assume there exists $\tmin>0$ such that $\Prob(T \ge \tmin)=1$. Then
$\calS_n=0$ for $n\ge 1/(\gamma \tmin)+1 $, which implies that one can
replace $\infty$ by $\lceil 1/(\gamma \tmin) \rceil$ in the sums in
expressions given in Lemma~\ref{l.SS-represent}, namely equations (\ref{e.ss1})--(\ref{e.ss3})
\end{lemma}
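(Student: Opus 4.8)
The plan is to exploit the fact, already recorded in Lemma~\ref{l.modmeasure0}, that the factorial moment measure $\modM^{(n)}$ of the STINR process vanishes on the region where $\gamma\sum_{i=1}^n t_i' \ge 1$. Since $\calS_n$ is an integral of $M^{(n)}$ (equivalently $\modM^{(n)}$) against the threshold distribution $F_{T^*}$, the strategy is to show that the lower bound $\tmin$ on the thresholds forces the integration domain to lie entirely inside the forbidden region once $n$ is large enough, so that the integrand is identically zero. First I would invoke Corollary~\ref{c.phistar} to pass to the equivalent network $\markPhi^*$, in which the threshold marks $T^*$ are independent with distribution $F_{T^*}$; crucially, the support lower bound $\Prob(T\ge\tmin)=1$ is inherited by $T^*$, so $F_{T^*}(t)=0$ for all $t<\tmin$.

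Next, starting from the representation (\ref{e.Snform2}),
\[
\calS_n = \frac{1}{n!} \int_{(\R_+)^n}M^{(n)}(t_1,\dots,t_n)\prod_{i=1}^n F_{T^*}(dt_i),
\]
I would observe that each measure $F_{T^*}(dt_i)$ is supported on $[\tmin,\infty)$, so the product measure is supported on $[\tmin,\infty)^n$ and the integral only ever samples points with $t_i\ge\tmin$ for every $i$. The idea is then to translate this constraint on the SINR thresholds $t_i$ into the STINR coordinates via the change of variables $t_i'=t_i/(1+\gamma t_i)$ from Corollary~\ref{c.zmom}, under which $M^{(n)}(t_1,\dots,t_n)=\modM^{(n)}(t_1',\dots,t_n')$. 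Since $t\mapsto t/(1+\gamma t)$ is increasing, $t_i\ge\tmin$ implies $t_i'\ge \tmin/(1+\gamma\tmin)=:t_{\min}'$.

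The heart of the argument is the counting step: I would show that whenever $n$ exceeds $1/(\gamma\tmin)$, every point in the integration domain lands in the zero-set of Lemma~\ref{l.modmeasure0}. Concretely, at any such point $\gamma\sum_{i=1}^n t_i' \ge n\gamma\, t_{\min}' = n\gamma\tmin/(1+\gamma\tmin)$, and this quantity is at least $1$ precisely when $n\gamma\tmin \ge 1+\gamma\tmin$, i.e. $n \ge 1/(\gamma\tmin)+1$. For such $n$ the hypothesis of Lemma~\ref{l.modmeasure0} is met at every point of the support, so $\modM^{(n)}\equiv 0$ there and hence $\calS_n=0$. The main obstacle, if any, is bookkeeping rather than conceptual: one must verify the direction of the inequality carefully through the monotone map $t\mapsto t/(1+\gamma t)$ and confirm that the threshold value at which $\calS_n$ first vanishes matches the stated $n\ge 1/(\gamma\tmin)+1$, so that replacing $\infty$ by $\lceil 1/(\gamma\tmin)\rceil$ in (\ref{e.ss1})--(\ref{e.ss3}) leaves every sum unchanged. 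Finally, since each $\calS_n$ in those expansions is nonnegative and all terms with $n\ge 1/(\gamma\tmin)+1$ drop out, the series truncate at $\lceil 1/(\gamma\tmin)\rceil$ as claimed.
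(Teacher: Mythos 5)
Your proposal is correct and follows essentially the same route as the paper: it passes to the equivalent network so that $T^*$ inherits the lower bound $\tmin$, uses the representation~(\ref{e.Snform2}) together with Lemma~\ref{l.modmeasure0}, and checks that $n\gamma\tmin/(1+\gamma\tmin)\ge 1$ exactly when $n\ge 1/(\gamma\tmin)+1$. The paper's proof is just a terser version of the same counting argument, so the only difference is your level of detail (the closing remark about nonnegativity of $\calS_n$ is unnecessary, since the terms vanish outright).
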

\begin{proof}
If $T \ge \tmin$ almost surely, then the same
  holds true for $T^*$. Consequently by~(\ref{e.Snform2}) and 
  Lemma~\ref{l.modmeasure0}  $\calS_n=0$ when
  $n\gamma\tmin/(1+\gamma\tmin)\ge 1 $. The largest $n$ for which the
  opposite, strict inequality holds is $\lceil 1/(\gamma \tmin) \rceil$.
\end{proof}

\subsection{Multi-tier network example}\label{s.multitier}
We illustrate our $k$-coverage  framework by examining an increasingly popular multi-tier cellular network model. Informally, we refer to a heterogeneous network as a multi-tier network  if the marks $T$ are set to constants that depend only on the tier to which the base station belongs. 

On $\R^2$ we represent $m$ tiers of base stations with $m$ independent homogeneous Poisson point processes $\{\Phi_j\}$ with  densities $\{\lambda_j\}$. More precisely, consider the independently marked point process
\begin{equation}
\markPhi_j:=\{( X_j,(S_{X_j},P_{X_j},\tau_{j}))\},
\end{equation}
where the SINR threshold mark has been set to a non-random $\tau_j$, and where the distribution of the random vector $(P_j,S_j)$ is equal to that of $(S_{X_j},P_{X_j})$, such that the base station power and propagation effects may depend on the tier, and
\begin{equation} \label{momcondj}
\E[(P_jS_j)^{2/\beta}]<\infty, \qquad j=1,\dots,m.
\end{equation}
Assume without loss of generality that $\tau_j\not=\tau_k$ for $j\not=k$.
Let $K$ and $\beta$ again be the path-loss parameters for the entire network, and denote
\begin{equation}
\lambda_j^*=\lambda_j\E[(P_j S_j)^{2/\beta}] .
\end{equation}
We call the superposition $\markPhi:=\cup_{j=1}^m\markPhi_j$ the $m$-tier network model. 

\begin{corollary}\label{c.singletier}
An equivalent (to $\markPhi$) single-tier network $\markPhi^*$ exists with the path-loss exponent $\beta$, path-loss constant $K=1$, assumes $P^*\equiv1$, $S^*\equiv1$, and a base station density given by
\begin{equation}\label{e.lambda*}
\lambda^*=\sum\limits_{j=1}^m\lambda_j^*,  
\end{equation}
and marks $T^*$ with distribution
\begin{equation}
\Prob (T^* =\tau_j)=\frac{\lambda_j^*}{\lambda^*}, \qquad j=1,\dots, m.
\end{equation}
\end{corollary}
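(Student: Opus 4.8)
The plan is to prove equivalence in the sense defined above: two models are stochastically equivalent precisely when they induce marked propagation processes $\markTheta$ of the same law, and since the marked SINR and STINR processes are deterministic functionals of $\markTheta$, it is enough to exhibit both models' propagation processes as independently marked inhomogeneous Poisson processes with a common intensity measure $\Lambda(s,t)$. This is the same strategy used for Corollary~\ref{c.phistar}, now applied to a superposition of tiers.

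First I would compute the propagation process of the $m$-tier model. The tiers $\markPhi_1,\dots,\markPhi_m$ are independent, so the propagation processes $\markTheta_1,\dots,\markTheta_m$ they induce are independent, and by the superposition theorem for Poisson processes~\cite{SKM:1995} the union $\markTheta=\cup_{j=1}^m\markTheta_j$ is again a marked Poisson process whose intensity is the sum of the tier intensities. Applying Lemma~\ref{l.markedinvariance} to each tier and using that $\tau_j$ is deterministic (so $\E[(P_jS_j)^{2/\beta}\Ind(\tau_j\le t)]=\E[(P_jS_j)^{2/\beta}]\Ind(\tau_j\le t)=\lambda_j^*\lambda_j^{-1}\Ind(\tau_j\le t)$) gives
\[
\Lambda(s,t)=\sum_{j=1}^m\Lambda_j(s,t)=\frac{\pi s^{2/\beta}}{K^2}\sum_{j=1}^m\lambda_j^*\,\Ind(\tau_j\le t).
\]
On the single-tier side, Lemma~\ref{l.markedinvariance} applied to $\markPhi^*$ (with $S^*\equiv P^*\equiv1$, density $\lambda^*$, marks $T^*$, and $K=1$) yields $\Lambda^*(s,t)=\lambda^*\pi s^{2/\beta}\,\Prob(T^*\le t)$. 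Comparing the two intensity measures, the $s$- and $t$-dependence agree term by term; equating them forces the jump of $F_{T^*}$ at $\tau_j$ to be proportional to $\lambda_j^*$, hence $\Prob(T^*=\tau_j)=\lambda_j^*/\lambda^*$ with $\lambda^*=\sum_j\lambda_j^*$, while the remaining scalar prefactor merely fixes the propagation constant $a$ and is matched by the choice $K=1$.

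The one step that is not purely mechanical, and which I expect to be the crux, is verifying that the superposed process is \emph{independently} marked with a location-free mark law. I would argue this from the common exponent $\beta$: each tier's ground intensity (obtained from $\Lambda_j(s,\infty)$) is proportional to the same power $s^{2/\beta}$, so at every propagation-loss value $s$ the Radon--Nikodym weight of tier $j$ in the pooled ground intensity is the constant $\lambda_j^*/\sum_k\lambda_k^*$, independent of $s$. Hence, conditionally on the points, the pooled threshold marks are i.i.d.\ with the mixture law $\sum_j(\lambda_j^*/\lambda^*)\delta_{\tau_j}$, which is exactly the claimed $T^*$; this is where the shared $\beta$ across tiers is essential, since unequal path-loss exponents would make the mixing weights depend on $s$ and no independently marked single-tier representation would exist.
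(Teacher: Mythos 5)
Your proof is correct and follows essentially the same route as the paper's: the paper applies Corollary~\ref{c.phistar} tier by tier and then invokes the Poisson superposition theorem, which is exactly your combination of Lemma~\ref{l.markedinvariance} per tier with superposition of the resulting propagation intensities and matching against the single-tier intensity. Your closing verification that the pooled process is still \emph{independently} marked with a location-free mark law (using the common exponent $\beta$) makes explicit a point the paper leaves implicit, but it does not change the underlying argument.
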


\begin{proof}
For each $j\,$ tier $\markPhi_j$, there is an equivalent tier $\markPhi^*_j$ with $P_j=1$, $S_j=1$, and density $\lambda^*_i$. Using the superposition theorem~\cite{KINGMAN:1993}, the result follows from Corollary \ref{c.phistar}. 
\end{proof}

\begin{remark}
Note that a ``randomly'' selected base station of $\Phi$ belongs to the $j\,$th tier with probability
$\lambda_j/\lambda$, where $\lambda=\sum_{j=1}^m\lambda_j$, while a  ``randomly'' selected signal from the propagation process $\{Z^*\}$ originates from a base station in the $j\,$th tier with probability $\lambda_j^*/\lambda^*$.
\end{remark}

The following result gives an explicit $\calS_n$ expression for the multi-tier model with arbitrary propagation effects in the domain of all $t_i>0 $, and thus, in conjunction with  Lemma~\ref{l.SS-represent}, allows one to calculate the $k$-coverage probabilities, which generalizes previous results~\cite{MADBROWN2011,DHILLON2012,kcovsingle}. 
\begin{corollary}\label{cor.symmetric}
Assume propagation moment condition (\ref{momcond}) for all $m$ tiers of a mult-tier network. 
Then 
\begin{equation}
\calS_n=\frac{1}{n!}\left[  \sum_{\substack{j\in\{1,\dots,n\}\\i_j\in \{1,\dots,m\}}}\mkern-25mu  M^{(n)}\left(\tau_{i_1},\ldots,\tau_{i_n} \right)   \prod_{j=1}^n \lambda^*_{i_j}/\lambda^*  \right].
\end{equation}
\end{corollary}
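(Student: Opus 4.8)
The plan is to specialize the general symmetric-sum formula of Proposition~\ref{generalSn} to the multi-tier setting, where the threshold mark $T^*$ is discrete. First I would invoke Corollary~\ref{c.singletier}, which tells me that the $m$-tier model $\markPhi$ is stochastically equivalent to a single-tier model $\markPhi^*$ with unit path-loss constant, unit propagation marks, density $\lambda^*=\sum_{j=1}^m\lambda_j^*$, and a \emph{discrete} threshold mark $T^*$ taking value $\tau_j$ with probability $\lambda_j^*/\lambda^*$. Because this is a genuine stochastic equivalence, the factorial moment measure $M^{(n)}$ of the SINR process is unchanged, and I may compute $\calS_n$ for $\markPhi^*$ using Proposition~\ref{generalSn}.

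Next I would start from the form~(\ref{e.Snform2}),
\[
\calS_n = \frac{1}{n!} \int_{(\R_+)^n}M^{(n)}(t_1,\dots,  t_n) \prod\limits_{i=1}^n F_{T^*} (dt_i),
\]
which is the convenient form here since the integrator $F_{T^*}$ is now a purely atomic distribution. The key step is simply to observe that $F_{T^*}(dt_i)$ is the counting measure assigning mass $\lambda_j^*/\lambda^*$ to each atom $t_i=\tau_j$, $j=1,\dots,m$. Substituting this atomic measure turns each of the $n$ one-dimensional integrals into a finite sum over tier indices, so the $n$-fold product of integrators becomes the product measure concentrated on the grid $\{\tau_1,\dots,\tau_m\}^n$. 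Evaluating $M^{(n)}$ at the grid points and collecting the atom weights then yields exactly
\[
\calS_n=\frac{1}{n!}\left[\sum_{\substack{j\in\{1,\dots,n\}\\ i_j\in\{1,\dots,m\}}} M^{(n)}\left(\tau_{i_1},\ldots,\tau_{i_n}\right)\prod_{j=1}^n \lambda^*_{i_j}/\lambda^*\right],
\]
where the outer sum ranges over all choices of tier indices $(i_1,\dots,i_n)\in\{1,\dots,m\}^n$ for the $n$ coordinates.

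This argument is essentially a bookkeeping reduction rather than a place where a serious obstacle arises; the substantive analytic content has already been absorbed into Proposition~\ref{generalSn} and Corollary~\ref{c.singletier}. The one point deserving care is the justification that the atomic integrator may be substituted directly into~(\ref{e.Snform2})---that is, that no boundary contributions are lost when $F_{T^*}$ has jumps. This is handled by the decay and monotonicity of $M^{(n)}$ together with $\lim_{t_i\to\infty}M^{(n)}=0$ already established in the proof of Proposition~\ref{generalSn}, so the Stieltjes integral against the discrete measure is well defined and equals the stated finite sum. I would therefore present the corollary as an immediate specialization, noting only that the indices $i_1,\dots,i_n$ need not be distinct (a given tier may serve several of the chosen base stations) and that by the moment condition~(\ref{momcond}) imposed on each tier the values $M^{(n)}(\tau_{i_1},\dots,\tau_{i_n})$ are given explicitly by Theorem~\ref{mainResult} via Corollary~\ref{c.zmom}.
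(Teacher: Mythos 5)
Your proposal is correct and follows exactly the paper's own route: the paper proves this corollary in one line by combining Corollary~\ref{c.singletier} (the equivalent single-tier network with atomic mark distribution $\Prob(T^*=\tau_j)=\lambda_j^*/\lambda^*$) with expression~(\ref{e.Snform2}), which is precisely your argument spelled out with the bookkeeping made explicit. Your added care about substituting the purely atomic integrator into the Stieltjes form is sound (and harmless, since $M^{(n)}$ is absolutely continuous), but no further justification is given or needed in the paper.
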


\begin{proof} 
This stems directly from  Corollary \ref{c.singletier} amd expression (\ref{e.Snform2}). 
\end{proof}

\subsection{Single-tier network example}\label{s.singletier}
For the special case of a single-tier stationary Poisson network model $\Phi=\{X\}$ with density $\lambda$ and a (deterministic) SINR threshold $\tau$  the $k$-coverage probability expressions derived in~\cite{kcovsingle} follow. The second statement (case with $\gamma\tau\ge1$)
  is a special case of~\cite[Theorem~1]{DHILLON2012}.
\begin{corollary}\label{singletier}
Assume the moment condition $\E[(PS)^{2/\beta}]<\infty$ for a single-tier network.
Then 
\begin{equation}
\calS_n=\calS_n(\tau)=\tau_n^{-2n/\beta}  \calI_{n,\beta}((W/\gamma)a^{-\beta/2})  \calJ_{n,\beta}(\tau_n),
\end{equation}
for $0<\gamma\tau<1/(n-1)$  and $\calS_n=0$ otherwise, 
where $a$ is  given by~(\ref{e.a}) and $\tau_n$ is given by
\[
\tau_n:=\tau_n(\tau)=\frac{\gamma\tau}{1-(n-1)\gamma\tau},
\] 
Moreover, the $k$-coverage probability is
\begin{align*}
\calP^{(k)}=\calP^{(k)}(\tau)
=\sum_{n=k}^{\lceil 1/(\gamma\tau)\rceil}  
\scriptstyle{(-1)^{n-k}{n-1\choose  k-1}}
\tau_n^{-2n/\beta}  \calI_{n,\beta}((W/\gamma)a^{-\beta/2})
\calJ_{n,\beta}(\tau_n)\,,
\end{align*}
For $\gamma\tau\ge 1$ we have $\lceil 1/(\gamma\tau)\rceil=1$ and 
\begin{equation}\label{e.SINRexplici}
\calP^{(1)}(\tau)=\frac{2(\gamma\tau)^{-2/\beta}}{\Gamma(1+\frac{2}{\beta})}
\int_0^\infty \!\!\!\!\!\!ue^{-u^2\Gamma(1-2/\beta)- (W/\gamma)a^{-\beta/2} u^\beta}du\,.
\end{equation}
\end{corollary}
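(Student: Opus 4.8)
The plan is to recognize that the single-tier model is the special case of the heterogeneous framework in which the threshold mark is deterministic, so that the equivalent threshold law $F_{T^*}$ of Corollary~\ref{c.phistar} degenerates to a unit mass at $\tau$: indeed $F_{T^*}(t)=\E[(PS)^{2/\beta}\Ind(\tau\le t)]/\E[(PS)^{2/\beta}]=\Ind(t\ge\tau)$ by~(\ref{e.FT}). Substituting this point mass into the representation~(\ref{e.Snform2}) of Proposition~\ref{generalSn} collapses the $n$-fold integral to a single evaluation of the SINR moment measure on the diagonal, giving $\calS_n=\tfrac{1}{n!}M^{(n)}(\tau,\dots,\tau)$.

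First I would evaluate this diagonal moment measure by passing to the STINR process through Corollary~\ref{c.zmom}, which sets every coordinate to $t_i'=\tau/(1+\gamma\tau)$, and then invoking Theorem~\ref{mainResult}. A short computation with the definition~(\ref{hattn}) shows that on this diagonal $\hatt_i$ is independent of $i$ and equals $\gamma\tau/(1-(n-1)\gamma\tau)=\tau_n$; the symmetry reduction $\calJ_{n,\beta}(x,\dots,x)=\calJ_{n,\beta}(x)$ noted after~(\ref{Jn}) then turns the $n$-variable kernel into the single-variable one, and the prefactor $n!$ cancels the $1/n!$ from the symmetric sum. This yields the claimed closed form for $\calS_n$. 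The admissibility window $0<\gamma\tau<1/(n-1)$ is precisely the image under $t\mapsto t/(1+\gamma t)$ of the simplex condition~(\ref{e.tplimits}), $n\gamma t_i'<1$: substituting $t_i'=\tau/(1+\gamma\tau)$ rearranges to $(n-1)\gamma\tau<1$, and outside this range Theorem~\ref{mainResult} forces $\modM^{(n)}=0$, hence $\calS_n=0$.

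The $k$-coverage expression then follows by inserting the $\calS_n$ formula into the Schuette--Nesbitt identity~(\ref{e.ss1}) of Lemma~\ref{l.SS-represent}. Since the single-tier thresholds satisfy $T\equiv\tau$, Lemma~\ref{l.finitesums} applies with $\tmin=\tau$ and truncates the series at $\lceil1/(\gamma\tau)\rceil$, consistent with $\calS_n=0$ once $(n-1)\gamma\tau\ge1$. For the final explicit line I would specialize to $\gamma\tau\ge1$, where the ceiling equals $1$ and only the $n=1$ term survives, so $\calP^{(1)}=\calS_1=\tau_1^{-2/\beta}\calI_{1,\beta}((W/\gamma)a^{-\beta/2})\calJ_{1,\beta}(\tau_1)$ with $\tau_1=\gamma\tau$. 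Reading~(\ref{Jn})--(\ref{e.eta-v}) at $n=1$ gives the empty-integral value $\calJ_{1,\beta}(x)=(1+x)/(x+1)=1$, while~(\ref{In}) gives $\calI_{1,\beta}(x)=2\int_0^\infty u\,e^{-u^2-u^\beta x\Gamma(1-2/\beta)^{-\beta/2}}\,du/C'(\beta)$.

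The main obstacle---really the only non-routine step---is reconciling this last integral with the stated form, whose Gaussian weight carries an explicit $\Gamma(1-2/\beta)$ in the exponent. I would perform the rescaling $u=v\sqrt{\Gamma(1-2/\beta)}$, under which $u^\beta\Gamma(1-2/\beta)^{-\beta/2}=v^\beta$ and $u\,du=\Gamma(1-2/\beta)\,v\,dv$; the factor $\Gamma(1-2/\beta)$ so produced cancels against the same factor inside $C'(\beta)=\Gamma(1-2/\beta)\Gamma(1+2/\beta)$, leaving exactly $2(\gamma\tau)^{-2/\beta}\Gamma(1+2/\beta)^{-1}\int_0^\infty v\,e^{-v^2\Gamma(1-2/\beta)-(W/\gamma)a^{-\beta/2}v^\beta}\,dv$, as required.
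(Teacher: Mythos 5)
Your proof is correct and follows exactly the route the paper intends: the paper states Corollary~\ref{singletier} without a written proof precisely because it is the specialization of Proposition~\ref{generalSn} (with $F_{T^*}$ a point mass at $\tau$), Theorem~\ref{mainResult}, Corollary~\ref{c.zmom}, Lemma~\ref{l.SS-represent} and Lemma~\ref{l.finitesums} that you carry out, including the correct diagonal computation $\hatt_i=\gamma\tau/(1-(n-1)\gamma\tau)=\tau_n$ and the domain condition. Your final rescaling $u=v\sqrt{\Gamma(1-2/\beta)}$ correctly reconciles $\calI_{1,\beta}$ with the explicit form~(\ref{e.SINRexplici}), a detail the paper leaves unstated.
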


\begin{remark}
If we set $W=0$ (to consider an ``interference limited network''),
then $\{Z\}$ becomes the SIR process, which is clearly scale-invariant
due to  the definition of the regular and STIR process in
terms of the propagation process (also evident by the factorial moment
measures being independent of $\lambda$). Under the stationary
single-tier Poisson model with $\gamma=1$, the inverse of the largest SIR value, known as the interference factor $f$, has been studied~\cite{hextopoi}  by deriving its Laplace transform
\begin{align}
\Lap_f(\xi):&=\E [e^{-\xi f}]=\frac{1}{\varphi_{\beta}(\xi)},
\end{align}
where
\[
\varphi_{\beta}(\xi):=e^{-\xi} + \xi^{2/\beta}\gamma(1-2/\beta,\xi),
\]
and $\gamma(z,\xi)=\int_0^{\xi}t^{z-1}e^{-t}dt$ is the lower incomplete gamma function. The distribution of $f$ is directly related to $\calP$  (in 
the interference limited network) by $\Prob(f\leq s)=\calP(1/s)$, which gives the Laplace transform relation
\begin{align}
\xi\int_0^{\infty} \calP(1/s) e^{-\xi s} ds&= \Lap_f(\xi)\nonumber =\frac{1}{\varphi_{\beta}(\xi)}.
\end{align}
\end{remark}

\section{Finite-dimensional distributions of
the  SINR process}
In this section we show how the factorial moment measures of the SINR process can be used to derive joint densities of the $k$ strongest 
values of the SINR process. This allows one to express, for example, coverage probabilities 
under interference cancellation and/or base station cooperation.

\subsection{Joint densities of order statistics}\label{jointorder}
Denote the order statistics of the STINR process $\Psi'=\{Z'\}$ by 
\[
Z'_{(1)}> Z'_{(2)}> Z'_{(3)}\dots,
\]
such that $Z'_{(1)}$ is the largest STINR value in $\Psi'$. In complete analogy one can consider order statistics
$Z_{(1)}> Z_{(2)}> Z_{(3)}\dots$ of
the SINR process   $\Psi=\{Z\}$.  However, as we shall  see, 
$\Psi'$ is a more apt tool when  studying interference cancellation and base station cooperation.

\begin{remark}\label{r.orderstat}
By its definition~(\ref{Pck-def}), $\calP^{(k)}(\tau)$ evaluated in Corollary~\ref{singletier}, as a function of $\tau$, 
is the tail-distribution of the $k\,$th order statistics of the SINR process, $\calP^{(k)}(\tau)=\Prob\{Z_{(k)}>\tau\}$.
By the fact that the mappings (\ref{Zrelation1}) and
(\ref{Zrelation2}) are monotonic, increasing,
$\calP^{(k)}(\tau'/(1-\gamma\tau'))=\Prob\{Z'_{(k)}>\tau'\}$.
Moreover, given that the functions $x/A$ and $x/(A-x)=A/(A-x)-1$ are increasing in $x$, $Z_{(i)}$ (or $Z'_{(i)}$) represent the value of the 
SINR (or STINR) experienced by the typical user with respect to the base station offering the 
$k\,$th smallest propagation loss $Y^{(k)}$, where $Y^{(1)}<Y^{(2)}<\ldots$ is the process of order statistics of $\{Y\}$.
\end{remark}

We will express now the joint probability density $f'_{(k)}(z'_1,\dots,z'_k)$ 
of the vector 
$(Z'_{(1)},\dots,Z'_{(k)})$ of $k$ ($k\ge1$) largest order
statistics of the STINR process. 

For $i\ge 1$, write 
\begin{align}
\mu_k'^{(k+i)}(z'_1,\dots,z'_k) \label{e.mu'ki} 
= \int_{z'_k}^{1/\gamma} \hspace{-1em}\dots\int_{z'_k}^{1/\gamma} \mu'^{(k+i)}(z'_1,\dots,z'_k, \zeta'_1,\dots,\zeta'_i) \,d\zeta'_i\dots  d\zeta'_1,
\end{align}
where $\mu'^{(n)}(t'_1,\dots,t'_n)=(-1)^n
\frac{\partial^nM'^{(n)}}{\partial t'_1\ldots\partial
  t'_n}(t'_1,\ldots,t'_n)$
is the density of the factorial moment measure $M'^{(n)}$ of the
STINR process  given in~Corollary~\ref{newmu_n}.
It is easy to see that~(\ref{e.mu'ki}) can be
evaluated more directly differentiating  $k$ times  $M'^{(n)}$ 
\begin{align}\label{e.mu'ki-der}
  \mu_k'^{(k+i)}(z'_1,\ldots,z'_k)
  = (-1)^k \frac{\partial^k
 M'^{(k+i)}(t'_1,\ldots,t'_k,z'_k,\ldots,z'_k)}{\partial t'_1\dots\partial t'_k}
(t'_1=z'_1,\ldots,t'_k=z'_k).
\end{align}
For convenience we define also $\mu_k'^{(k+i)}$ for $i=0$ as the
  density of $M'^{(k)}$; $\mu_k'^{(k)}:=\mu'^{(k)}$.
We will call 
$\mu_k'^{(n)}$ ($k\le n$) ``partial densities'' of $M'^{(n)}$. In
Appendix~\ref{densityMn} and~\ref{matrixformMn} we will show how these
partial densities can be explicitly calculated.

The equivalent notations apply to the SINR process $\Psi=\{Z\}$,
which we skip without loss of completeness.

\begin{proposition}\label{p.f'k}
Under the assumptions of Theorem~\ref{mainResult},
the joint probability density of the vector of $k$ strongest values 
of the STINR process $(Z'_{(1)},\dots,Z'_{(k)})$ is equal to
\begin{equation}\label{e.f'k}
f'_{(k)}(z'_1,\dots,z'_k)=\sum_{i=0}^{i_{\max}} \frac{(-1)^i}{i!}  \mu_k'^{(k+i)}(z'_1,\dots,z'_k),
\end{equation} 
for $z'_{1}>z'_{2}>\dots >z'_{k}$ and 
$f'_{(k)}(z'_1,\dots,z'_k)=0$ otherwise, 
where the upper summation limit  is bounded by
\begin{equation}\label{e.f'imax}
i_{\max}< \frac{1}{\gamma z'_k}-k.
\end{equation} 
\end{proposition}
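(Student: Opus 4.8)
The plan is to recognise $f'_{(k)}$ as a \emph{local Janossy density} of the STINR process and then to invert the classical relation between Janossy densities and factorial moment (product) densities, which amounts to an inclusion--exclusion over the points lying above $z'_k$. The truncation to a finite sum will then follow immediately from Lemma~\ref{l.modmeasure0}.

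First I would fix $z'_1>\dots>z'_k$ and set $A:=(z'_k,1/\gamma]$. The event that the $k$ largest values of $\Psi'$ sit at $z'_1,\dots,z'_k$ is precisely the event that $\Psi'$ has points at $z'_1,\dots,z'_k$ and no further point in $A$, the configuration below $z'_k$ being unconstrained. Hence $f'_{(k)}(z'_1,\dots,z'_k)$ equals the Janossy density of order $k$ of $\Psi'$ restricted to the region $[z'_k,1/\gamma]$, evaluated at $(z'_1,\dots,z'_k)$. The delicate point is that the void region $A$ itself depends on the lowest argument $z'_k$; since we are evaluating a density pointwise this is legitimate, and the boundary point $z'_k$ carries no Lebesgue mass and may be ignored.

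Second, I would invoke the standard inversion between product densities and local Janossy densities (see, e.g., \cite{daleyPPI2003,SKM:1995}): for a simple point process and a Borel set $A$,
\[
j_k(x_1,\dots,x_k;A)=\sum_{i\ge 0}\frac{(-1)^i}{i!}\int_{A^i}\mu'^{(k+i)}(x_1,\dots,x_k,\zeta_1,\dots,\zeta_i)\,d\zeta_1\dots d\zeta_i,
\]
the retained arguments being $x_1,\dots,x_k$ and the integrated $\zeta_j$ ranging over the void region $A$. Taking $x_j=z'_j$ and $A=(z'_k,1/\gamma]$ and comparing with the definition (\ref{e.mu'ki}) of $\mu_k'^{(k+i)}$ identifies the $i$-th summand with $\tfrac{(-1)^i}{i!}\,\mu_k'^{(k+i)}(z'_1,\dots,z'_k)$, which is exactly (\ref{e.f'k}). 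The factor $1/i!$ comes from unordering the $i$ extra points and the alternating sign from the inclusion--exclusion over ``at least $i$ extra points in $A$''.

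Finally, I would truncate the sum using Lemma~\ref{l.modmeasure0}. The integrand $\mu'^{(k+i)}(z'_1,\dots,z'_k,\zeta_1,\dots,\zeta_i)$ is the density of $M'^{(k+i)}$, which vanishes once $\gamma$ times the sum of its arguments reaches $1$. On the integration region every argument strictly exceeds $z'_k$ (the $z'_j$ because $z'_j\ge z'_k$, the $\zeta_j$ because $\zeta_j\in A$), so that sum exceeds $(k+i)z'_k$; hence the integrand is identically zero as soon as $\gamma(k+i)z'_k\ge 1$. Only terms with $i<1/(\gamma z'_k)-k$ survive, which gives the bound (\ref{e.f'imax}), and this finiteness also removes any convergence concern in the inversion identity, so the expansion is exact. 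The step I expect to be the main obstacle is the first one: rigorously justifying that $f'_{(k)}$ is the local Janossy density on the $z'_k$-dependent region $[z'_k,1/\gamma]$ and that the void may be imposed only above $z'_k$ while the configuration below is left free; once this localisation is in place, the inversion and the truncation via Lemma~\ref{l.modmeasure0} are routine.
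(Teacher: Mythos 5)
Your proposal is correct and takes essentially the same route as the paper: the paper's proof simply invokes Handa's Lemma 5.3, which is exactly the local Janossy-density/factorial-moment-density inversion you spell out, and then truncates the series via Lemma~\ref{l.modmeasure0} precisely as you do, noting (as you do) that this finiteness makes the convergence condition of the inversion trivial. The only difference is that you rederive the inversion from the Daley--Vere-Jones relation where the paper cites it, so no gap.
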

\begin{proof}
The result is effectively that of Handa~\cite[Lemma 5.3]{handa2009two}
who details how it follows from a relationship between finite point
processes and their Janossy measures (see~\cite[Section
5.4]{daleyPPI2003}). 
The original result has the infinite series in~(\ref{e.f'k}).
However, by Lemma~\ref{l.modmeasure0}
$M'^{(i+k)}(t'_1,...,t'_k,z'_k,...,z'_k)=0$ when  $t'_1+\ldots+
t'_k+iz_k\ge 1/\gamma$ and consequently, by~(\ref{e.mu'ki-der}),
$\mu_k^{(k+i)}(z'_1,\dots,z'_k)=0$ whenever
$z'_1+...+z'_{k-1}+(i+1)z'_k\ge 1/\gamma$. Hence, a necessary condition for  
$\mu_k^{(k+i)}(z'_1,\dots,z'_k)>0$ is $z'_1+\dots (i+1)z'_k< 1/\gamma$
which, since $z'_1>\dots>z'_k$, implies $(k+i)z'_k< 1/\gamma$. This
proves~(\ref{e.f'imax}).
The original result of Handa has also a  convergence condition
$$\sum_{n=0}^\infty \frac{M'^{(n)}(c,\ldots,c)}{n!}(1+\varepsilon)^n<\infty$$
for each $c\in(0,1/\gamma)$ and some $\varepsilon=\varepsilon(c)$,
which is trivially satisfied in our case by the previous observation
regarding $i_{\max}$.
\end{proof}
The bound on $i_{\max}$ 
given in~(\ref{e.f'imax}) is only an upper bound of  the
maximum index of non-zero terms of the expansion
of $f'_{(k)}(t'_1,\ldots,t'_k)$ in terms of the partial densities of the factorial moment measures.
For some  particular domains of   $(t'_1,\ldots,t'_k)$, a smaller
value of  $i_{\max}$ can be given; cf the proof 
of Proposition~\ref{p.Haenggi-residual} 
in the next section.

Using Proposition~\ref{p.f'k} and Corollary~\ref{newmu_n}
we  obtain the following, more explicit form of the joint distribution of the two strongest values of the
STINR process.
\begin{corollary}\label{c.f'k}
Under the assumptions of Theorem~\ref{mainResult}, we have 
\begin{align*}
&f'_{(k)}(z'_1,\dots,z'_k)\\
=& \Bigl( \prod\limits_{j=1}^{k}(\gamma z_j')^{-(2/\beta+1)} \Bigr)
 \sum_{i=0}^{\lceil 1/\gamma z'_k+k-1\rceil} \frac{(-1)^i}{i!}c_{{k+i},2/\beta,0}\,\bar{ \calI}_{k+i,\beta}((W/\gamma)a^{-\beta/2})
\gamma^{k+i}\\
&\times\int_{z'_k}^{1/\gamma} \hspace{-1em}\dots\int_{z'_k}^{1/\gamma} 
 \Bigl(\prod\limits_{j=k+1}^{k+i}(\gamma \zeta_j')^{-(2/\beta+1)} \Bigr)
\Bigl(1- \sum\limits_{j=1}^{k}(\gamma z'_j)-\sum\limits_{j=k+1}^{k+i}(\gamma \zeta'_j) 
\Bigr)^{2(k+i)/\beta-1}
\Ind\Bigl(\sum\limits_{j=1}^{k}(\gamma z'_j)-\sum\limits_{j=k+1}^{k+i}(\gamma \zeta'_j) 
<1\Bigr) \,d\zeta'_1\dots  d\zeta'_i,
\end{align*}
for $z'_{1}>z'_{2}>\dots >z'_{k}$ and 
$f'_{(k)}(z'_1,\dots,z'_2)=0$ otherwise.
\end{corollary}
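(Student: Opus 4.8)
The plan is to obtain the stated closed form as a direct, fully explicit evaluation of the finite expansion in Proposition~\ref{p.f'k}. I would begin from $f'_{(k)}(z'_1,\dots,z'_k)=\sum_{i=0}^{i_{\max}}\frac{(-1)^i}{i!}\,\mu_k'^{(k+i)}(z'_1,\dots,z'_k)$ and insert the defining integral~(\ref{e.mu'ki}) of each partial density, writing $\mu_k'^{(k+i)}$ as the $i$-fold integral of $\mu'^{(k+i)}(z'_1,\dots,z'_k,\zeta'_1,\dots,\zeta'_i)$ over $\zeta'_1,\dots,\zeta'_i\in(z'_k,1/\gamma]$. The remaining input is the explicit density of Corollary~\ref{newmu_n}, read with $n=k+i$ at the arguments $(z'_1,\dots,z'_k,\zeta'_1,\dots,\zeta'_i)$; substituting it converts each summand into one of the integrals appearing in the statement.

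The bulk of the argument is then organizing the substitution, i.e. separating the $k$ fixed arguments from the $i$ variables of integration. In the product $\prod_{\ell=1}^{k+i}(\gamma t'_\ell)^{-(2/\beta+1)}$ the $k$ factors carrying $z'_1,\dots,z'_k$ are constant in the $\zeta'$ and factor out as the prefactor $\prod_{j=1}^{k}(\gamma z'_j)^{-(2/\beta+1)}$, while the remaining $i$ factors stay inside as $\prod_{j=k+1}^{k+i}(\gamma\zeta'_j)^{-(2/\beta+1)}$. Likewise the index-dependent constants $c_{k+i,2/\beta,0}$, $\bar{\calI}_{k+i,\beta}((W/\gamma)a^{-\beta/2})$ and the power $\gamma^{k+i}$ do not depend on the integration variables and are pulled out in front of the integral, while the sign and factorial $(-1)^i/i!$ come verbatim from the expansion. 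Finally, setting $n=k+i$ turns the density's last factor $(1-\sum_{\ell=1}^{k+i}\gamma t'_\ell)^{2n/\beta-1}$ into $(1-\sum_{j=1}^{k}\gamma z'_j-\sum_{j=k+1}^{k+i}\gamma\zeta'_j)^{2(k+i)/\beta-1}$, exactly the residual integrand in the statement.

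The one place needing genuine care — and where I expect whatever (modest) difficulty there is to lie — is matching the support restriction and the summation limit. Corollary~\ref{newmu_n} stipulates that $\mu'^{(k+i)}$ vanishes off the simplex $\gamma\sum_{\ell}t'_\ell<1$; this is precisely the content of the indicator $\Ind(\gamma\sum_{j=1}^{k}z'_j+\gamma\sum_{j=k+1}^{k+i}\zeta'_j<1)$, so it is legitimate to integrate each $\zeta'_j$ over the full range $(z'_k,1/\gamma]$ provided this indicator is carried inside. It then remains to see that raising the outer sum to the stated ceiling is harmless: the proof of Proposition~\ref{p.f'k} already shows $\mu_k'^{(k+i)}(z'_1,\dots,z'_k)=0$ once $z'_1+\dots+z'_{k-1}+(i+1)z'_k\ge1/\gamma$, hence (using $z'_1>\dots>z'_k$) once $(k+i)z'_k\ge1/\gamma$, i.e. for $i\ge 1/(\gamma z'_k)-k$. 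Since the upper limit quoted in the statement is no smaller than this threshold, every index beyond it contributes an integrand that is identically zero by the indicator, so extending the sum changes nothing. Collecting the extracted prefactor, the constants, the signs and the residual integrand reproduces the displayed formula, and the restriction to $z'_1>\dots>z'_k$ (with value $0$ otherwise) is inherited directly from Proposition~\ref{p.f'k}.
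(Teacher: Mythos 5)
Your proposal is correct and is exactly the paper's (implicit) proof: the paper derives Corollary~\ref{c.f'k} precisely by substituting the explicit density of Corollary~\ref{newmu_n} into the finite expansion of Proposition~\ref{p.f'k}, with the same bookkeeping of prefactors, support indicator, and the harmless enlargement of the summation limit that you spell out. Your reading of the indicator as $\Ind\bigl(\sum_{j=1}^{k}\gamma z'_j+\sum_{j=k+1}^{k+i}\gamma\zeta'_j<1\bigr)$ (with a plus sign) is the correct one, the minus sign in the paper's display being a typo.
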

As a simple example of the application of Proposition~\ref{p.f'k} or Corollary \ref{c.f'k},
note that for any $\tau>0$ and $\tau'=\tau/(1+\gamma \tau)$
\begin{equation}
\int_{\tau'}^{1/\gamma} \hspace{-1em}\dots\int_{\tau'}^{1/\gamma}f'_{(k)}(z'_1,\dots,z'_k)\,dz'_1,\dots dz'_k=
\calP^{(k)}(\tau)
\end{equation}
is equal to the $k$-coverage probability evaluated  in Corollary \ref{singletier}.
More developed examples of such applications are studied in
the subsequent sections. 
\begin{remark}\label{rem.Dickman} Another way for calculating the joint density of the order statistics
of the STIR process, using its relations to the two-parameter family of Poisson-Dirichlet processes
 and a generalized version of the  Dickman function,  is offered by the result~\cite[Theorem 5.4]{handa2009two}; cf~\cite{sinrPD}
for more details.
\end{remark}

\subsection{SINR with 
interference cancellation and signal combination}
\label{ss.ICSC}
We will now demonstrate how to use the order statistics of the  STINR
process to express SINR values accounting for general interference
cancellation (or management) and signal combination techniques. 

Consider a set of $k\ge 1$ strongest signals received by the typical user
$(Y^{(i)})^{-1}$  (recall that $Y^{(1)}<Y^{(2)}<\ldots$ are order
statistics of the propagation process $\{Y\}$).  Consider a subset
$\calU\subset[k]:=\{1,\ldots,k\}$ of this set.
We define SINR  under {\em 
Interference Cancellation and Signal Combination} (ICSC-SINR)
\begin{equation}\label{e.ICSC-SINR}
\SINR_{\calU,k}=
\frac{\sum_{i\in\calU} (Y^{(i)})^{-1}}%
{W+\gamma I-\gamma \sum_{j=1}^k (Y^{(j)})^{-1}}\,,
\end{equation}
where signals in $\calU$ are combined and interference from all
remaining interferers in $[k]\setminus \calU$ is canceled.
It is easy to see that the ICSC-SINR coverage events 
can be expressed using the order statistics of the STINR
process as follows
\begin{align}
&\{\,\SINR_{\calU,k}> \tau\,\}\nonumber\\
&=
\Bigl\{\,\sum_{i\in\calU} (Y^{(i)})^{-1}>
\tau W+\gamma\tau I-\gamma\tau \sum_{j=1}^k (Y^{(j)})^{-1}\,\Bigr\}\nonumber\\
&=\Bigl\{\,(1+\gamma\tau)\sum_{i\in\calU}Z'_{(i)}+
\gamma\tau \sum_{j\in[k]\setminus\calU}Z'_{(j)}
> \tau\,\Bigr\}\nonumber\\
&=
\Bigl\{\,\sum_{i\in\calU}
Z'_{(i)}+\gamma\tau'\!\!\!\!\sum_{j\in[k]\setminus\calU}Z'_{(j)}
> \tau'\,\Bigr\}\,,\label{e.ICSC-SINR-coverage}
\end{align}
where, $\tau'=\tau/(1+\gamma\tau)$.

Note that the ICSC-SINR coverage~(\ref{e.ICSC-SINR-coverage})
says that the ratio of the combined signals to noise plus
non-canceled interference is larger than the threshold $\tau$.
The probabilities  
\begin{align}\calP^{(\calU,k)}(\tau):=\Prob\{\,\SINR_{\calU,k}>\tau\}
=\Prob\Bigl\{\,\sum_{i\in\calU}
Z'_{(i)}+\gamma\tau'\!\!\!\!\sum_{j\in[k]\setminus\calU}Z'_{(j)}
> \tau'\,\Bigr\}\label{e.ICSC-SINR-cp}
\end{align}
as function of $\tau$ give the distribution of the  ICSC-SINR,
thus allowing one to study coverage, spectral efficiency and other
characteristics of this channel with  ICSC.

The interference cancellation may be imperfect
and only reduce the interference power by a factor $\bar\gamma<\gamma$;
cf~\cite{weber2007transmission}.
This can be taken into account by replacing $\gamma$ by $\bar\gamma$ in~(\ref{e.ICSC-SINR-cp}).
Also~(\ref{e.ICSC-SINR-coverage}) may yet ignore 
(practical) conditions under which signal 
cancellation can be effectively performed. In order to take them
into account one can  modify~(\ref{e.ICSC-SINR-coverage}) as follows
\begin{align}
\calP^{(\calU,k)}_{\text{IC}}(\tau):=\Prob\Bigl\{\,\sum_{i\in\calU}
Z'_{(i)}+\Ind(\text{IC})\gamma\tau'\!\!\!\!\sum_{j\in[k]\setminus\calU}Z'_{(j)}
> \tau'\,\Bigr\}\,,\label{e.ICSC-SINR-cp-cond}
\end{align}
where 
$\Ind(\text{IC})$ denotes 
the indicator of a suitable condition  for 
the feasibility of the interference cancellation (IC). 
Natural conditions for
 IC can also be expressed in terms of the values $Z'_{(i)}$
$i=1,\ldots,k$, and thus $\calP^{(\calU,k)}_{\text{IC}}$ can be
calculated using Proposition~\ref{p.f'k}.
Similarly, conditions for the feasibility of the 
signal combination can be introduced.
Here are two examples of natural IC conditions.

\subsubsection{ Successive interference cancellation (SIC)} 
\label{sss.SIC}
consists in first decoding  the strongest interfering signal (among $k$ strongest),
subtracting it from the interference, then decoding the second strongest
one, and so on. Assume that one can  decode  a signal if its SINR
is larger than some threshold $\epsilon$ (called {\em IC threshold}). 
Denote the indexes of the
interfering signals to be decoded and subtracted by
$j_1,\ldots,j_{k'}$, where $k'=k-|\calU|$, 
starting from the strongest one.
SIC condition can be written then as the superposition of the following conditions.
\begin{align*}
Z'_{j_1}>&\epsilon'\\
Z'_{j_2}+\gamma \epsilon'Z'_{j_1}>&\epsilon'\\
\ldots&\\
Z'_{j_{k'}}+\gamma \epsilon'\sum_{l=1}^{k'-1}Z'_{j_l}>&\epsilon'\,,
\end{align*}
where $\epsilon'=\epsilon/(1+\gamma \epsilon)$.

\subsubsection{Independent interference cancellation (IIC)}
\label{sss.IIC}
 is a weaker  condition
assuming that all interfering signals to be canceled can be
decoded independently: $Z'_{(j)}>\epsilon'$ for all $j\in[k]\setminus\calU$.

\subsection{Signal to residual interference ratio of the
  $k\,$th strongest signal}
Following Zhang and Haenggi~\cite{zhang2012performance,zhangdecoding} 
let us consider coverage by the $k\,$th strongest signal with all
$k-1$ stronger signals canceled from the interference,
$\calP^{(\calU,k)}(\tau)$ with $\calU=\{k\}$. 
The following result follows from  Proposition~\ref{p.f'k}.
Its second statement  (case $\gamma\tau>1$) with $W=0$ 
was proved in~\cite{zhangdecoding}. 
\begin{proposition} \label{p.Haenggi-residual}
Under the assumptions of Theorem~\ref{mainResult}
we have 
\begin{align}
\calP^{(\{k\},k)}(\tau)
=\sum_{i=0}^{i_{\max}}
\frac{(-1)^i}{i!}
\int_{0}^{1/\gamma}\hspace{-1em}
\dots\int_{0}^{1/\gamma}
\Ind\Bigl(
\gamma\tau'\sum_{i=1}^{k-1}z'_i+z'_{k} >\tau'\Bigr)
\Ind(z'_1>\ldots>z'_k)\,
\mu_k^{(k+i)}(z'_1,\dots,z'_k)\,dz'_k\dots dz'_1\,,
\label{e.residual}
\end{align} 
where the upper summation limit  is bounded by
\begin{equation}\label{e.residual-imax}
i_{\max}<1/(\gamma\tau')-1
=1/(\gamma\tau)\,.
\end{equation}
For $\gamma\tau\ge 1$  we have $i_{\max}=0$ and 
\begin{align}\nonumber
\calP^{(\{k\},k)}(\tau)&=\frac{\calI_{k,\beta}((W/\gamma)a^{-\beta/2})}{\calI_{k,\beta}(0)
(\gamma\tau)^{2k/\beta}\Gamma(1+2k/\beta)(\Gamma(1-2/\beta))^k}\nonumber\\[2ex]
&=\frac{2^{k-1}\calI_{k,\beta}((W/\gamma)a^{-\beta/2})}{\beta^{k-1}
(\gamma\tau)^{2k/\beta}
\Gamma(1+2k/\beta)(\Gamma(1+2/\beta))^k(\Gamma(1-2/\beta))^{2k}}\,.
\label{e.Haenggi-residual}
\end{align}
\end{proposition}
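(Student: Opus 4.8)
The plan is to read $\calP^{(\{k\},k)}(\tau)$ off the joint law of the top-$k$ order statistics and then exploit the vanishing of the partial densities. First I would set $\calU=\{k\}$ in~(\ref{e.ICSC-SINR-cp}), so that $\calP^{(\{k\},k)}(\tau)$ is the probability that the vector $(Z'_{(1)},\dots,Z'_{(k)})$ falls in the region $\{z'_1>\dots>z'_k\}\cap\{\gamma\tau'\sum_{j=1}^{k-1}z'_j+z'_k>\tau'\}$. Because this vector has joint density $f'_{(k)}$ by Proposition~\ref{p.f'k}, I would express the probability as the integral of $f'_{(k)}$ over that region and substitute the finite expansion $f'_{(k)}=\sum_i\frac{(-1)^i}{i!}\mu_k'^{(k+i)}$; interchanging the (finite) sum with the integral gives exactly~(\ref{e.residual}).

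Next I would sharpen the summation range to~(\ref{e.residual-imax}). The point is that two constraints must hold on the support of a given term: the coverage indicator forces $z'_k>\tau'(1-\gamma\sum_{j<k}z'_j)$, while the proof of Proposition~\ref{p.f'k} shows (via Lemma~\ref{l.modmeasure0}) that $\mu_k'^{(k+i)}$ vanishes unless $\gamma\sum_{j<k}z'_j+\gamma(i+1)z'_k<1$. Writing $\sigma:=\gamma\sum_{j<k}z'_j$, these read $\gamma z'_k>\gamma\tau'(1-\sigma)$ and $\gamma z'_k<(1-\sigma)/(i+1)$; since $1-\sigma>0$ is necessary for either to admit a positive $z'_k$, compatibility requires $\gamma\tau'(i+1)<1$, i.e. $i<1/(\gamma\tau')-1=1/(\gamma\tau)$. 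This bound is uniform in the integration variables --- unlike the generic bound of Proposition~\ref{p.f'k}, which depends on $z'_k$ --- precisely because the coverage event keeps $z'_k$ bounded away from $0$.

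For $\gamma\tau\ge1$ this forces $i_{\max}<1$, so only the $i=0$ term of~(\ref{e.residual}) survives and $\calP^{(\{k\},k)}(\tau)$ reduces to $\int\mu'^{(k)}$ over the coverage region. I would insert the explicit density from Corollary~\ref{newmu_n} and rescale by $u_j=\gamma z'_j$, which cancels the $\gamma^k$ prefactor and leaves the integrand $\prod_j u_j^{-(2/\beta+1)}(1-s)^{2k/\beta-1}$ with $s=\sum_j u_j$. On the ordered domain $u_k=\min_j u_j$, and a short rearrangement (using $\gamma\tau'/(1-\gamma\tau')=\gamma\tau$) recasts the coverage indicator as the symmetric condition $\min_j u_j>\gamma\tau(1-s)$; since the integrand is symmetric, the integral over the ordered cone is $1/k!$ times the integral over the full simplex under this constraint. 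The \emph{decisive} step is then the substitution $\xi_j=u_j/(1-\sum_i u_i)$, under which every power of $1-\sum_i u_i$ (coming from the $(1-s)$ weight, the singular factors, and the Jacobian) cancels, the constraint becomes simply $\xi_j>\gamma\tau$ for each $j$, and the integral factorizes as $\prod_j\int_{\gamma\tau}^{\infty}\xi^{-(2/\beta+1)}d\xi=((\gamma\tau)^{-2/\beta}\beta/2)^k$. Assembling the constants $c_{k,2/\beta,0}$ and $\bar{\calI}_{k,\beta}$ and using $kp\,\Gamma(kp)=\Gamma(1+2k/\beta)$ with $p=2/\beta$ yields the first line of~(\ref{e.Haenggi-residual}), and substituting the value~(\ref{I0}) of $\calI_{k,\beta}(0)$ gives the fully explicit second line.

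The main obstacle is the closed-form evaluation of this $i=0$ integral. Its integrand is singular at $u_j=0$ (the exponent $-2/\beta-1$ is negative), so the very constraint that separates the $u_j$ from $0$ is what renders it convergent, and a direct Dirichlet-integral computation is unavailable. Finding the change of variables $\xi_j=u_j/(1-\sum_i u_i)$, which simultaneously removes the $(1-s)$ weight and decouples the coupled constraint into independent one-dimensional integrals, is the crux; by comparison, the first two steps and the index bound are bookkeeping built directly on Proposition~\ref{p.f'k} and Lemma~\ref{l.modmeasure0}.
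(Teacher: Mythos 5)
Your proposal is correct, and its first two steps are exactly the paper's argument: the paper likewise obtains (\ref{e.residual}) by integrating the density $f'_{(k)}$ of Proposition~\ref{p.f'k} over the coverage region, and derives (\ref{e.residual-imax}) by combining the support condition $\sum_{j=1}^{k-1}z'_j+(i+1)z'_k<1/\gamma$ of the partial densities (inherited from Lemma~\ref{l.modmeasure0}) with the coverage constraint; your $\sigma$-computation is the same cancellation written slightly differently, and both arguments use the resulting uniformity of $i_{\max}$ in the integration variables to interchange sum and integral. Where you genuinely depart from the paper is the closed form (\ref{e.Haenggi-residual}) for $\gamma\tau\ge1$: the paper does not prove it internally but defers to the proof of Theorem~1 in~\cite{zhangdecoding} (``the remaining part of the proof follows the lines presented there''), merely noting that the noise enters through the form of $M'^{(k)}$. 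You instead evaluate the $i=0$ term self-containedly: rescaling $u_j=\gamma z'_j$, rewriting the constraint on the ordered cone as the symmetric condition $\min_j u_j>\gamma\tau(1-s)$ so that the ordered integral equals $1/k!$ times the symmetric one, and then decoupling with $\xi_j=u_j/(1-\sum_i u_i)$. I checked the bookkeeping: the Jacobian contributes $(1-s)^{k+1}$, the net exponent of $(1-s)$ is $-k(2/\beta+1)+(2k/\beta-1)+(k+1)=0$, each one-dimensional factor integrates to $(\beta/2)(\gamma\tau)^{-2/\beta}$, and assembling with $c_{k,2/\beta,0}$ from Corollary~\ref{newmu_n} gives precisely the first line of (\ref{e.Haenggi-residual}). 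Your route buys a proof that is complete within the paper, at the cost of one nontrivial change of variables; the paper's buys brevity by outsourcing that computation to the cited reference.

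One caveat concerns your closing sentence: substituting (\ref{I0}) into the first line of (\ref{e.Haenggi-residual}) yields $\beta^{k-1}(\Gamma(1+2/\beta))^k\,\calI_{k,\beta}((W/\gamma)a^{-\beta/2})\,/\,\bigl(2^{k-1}(\gamma\tau)^{2k/\beta}\Gamma(1+2k/\beta)\bigr)$, which is \emph{not} the second line as printed; the printed line differs from this by the factor $(\beta/2)^{2(k-1)}(C'(\beta))^{2k}$. A check at $k=1$, $W=0$ against (\ref{e.SINRexplici}) confirms that the first line (and hence your derivation) is the correct one, so the paper's second displayed line contains a typo, and your claim that the substitution ``gives the fully explicit second line'' should be read as giving the corrected version of that line rather than the one printed.
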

\begin{proof}
Using Proposition~\ref{p.f'k}, in order to 
calculate the probability
$\calP^{(\{k\},k)}(\tau)=
\Prob\{\,\gamma\tau'\sum_{i=1}^{k-1}Z'_{(i)}+Z'_{(k)} >
\tau'\,\}$ one needs to integrate the density $f'_{(k)}$ over the domain
$\gamma\tau'\sum_{i=1}^{k-1}z'_i+z'_{k} >\tau'$.
A necessary condition for the partial density $\mu'^{(k+i)}_k$ present in the
expansion~(\ref{e.f'k}) to be non-null is  
$\sum_{i=1}^{k-1}z'_i+(i+1)z'_k<1/\gamma$ (cf. the proof of Proposition
~\ref{p.f'k}). Combining this condition with the integration domain
we obtain 
$(\tau'-z'_{k})/(\gamma\tau')+(i+1)z'_k<1/\gamma$,
which can be rewritten as
$z'_k(1+i-1/(\gamma\tau')<0$.
Since $z'_k\ge0$ we have $i<1/(\gamma\tau')-1$, which 
proves~(\ref{e.residual-imax}). The fact that $i_{\max}$ does not
depend on the integration variables allows one to interchange the summation
and integration as in~(\ref{e.residual}).

For $\gamma\tau\ge1$ we have  $i_{\max}=0$ and 
the probabilities $\calP^{(\{k\},k)}(\tau)$
can be evaluated as integrals of the $k\,$th factorial moment measure
of the STINR process
over $\gamma\tau'\sum_{i=1}^{k-1}z'_i+z'_{k}>\tau'$. This is another
justification of the equation in (b) in the proof of \cite[Theorem~1]{zhangdecoding} 
and the remaining part of the proof follows the lines presented there,
with the contribution of the non-zero noise resulting from the form of
the factorial moment measure $M'^{(k)}$.
\end{proof}
The related result \cite[Proposition~2]{zhangdecoding} brings also some approximations for 
$\calP^{(\{k\},k)}_{SIC}(\tau)$, that is the probability of the above
signal-to-residual-interference coverage of the $k\,$th strongest
signal with SIC condition (with IC
threshold $\epsilon=\tau$).  Exact value of this probability can be
obtained restricting the integration domain in~(\ref{e.residual})
as explained in Section~\ref{sss.SIC}.

In the next section we will consider a somewhat opposite problem,
namely, how the coverage by the strongest signal can be improved
by  removing  $k-1$ successive strongest signals  from the
interference or by combining them with the strongest signal.
 
\subsection{Improving the strongest signal by  
interference cancellation and cooperation}
In this section we consider
the following two scenarios:
\begin{itemize}
\item When the receiver,
being served by the strongest base station, 
is able to suppress the interference created by the subsequent 
  $k-1$ strongest stations.~\footnote{Cf
      e.g.~\cite{bar1989adaptive,xu2012co} for 
 methods allowing one to remove from the strong desired signal
a weaker interfering  one.}
\item When the subsequent $k-1$ strongest stations can combine their signals 
with the strongest one.
\end{itemize}
More specifically, for any $\epsilon,\tau $, with $0<\epsilon<\tau$, and
$\epsilon'=\epsilon/(1+\gamma\epsilon)$, $\tau'=\tau/(1+\gamma\tau)$, we define  the following
 coverage  probabilities 
\begin{align*}
\calP_{IC}^{(k)}(\tau,\epsilon):=
\Prob\Biggl\{\,
\begin{aligned}
\SINR_{\{1\},k}&> \tau &&\text{when\ } Z_{(k)}>\epsilon\\ 
Z_{(1)}&> \tau &&\text{otherwise}
\end{aligned}
\,\Biggr\}
=\Prob\Bigl\{\,Z'_{(1)}+\gamma\tau'\Bigl(
Z'_{(2)}+\ldots+Z'_{(k)}\Bigr)\Ind(Z'_{(k)}>\epsilon')>\tau'\,\Bigr\}
\end{align*}
and
\begin{align*}
\calP_{SC}^{(k)}(\tau,\epsilon):=
\Prob\Biggl\{\,
\begin{aligned}
\SINR_{[k],k}&> \tau &&\text{when\ } Z_{(k)}>\epsilon\\ 
Z_{(1)}&> \tau &&\text{otherwise}
\end{aligned}
\,\Biggr\}
=\Prob\Bigl\{\,Z'_{(1)}+\Bigl(Z'_{(2)}+\ldots+Z'_{(k)}\Bigr)\Ind(Z'_{(k)}>\epsilon')>\tau'\,\Bigr\}\,.
\end{align*}
Note that $\calP_{IC}^{(k)}(\tau,\epsilon)$  is the probability of the
coverage by the strongest signal with the {\em cancellation of the
interference} coming from  subsequent $k-1$ strongest signals,
whenever these signals individually can be decoded at the SINR level
$\epsilon$ and no interference cancellation otherwise.
Similarly $\calP_{SC}^{(k)}(\tau,\epsilon)$ is the probability of the
coverage by the strongest signal  {\em combined} with  $k-1$ subsequent
strongest signals, whenever these signals individually can be decoded
at the SINR level
$\epsilon$ and no signal combination otherwise.
Remember that the coverage probability   evaluated in
Corollary~\ref{singletier} corresponds to
$\calP(\tau)=\calP^{(k)}(\tau)=\Prob\{\,Z'_{(1)}>\tau'\,\}$. It is immediately  
seen that $$\calP_{SC}^{(k)}(\tau,\epsilon)\ge
\calP_{IC}^{(k)}(\tau,\epsilon)\ge \calP^{(k)}(\tau)\,$$
for all $0<\epsilon\le \tau$.
In order to study the difference between the coverage probabilities,  denote
\begin{align}
\Delta_{IC}^{(k)}&(\tau,\epsilon)\label{e.DIC}
:=\Prob\Bigl\{\,Z'_{(1)}+\gamma\tau'\Bigl(Z'_{(2)}+\ldots+
Z'_{(k)}\Bigr)>\tau'
\text{and\ } \epsilon'<Z'_{(i)}<\tau', i=1\ldots k\,\Bigr\} \\[2ex]
\Delta_{SC}^{(k)}&(\tau,\epsilon) \label{e.DSC}
:=\Prob\Bigl\{\,Z'_{(1)}+\ldots+ Z'_{(k)}>\tau'
\text{and\ } \epsilon'<Z'_{(i)}<\tau', i=1\ldots k\,\Bigr\}\,. 
\end{align}
and note that 
\begin{align*}
\calP_{IC}^{(k)}(\tau,\epsilon)&=\calP(\tau)+\Delta_{IC}^{(k)}(\tau,\epsilon)\\[1.5ex]
\calP_{SC}^{(k)}(\tau,\epsilon)&=\calP(\tau)+\Delta_{SC}^{(k)}(\tau,\epsilon)\,.
\end{align*}

We have the following result regarding the increase 
of the coverage probability of the strongest signal induced by the 
interference cancellation or signal combination with the subsequent
$k-1$ strongest stations.
\begin{proposition}\label{c.Delta}
Under the assumptions of Theorem~\ref{mainResult},
\begin{align}\label{e.DIC-res}
\Delta_{IC}^{(k)}(\tau,\epsilon)=
\sum_{i=0}^{i_{\max}}
\frac{(-1)^i}{i!}\int_{\epsilon'}^{\tau'}
\dots\int_{\epsilon'}^{\tau'}
\Ind\Bigl(z'_1+\gamma\tau'(z'_2+\ldots+
z'_k)>\tau'\Bigr)
\Ind(z'_1>\ldots>z'_k)\,
\mu_k'^{(k+i)}(z'_1,\dots,z'_k)\,dz'_k\dots dz'_1
\end{align} 
and 
\begin{align}\label{e.DSC-res}
\Delta_{SC}^{(k)}(\tau,\epsilon)=
\sum_{i=0}^{i_{\max}}
\frac{(-1)^i}{i!}\int_{\epsilon'}^{\tau'}
\dots\int_{\epsilon'}^{\tau'}
\Ind\Bigl(z'_1+\ldots+z'_k)>\tau'\Bigr)
\Ind(z'_1>\ldots>z'_k)\,
\mu_k'^{(k+i)}(z'_1,\dots,z'_k)\,dz'_k\dots dz'_1
\end{align} 
where the upper summation limit  is bounded  by
\begin{equation}
i_{\max}<\frac{1}{\gamma\epsilon'}-k.
\end{equation} 
\end{proposition}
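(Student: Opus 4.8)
The plan is to write both $\Delta_{IC}^{(k)}(\tau,\epsilon)$ and $\Delta_{SC}^{(k)}(\tau,\epsilon)$ as integrals of the joint density $f'_{(k)}$ of the $k$ strongest order statistics of the STINR process, supplied by Proposition~\ref{p.f'k}, over the region carved out by the constraints in~(\ref{e.DIC}) and~(\ref{e.DSC}). Since $Z'_{(1)}>\dots>Z'_{(k)}$, the side condition $\epsilon'<Z'_{(i)}<\tau'$ for $i=1,\dots,k$ amounts to $\epsilon'<Z'_{(k)}$ and $Z'_{(1)}<\tau'$, which together with the ordering confines each coordinate to $(\epsilon',\tau')$. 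Thus
\begin{align*}
\Delta_{IC}^{(k)}(\tau,\epsilon)=\int_{\epsilon'}^{\tau'}\!\!\dots\int_{\epsilon'}^{\tau'}\Ind\bigl(z'_1+\gamma\tau'(z'_2+\ldots+z'_k)>\tau'\bigr)\,\Ind(z'_1>\ldots>z'_k)\,f'_{(k)}(z'_1,\dots,z'_k)\,dz'_k\dots dz'_1,
\end{align*}
and identically for $\Delta_{SC}^{(k)}$ with the combined-signal indicator replaced by $\Ind(z'_1+\dots+z'_k>\tau')$. I would then insert the finite expansion $f'_{(k)}=\sum_{i}\frac{(-1)^i}{i!}\mu_k'^{(k+i)}$ of~(\ref{e.f'k}).

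The crux is a uniform bound on the number of surviving terms. Recall from the proof of Proposition~\ref{p.f'k} that $\mu_k'^{(k+i)}(z'_1,\dots,z'_k)=0$ whenever $z'_1+\dots+z'_{k-1}+(i+1)z'_k\ge1/\gamma$. On the integration domain every coordinate exceeds $\epsilon'$, so the left-hand side is a sum of $k+i$ quantities each larger than $\epsilon'$ and therefore strictly exceeds $(k+i)\epsilon'$. Hence a partial density can be non-null only if $(k+i)\epsilon'<1/\gamma$, that is $i<1/(\gamma\epsilon')-k$, which is precisely the stated bound on $i_{\max}$. Crucially, this cutoff is uniform in $(z'_1,\dots,z'_k)$.

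Because the cutoff does not depend on the integration variables, I would truncate the series of Proposition~\ref{p.f'k} at $i_{\max}$ and interchange the now-finite summation with the integration, mirroring the step used in the proof of Proposition~\ref{p.Haenggi-residual}; moving the sum outside the integral delivers~(\ref{e.DIC-res}) and, with the altered indicator, (\ref{e.DSC-res}). The main obstacle is the justification of this interchange: the per-point cutoff $i_{\max}<1/(\gamma z'_k)-k$ of Proposition~\ref{p.f'k} varies across the region, so one must argue that the restriction $z'_i\in(\epsilon',\tau')$ collapses it to a single uniform cutoff. Once that is secured the remainder is routine rewriting, and the SC case follows by the identical argument, since the bound depends only on the domain and not on the combined-signal constraint.
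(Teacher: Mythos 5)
Your proof is correct and follows essentially the same route as the paper's: both express $\Delta_{IC}^{(k)}$ and $\Delta_{SC}^{(k)}$ as integrals of the order-statistics density from Proposition~\ref{p.f'k}, use the fact that the integration domain is bounded away from $0$ by $\epsilon'$ to collapse the point-dependent cutoff of~(\ref{e.f'imax}) into the uniform bound $i_{\max}<1/(\gamma\epsilon')-k$, and then interchange the now-finite summation with the integration. Your explicit re-derivation of the cutoff from the nullity condition $z'_1+\dots+z'_{k-1}+(i+1)z'_k\ge 1/\gamma$ is just an unpacking of the same argument the paper cites implicitly, so there is no substantive difference.
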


\begin{proof}
The result follows from Proposition~\ref{p.f'k}.
The fact that the integration domain is bounded away from $0$ by
$\epsilon'$ and~(\ref{e.f'imax}) implies this particular bound on
$i_{\max}$. Since it does not 
depend on the integration variables allows one to interchange the summation
and integration as in~(\ref{e.DIC-res}) and~(\ref{e.DSC-res}).
\end{proof}
Note that the parameter $\epsilon$ mathematically allows the expansions of  $\Delta_{IC}^{(k)}$ and $\Delta_{SC}^{(k)}$ to be  finite ones for  $i_{\max}<\infty$. 
We will further numerically   study this problem in Section~\ref{ss.Num-ICSC}.

\section{Numerical results}

\subsection{Setting}
\subsubsection{Model assumptions}
We illustrate our mathematical framework and methods by
calculating results for several cellular network models. For all
results, we have set $\gamma=1$, $PS=1$ and $K=1$ and 
assumed an interference limited network ($W=0$)~\footnote{Since
 including a noise term often makes little difference except for very
 low network density values} hence the results are scale-invariant
(do not depend on $\lambda$). Exceptions are results for muti-tier
networks, where we will be more specific about $P$ and $\lambda$.

To illustrate the impact of the strength of the  path-loss we  
consider two path-loss exponent values $\beta=3$ and~$5$.
The network simulations (validating our analytic results) were done in
a circular region of radius $10$ length units, which was empirically
found to be sufficiently large to render so-called edge effects negligible,
with number of network simulations being around $10^5$.

The SINR thresholds $\tau,\epsilon$ are expressed in dB; i.e. $\tau(\text{dB})
=10\log_{10}(\tau)\text{dB}$.

\subsubsection{Numerical integration}
We employ MATLAB
  implementation~\cite{paul_matlab_moments} of the quasi-Monte Carlo integration method (based on Sobol
points) for integrating the multi-dimensional integrals, which is
supported by the theory that says this type of numerical integration
generally performs better than regular Monte Carlo
integration~\cite{kuo2005lifting,dick2013high}. From our empirical
findings, the evaluation of the $\calJ_{n,\beta}$ integral is achieved
quickly on a standard machine in a matter of seconds (the number of
sample points mostly ranges from $10^3$ to $10^4$ points). The nature
of the integral such as (\ref{e.DIC-res}) and (\ref{e.DSC-res}) allows the integration with respect to the $z_i'$ and $v_i$ (from $\calJ_{n,\beta}$) to be done in the same computation step, which takes slightly longer to evaluate than $\calJ_{n,\beta}$. Although, Sobol points perform well in general, it should be noted that the type of quasi-random points can be better chosen if a more thorough analysis of the $\calJ_{n,\beta}$ integral kernel is performed, but this beyond the scope of this study.

\subsection{Coverage probability}

\label{ss.Num-Coverage}
\subsubsection{Single-tier network}
Figures~\ref{Cov3Beta3} and~\ref{Cov3Beta5} present the $k$-coverage
probability $\calP^{(k)}(\tau)$  for  $k=1,2,3$ with $\beta=3$ and $5$,
respectively. Recall that  $\calP^{(k)}(\tau)$ is the tail distribution
function of the SINR related to  the  $k\,$th strongest signal.

\begin{figure}[t!]
\begin{center}
\begin{minipage}{0.45\linewidth}
\begin{center}
\centerline{\includegraphics[scale=0.5]{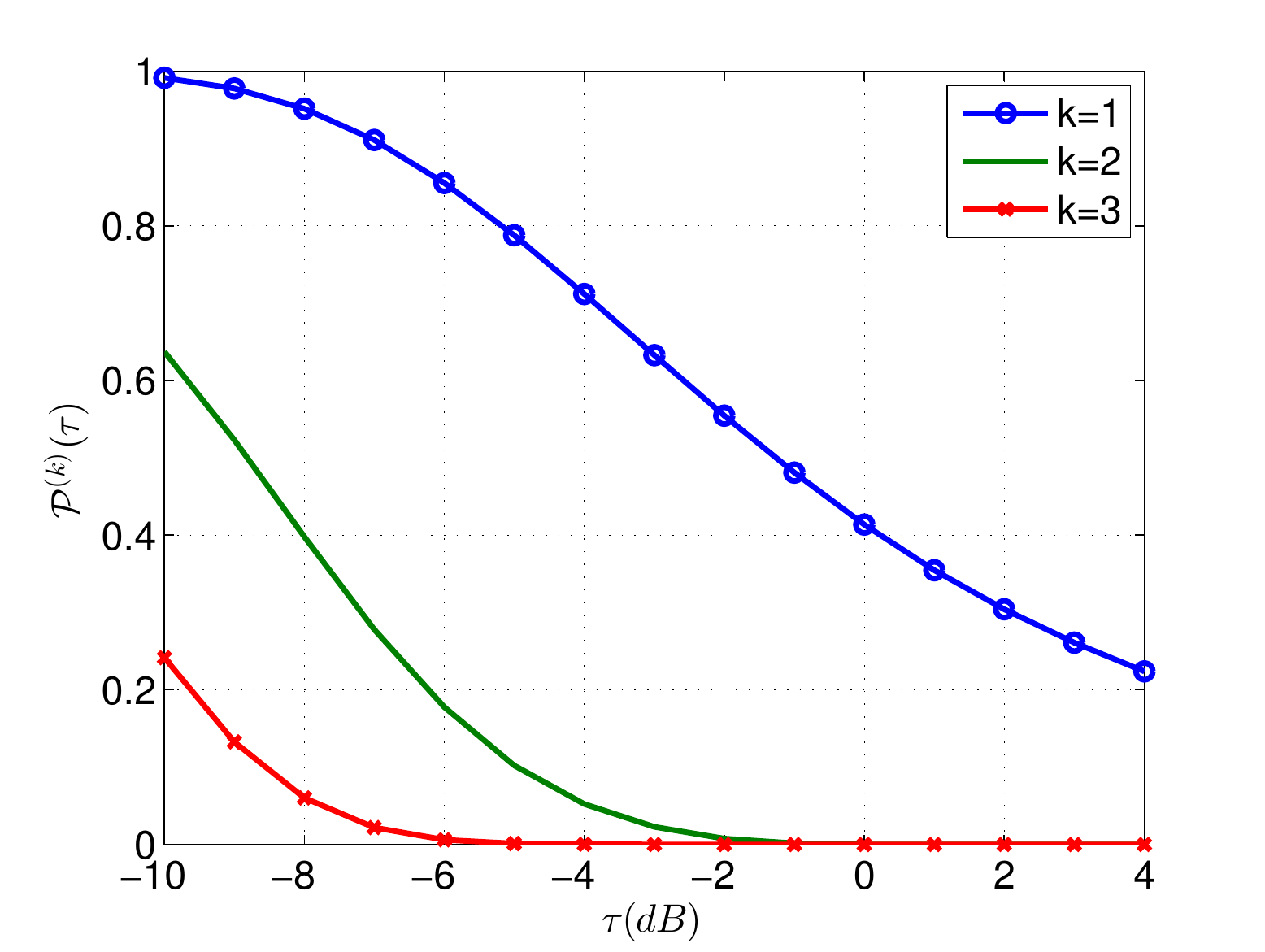}}
\vspace{-2ex}
\caption{For $\beta=3$, $k$-coverage probability $\calP^{(k)}(\tau)$
for a single-tier network.
\label{Cov3Beta3}}
\end{center}
\end{minipage}
\hfill
\begin{minipage}{0.45\linewidth}
\begin{center}
\centerline{\includegraphics[scale=0.5]{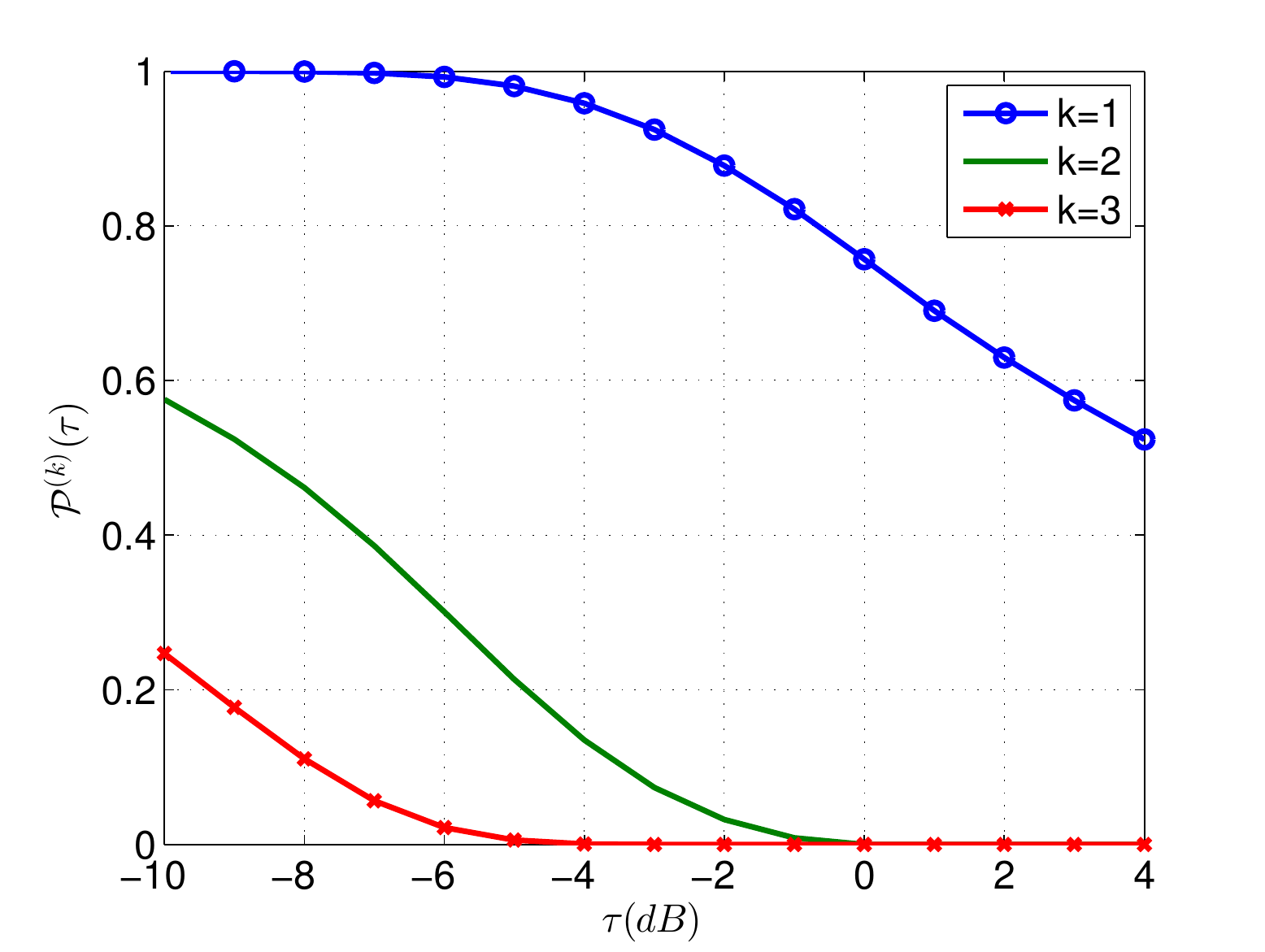}}
\vspace{-2ex}
\caption{The same functions as on Figure~\ref{Cov3Beta3} for
  $\beta=5$.
\label{Cov3Beta5}}
\end{center}
\end{minipage}
\end{center}
\vspace{-2ex}
\end{figure}

\begin{figure}[t!]
\begin{center}
\begin{minipage}[t]{0.45\linewidth}
\begin{center}
\centerline{\includegraphics[scale=0.5]{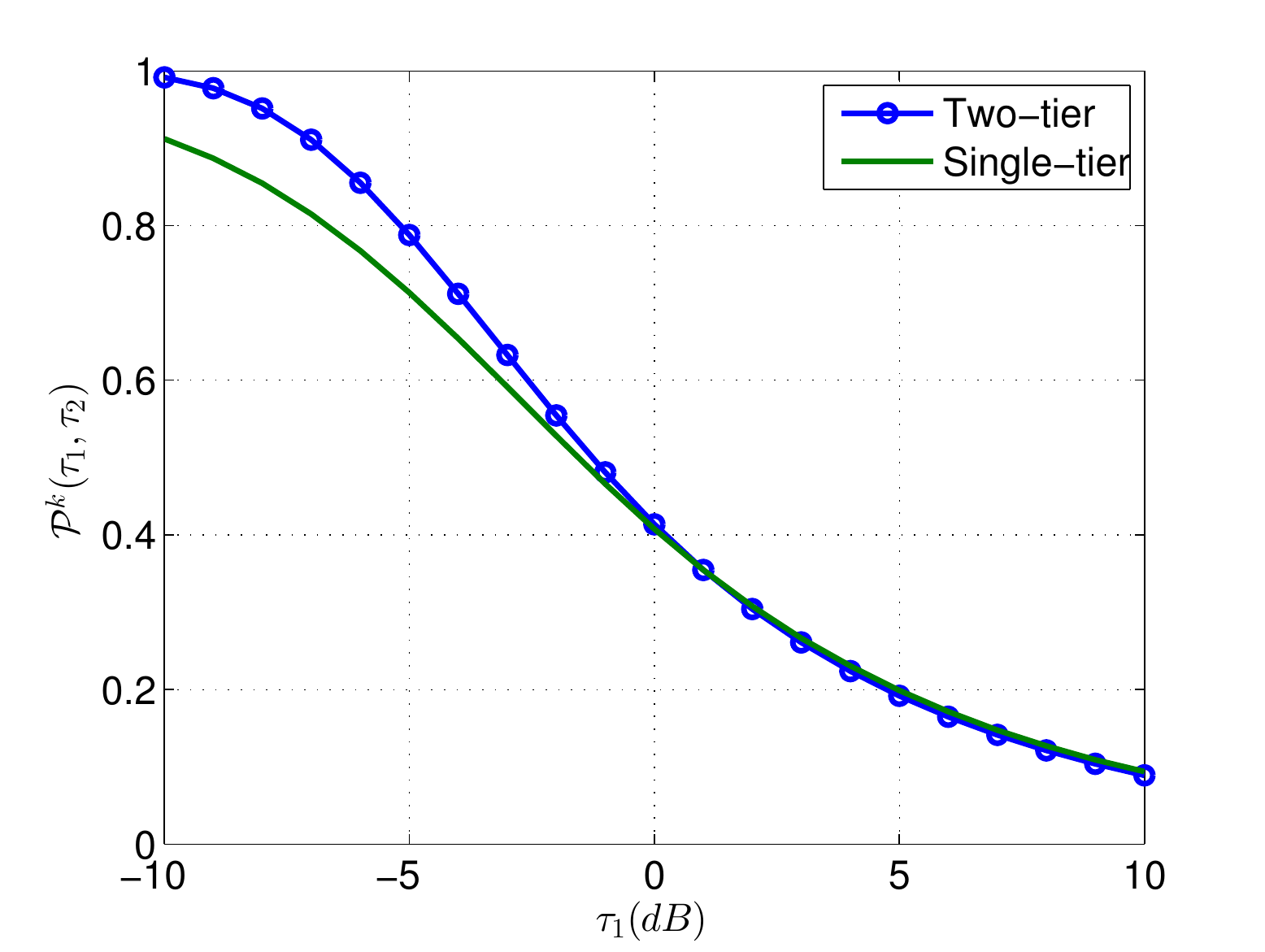}}
\vspace{-2ex}
\caption{For  $\beta=3$ and for a two-tier network with  
$\lambda_1=\lambda_2/2$, $P_1=100 P_2$ $\tau_2=1\text{dB}$, 
the $1$-coverage  probability $\calP^{(1)}$  as a function of
$\tau_1$ compared to a single-tier network model with equivalent propagation process
and constant SINR thresholds.
\label{MTvsSTProbC1}}
\end{center}
\end{minipage}
\hfill
\begin{minipage}[t]{0.45\linewidth}
\begin{center}
\centerline{\includegraphics[scale=0.5]{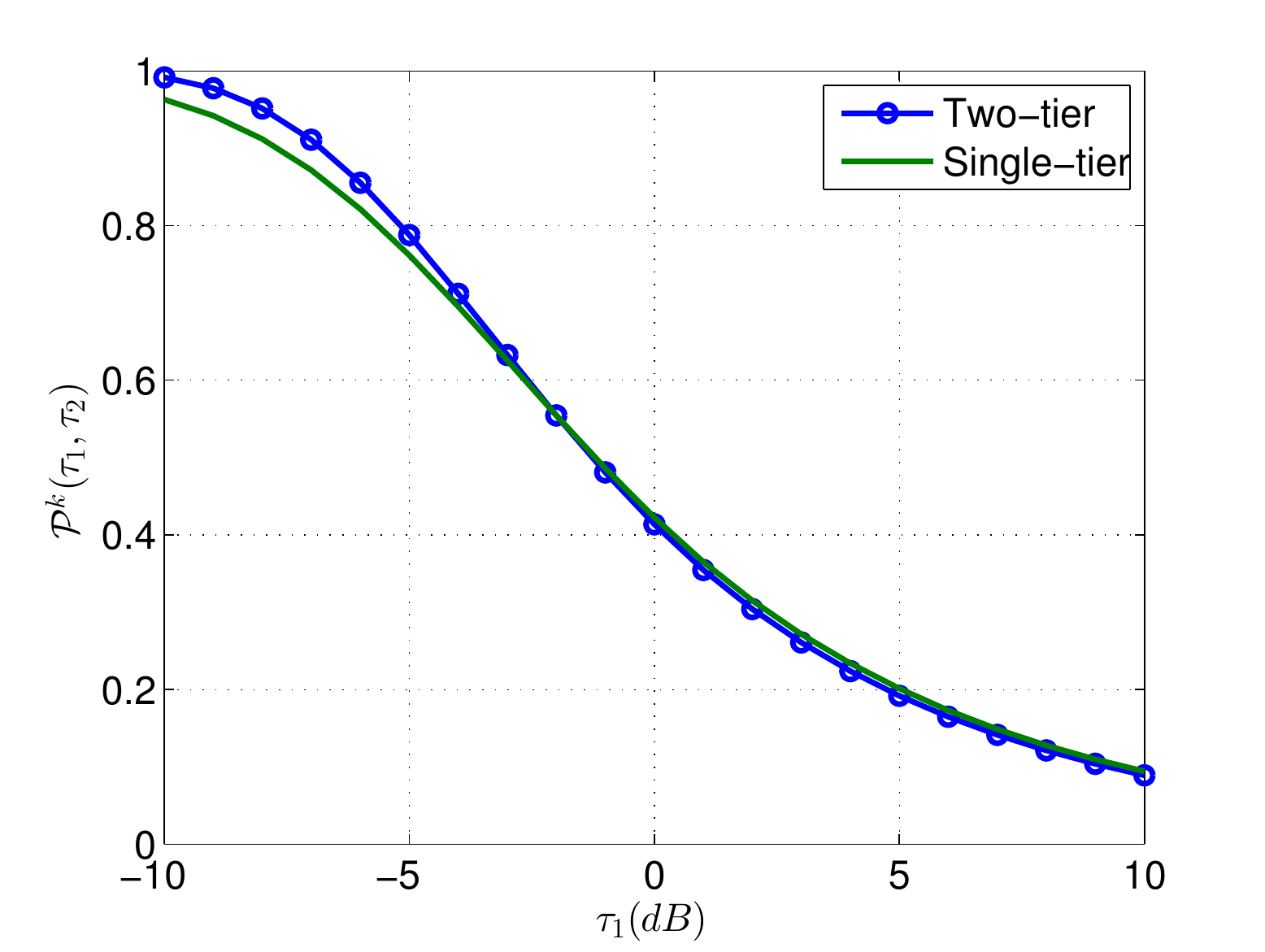}}
\vspace{-2ex}
\caption{1-coverage probability $\calP^{(1)}$ for two-tier and single-tier network as on Figure~\ref{MTvsSTProbC1} with   $\tau_2=-2$dB. 
\label{MTvsSTProbC2}}
\end{center}
\end{minipage}
\end{center}
\vspace{-2ex}
\end{figure}

\begin{figure}[t!]
\begin{center}
\begin{minipage}[t]{0.45\linewidth}
\begin{center}
\centerline{\includegraphics[scale=0.5]{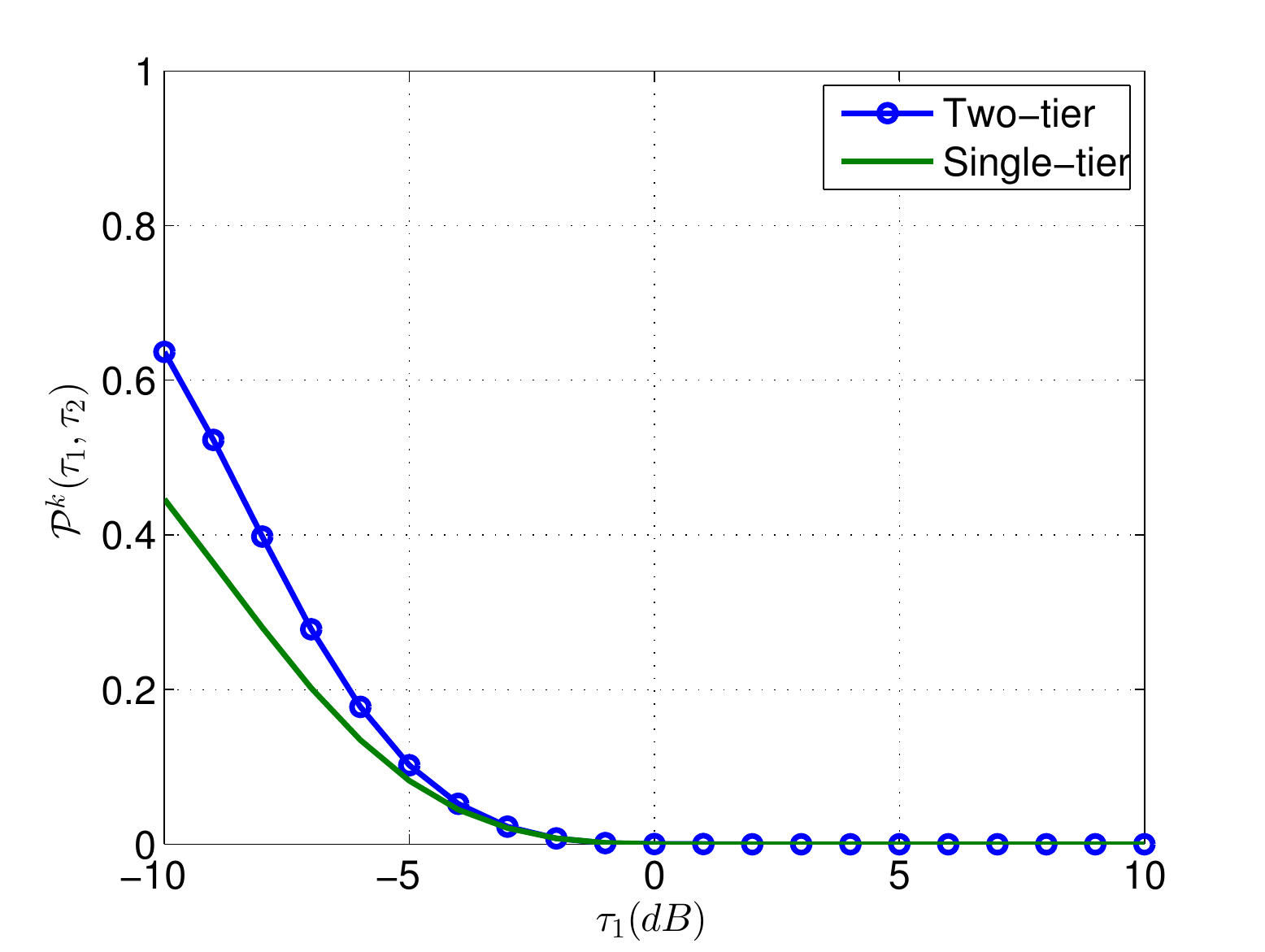}}
\vspace{-2ex}
\caption{2-coverage probability $\calP^{(2)}$  
for two-tier and one tier-network as on Figure~\ref{MTvsSTProbC1} (with   $\tau_2=-2$dB).
\label{MTvsSTProbC3}}
\end{center}
\end{minipage}
\end{center}
\vspace{-4ex}
\end{figure}

\subsubsection{Two-tier network}
We consider two-tier network (cf Section~\ref{s.multitier})
with  $\lambda_1=\lambda_2/2$, $P_1=100 P_2$ and  $\beta=3$.
Considering two values for $\tau_2$, namely $\tau_2=1(\text{dB})$ and 
$\tau_2=-2(\text{dB})$, on Figures~\ref{MTvsSTProbC1},
\ref{MTvsSTProbC2}, \ref{MTvsSTProbC3} we plot 1- and 2-coverage probability
$\calP^{(k)}$, $k=1,2$ as functions of $\tau_1$.
The parameters of the two-tier network are based on previous results~\cite[Fig.7.]{DHILLON2012}.
For $1$-coverage probability, the results agree with those in~\cite{DHILLON2012}, thus validating the generalized symmetric sum method in the multi-tier setting.

For comparison purposes, on Figures~\ref{MTvsSTProbC1},
\ref{MTvsSTProbC2}, \ref{MTvsSTProbC3} we consider  also 
a single-tier network model with intensity $\lambda^*$ given by~(\ref{e.lambda*})
i.e., $\lambda^*=\lambda_1P_1^{2/\beta}+\lambda_2P_2^{2/\beta}$ (which gives equivalent propagation and  SINR processes) for which we
calculate the similar $k$-coverage probabilities with {\em constant} SINR thresholds 
$\tau=\E[T^*]=(\tau_1\lambda_1P_1^{2/\beta}+\tau_2\lambda_2P_2^{2/\beta})/\lambda^*$,
equal to the mean of the random SINR threshold mark in the truly equivalent model. 
Note that for the considered parameter values, 
this model offers quite reasonable approximation of the
original two-tier one, in particular for $\tau>1$ (although visually the corresponding plots almost coincide, it remains
an approximation). 
It would be interesting to see under which
  conditions the above observation can be generalized, i.e.,
 when one can approximate a multi-tier model with the
  single-tier model generating the equivalent propagation process and 
having  averaged $\tau$ vales.

\subsection{Coverage with  interference cancellation and cooperation}
\label{ss.Num-ICSC}
\begin{figure}[t!]
\begin{center}
\begin{minipage}[t]{0.45\linewidth}
\begin{center}
\centerline{\includegraphics[scale=0.5]{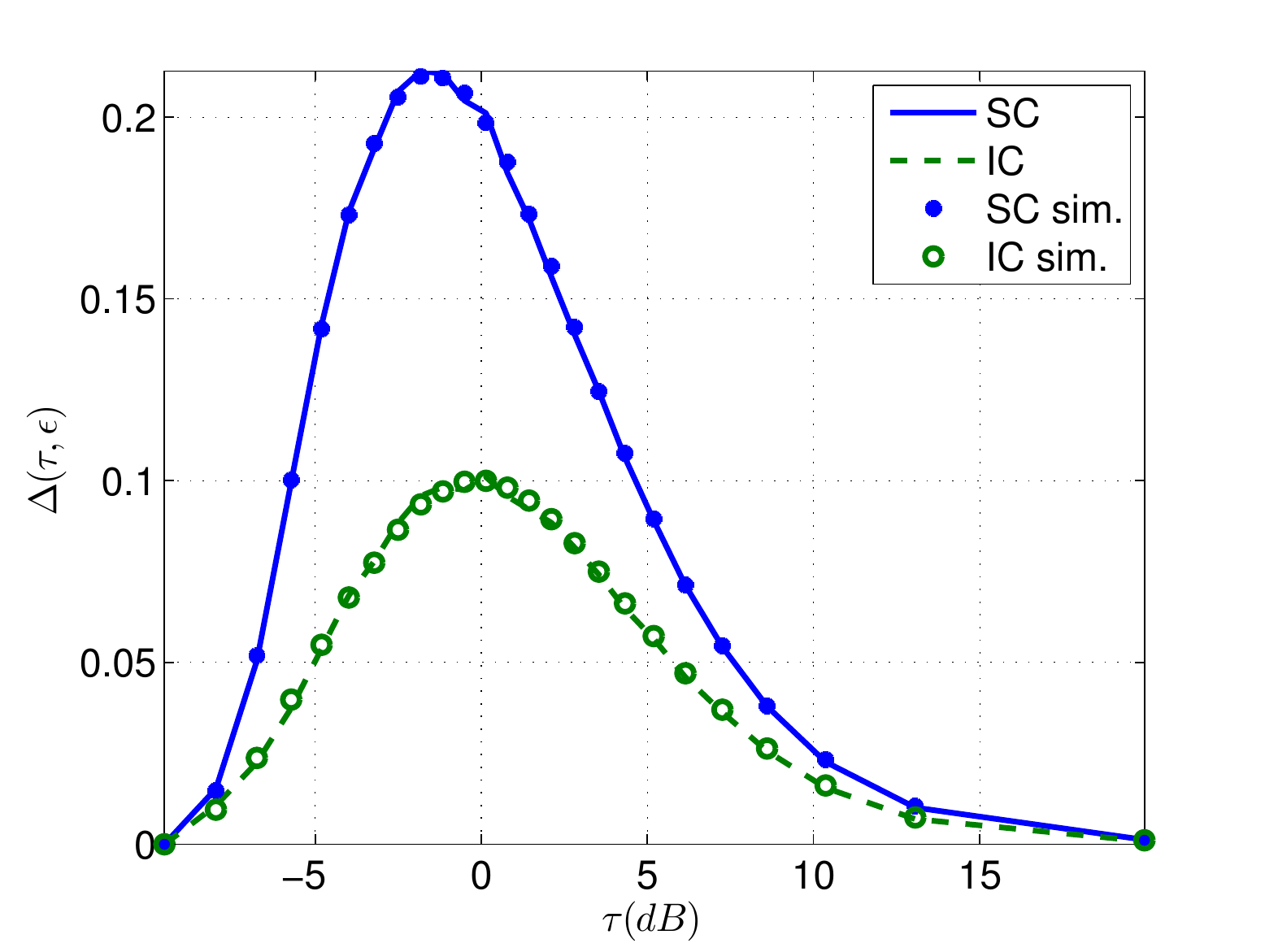}}
\caption{For $\beta=3$, the increase of the coverage probability of the strongest signal induced by the 
interference cancellation $\Delta^{(2)}_{IC}(\tau,\epsilon)$ or signal combination  $\Delta^{(2)}_{SC}(\tau,\epsilon)$ with the second 
strongest station assuming  $\epsilon'=0.1$ (or $\epsilon=-9.5424$dB).
\label{PlotDelta301}}
\end{center}
\end{minipage}
\hfill
%
\begin{minipage}[t]{0.45\linewidth}
\begin{center}
\centerline{\includegraphics[scale=0.5]{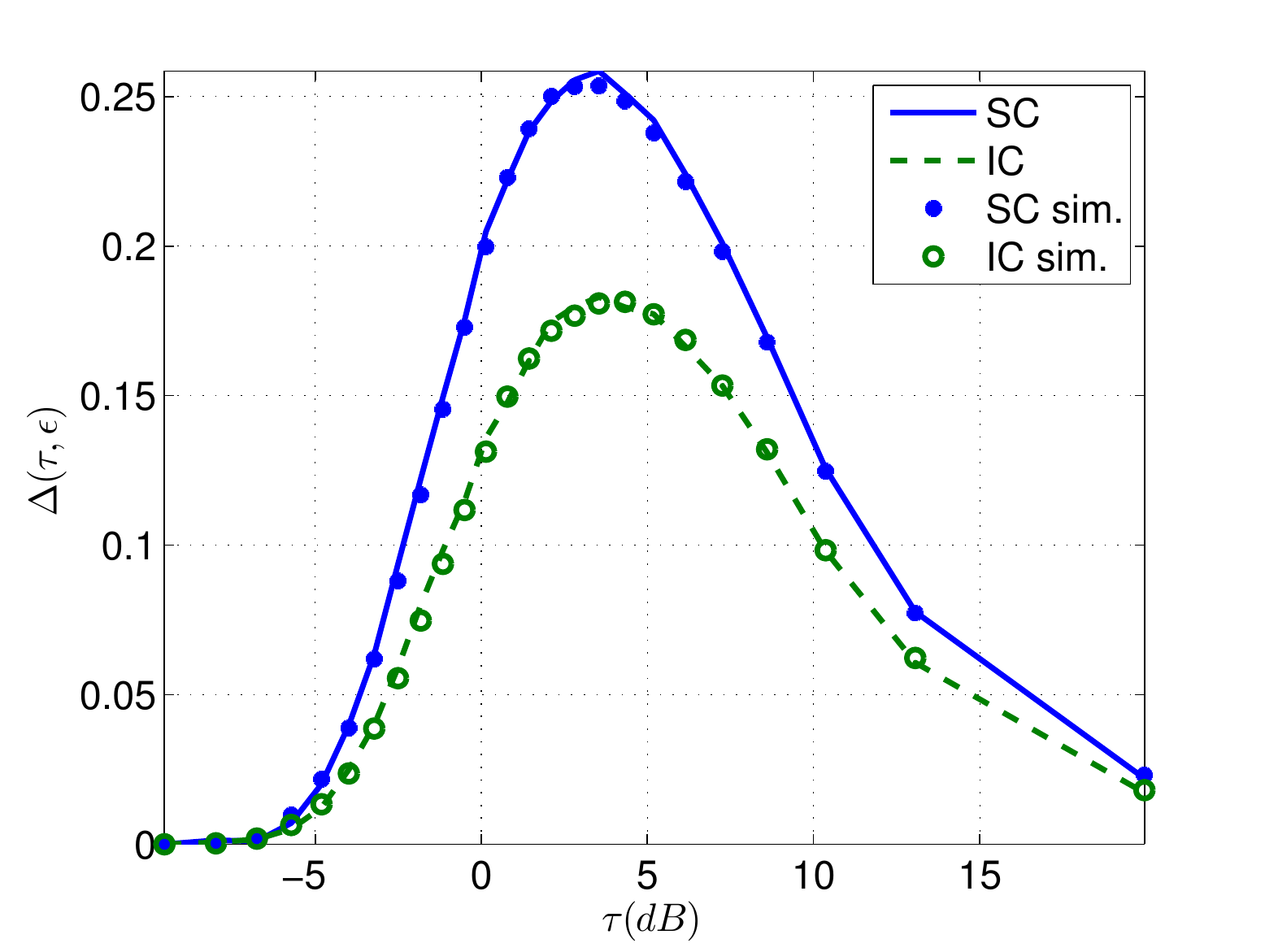}}
\caption{The same functions as on Figure~\ref{PlotDelta301} for $\beta=5$.
\label{PlotDelta501}}
\end{center}
\end{minipage}\\
\begin{minipage}[t]{0.45\linewidth}
\begin{center}
\centerline{\includegraphics[scale=0.5]{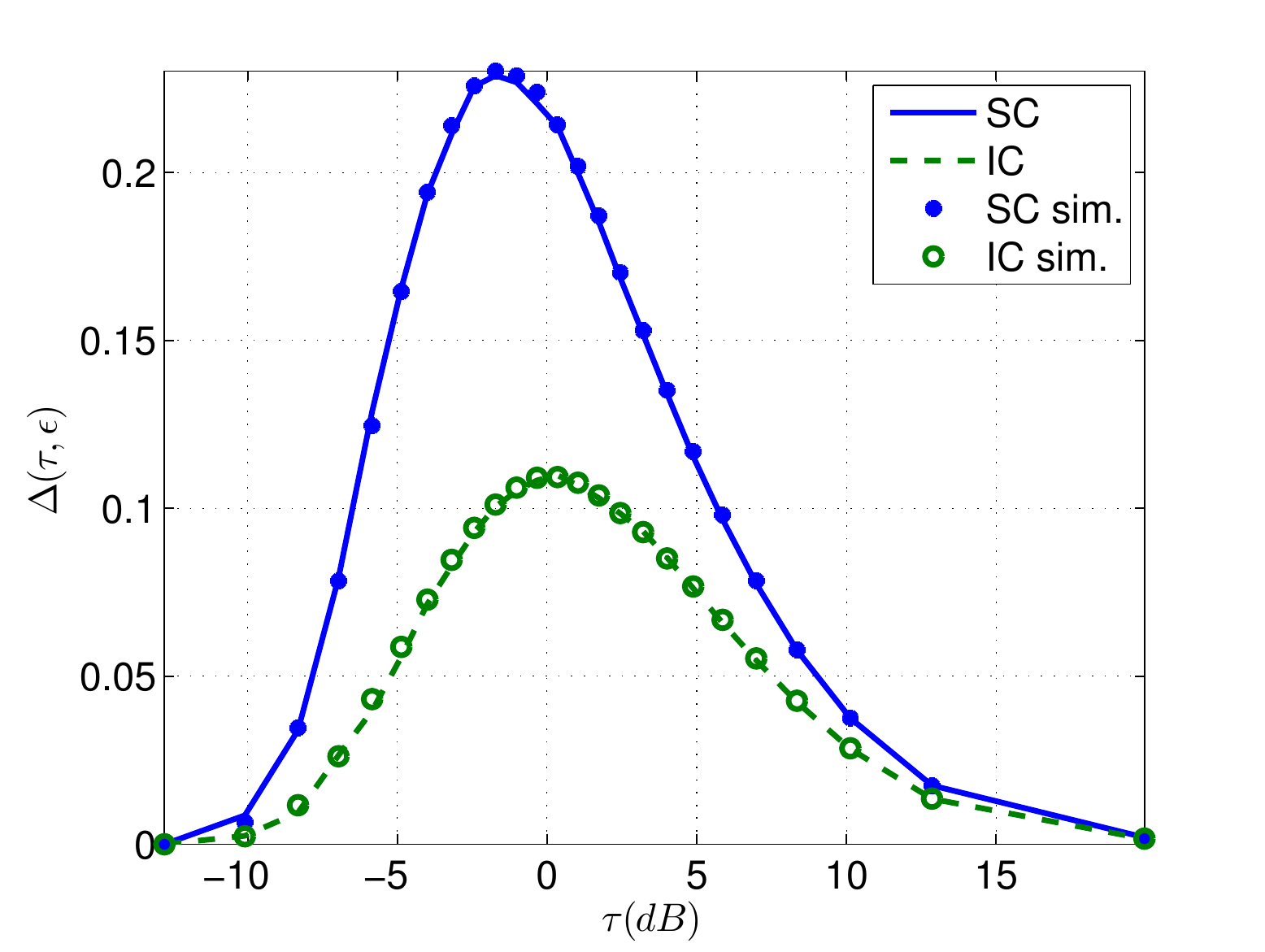}}
\caption{The same functions as on Figure~\ref{PlotDelta301} with $\beta=3$ and  $\epsilon'=0.05$ (or $\epsilon=-12.7875$dB).
\label{PlotDelta3005}}
\end{center}
\end{minipage}
\hfill
\begin{minipage}[t]{0.45\linewidth}
\begin{center}
\centerline{\includegraphics[scale=0.5]{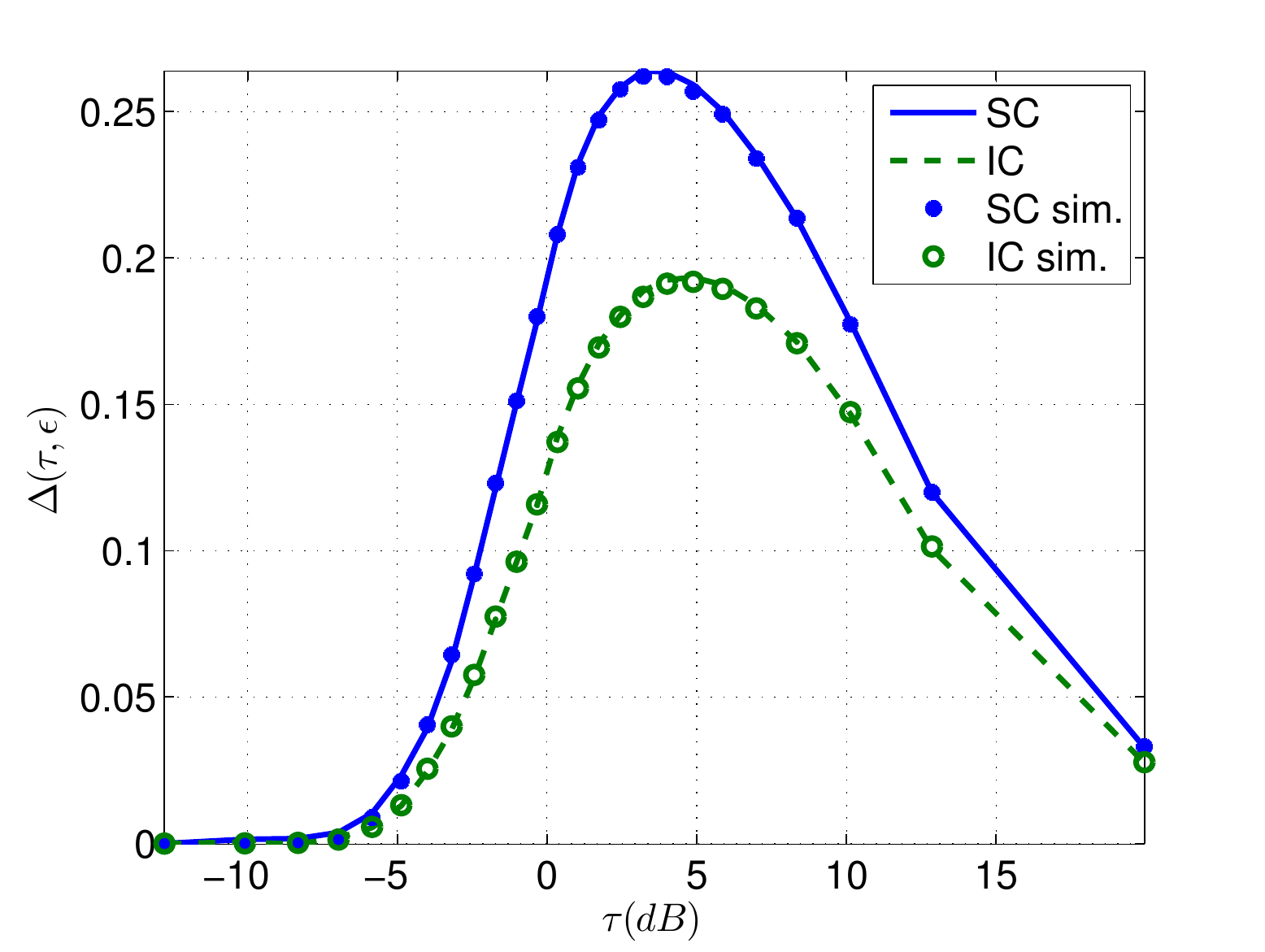}}
\caption{The same functions as on Figure~\ref{PlotDelta301} with $\beta=5$ and $\epsilon'=0.05$ (or $\epsilon=-12.7875$dB).
\label{PlotDelta5005}}
\end{center}
\end{minipage}
\end{center}
\vspace{-4ex}
\end{figure}

On Figures~\ref{PlotDelta301}, \ref{PlotDelta501}, \ref{PlotDelta3005} and~\ref{PlotDelta5005} we study the increase 
of the coverage probability of the strongest signal induced by the 
interference cancellation $\Delta^{(2)}_{IC}(\tau,\epsilon)$ or signal combination  $\Delta^{(2)}_{SC}(\tau,\epsilon)$ with the second 
strongest station. We consider two values of the path-loss exponent $\beta=3$ and $\beta=5$ 
and the decoding threshold (for cancellation and combination) 
$\epsilon=-9.5424$dB and $-12.7875$dB.

The plots of the probabilities reveal that the increase of coverage probability is largest around the threshold $\tau=1$ (0dB). 
Also, as the path-loss exponent $\beta$ increases the difference  between  the gain  under  IC and SC 
 decreases (cf. Figure~\ref{PlotDelta301} to \ref{PlotDelta5005}). This is explained by realizing that the second largest interfering signal, which is removed under IC, weakens as $\beta$ increases.

Recall that the decoding threshold $\epsilon$ mathematically allows
the expansion of  $\Delta_{IC}^{(k)}$ and $\Delta_{SC}^{(k)}$
in~(\ref{e.DIC-res})  and~(\ref{e.DSC-res})
to be finite ones,
hence they are complete expressions when some $i_{\max}<\infty$ terms are included. Although as $\epsilon$ decreases more terms are needed,  
not all the integral terms are necessary to gain a good approximation (Figure~\ref{PlotDelta3terms} and \ref{PlotDelta5terms}). Numerical evidence confirms the mathematical intuition that as $\epsilon$ decreases the integral kernel approaches  the singularity at $t_i=0$ (for some integer $i\in[n]$), hence the peak of the resulting integral increases.  

Results for $\epsilon'=0.05$ (i.e.,  $\epsilon=-12.7875$dB) were obtained with relative ease and speed, but for smaller  
$\epsilon'$, it becomes computationally impracticable.

Broadly speaking, the more integration points are needed for larger $\beta$ and smaller $\epsilon$. The effect of $\beta$ probably stems from the Sobol points not sampling the integral kernel adequately, which could be improved by way of a suitable importance sampling. That said, for $\beta=5$ and $\epsilon'=0.05$ the $\Delta$ probability plots (Figure~\ref{PlotDelta5005}) were still obtained fast (in seconds) except  for small $\tau$ which required slightly more time.

\begin{figure}[t!]
\begin{center}
\begin{minipage}[t]{0.45\linewidth}
\begin{center}
\centerline{\includegraphics[scale=0.5]{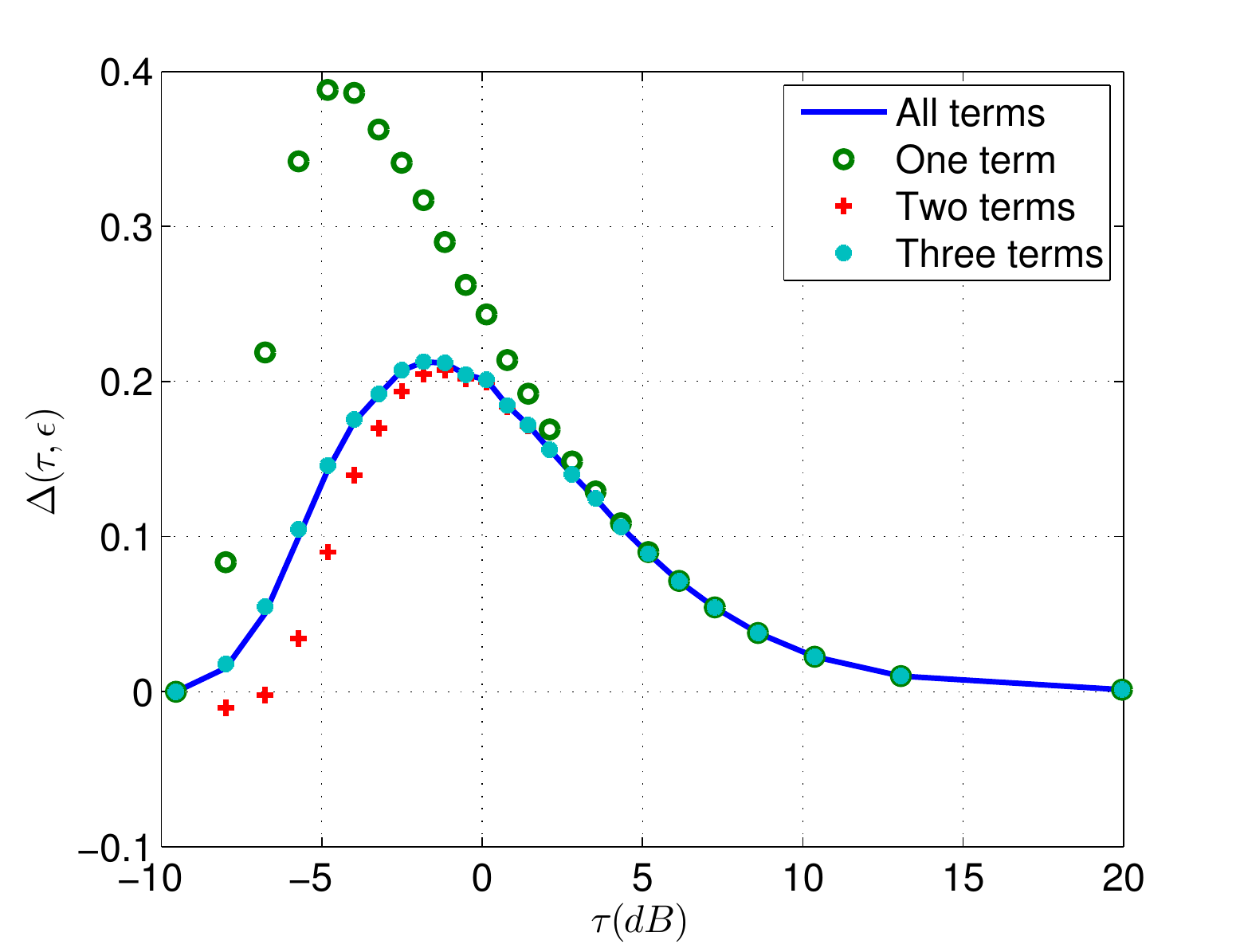}}
\caption{For $\beta=3$, the complete (10-term) expression and 1, 2, and 3-term approximations for the $\Delta^{(2)}_{SC}$ probability
 with $\epsilon'=0.1$ (or $\epsilon=-9.5424$dB).\label{PlotDelta3terms}}
\end{center}
\end{minipage}
\hfill
\begin{minipage}[t]{0.45\linewidth}
\begin{center}
\centerline{\includegraphics[scale=0.5]{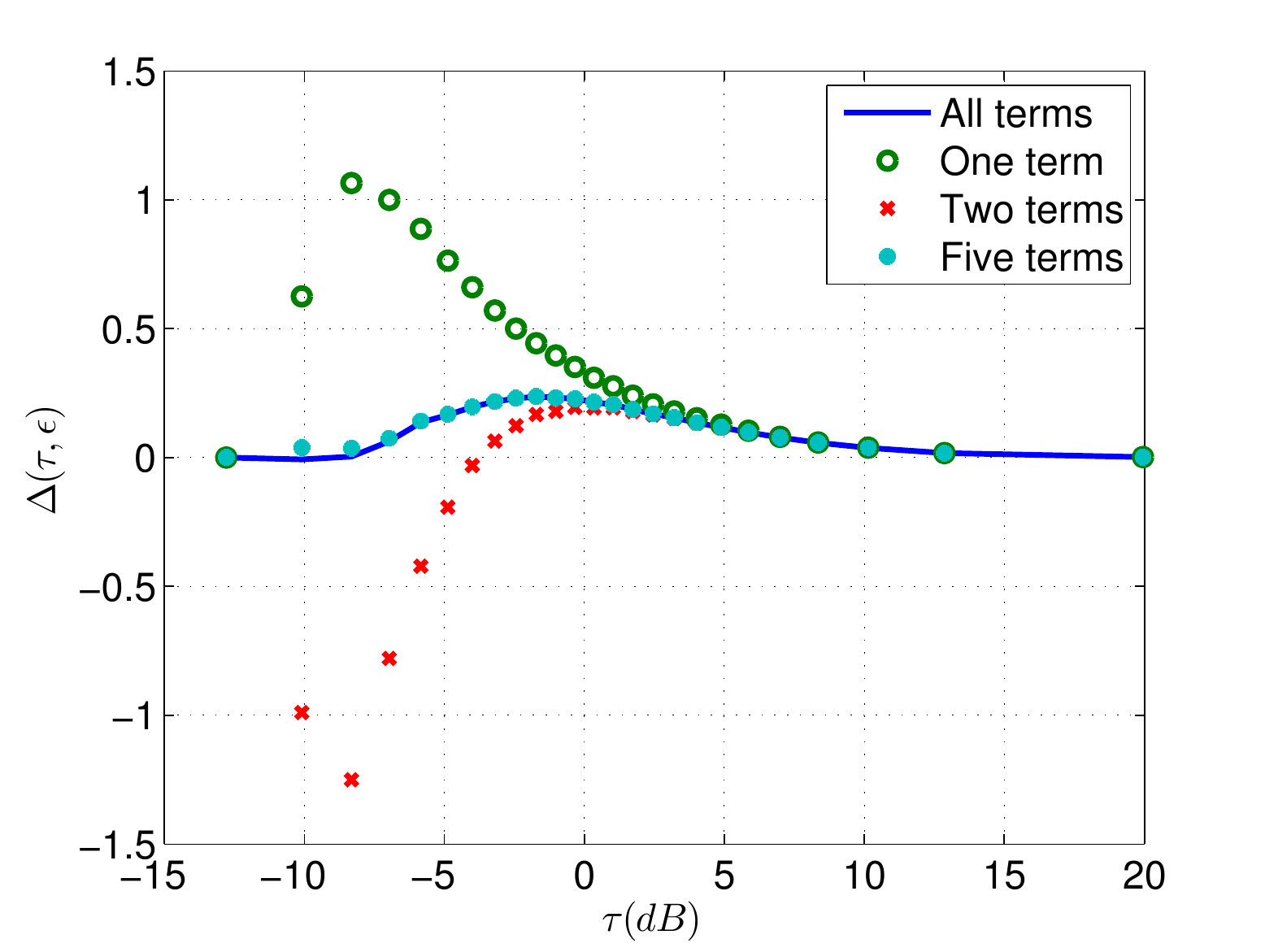}}
\caption{For $\beta=3$, the complete (20-term) expression and 1, 2, and 5-term approximations for the $\Delta_{SC}^{(2)}$ 
probability for a heterogeneous network under two base station cooperation with $\epsilon'=0.05$ (or $\epsilon=-12.7875 $dB).\label{PlotDelta5terms}}
\end{center}
\end{minipage}
\end{center}
\vspace{-4ex}
\end{figure}

\section{Future directions and conclusions}
We  propose some model and technique extensions, discuss their
feasibility and conclude the work.

\subsection{General functions of the SINR process via the
  factorial-moment expansion}

We used the factorial moment measures of the SINR process $\Psi$ to
derive expressions for $k$-coverage probabilities 
via the  famous inclusion-exclusion principle. More general
characteristics of the coverage number can be obtained directly using 
 Schuette-Nesbitt formula  (which we recall for completeness 
in Appendix~\ref{App1}, see also  \cite{GERBER:1995}). 
Using a  result by Handa~\cite[Lemma 5.3]{handa2009two},
we also  demonstrated how to calculate
the joint densities of  the order statistics of the SINR process
using its factorial moment measures.
It turns out that all these results can be seen 
as  instances of some more  general expression allowing one 
 to calculate (at least theoretically) the expectations of general
 functions $\phi$ of simple point processes by means of  
a (finite or infinite) Taylor-like expansion. The terms of this expansion 
are integrals of some kernels (entirely characterized by $\phi$)
with respect to the factorial moment measures of the considered point
process, cf~\cite{fme}.

More specifically, under appropriate convergence conditions (whose
presentation is beyond the scope of this paper)
one can write the expectation of a general function $\phi$ of the (say) STINR point process $\Psi'$ as 
\begin{align}\label{e.fme}
\E[\phi(\Psi')]
=\, \phi(\emptyset)
+ \sum\limits_{n=1}^{\infty}
\int_0^{1/\gamma}\hspace{-3ex}\dots\hspace{-0.8ex}
\int_0^{1/\gamma}\phi_{t'_1,\dots,t'_n} \, \mu'^{(n)}(t_1',\ldots, t_n')\,dt'_n\dots dt'_1\,,
\end{align}
where $\mu'^{(n)}(t_1',\ldots  t_n')$ is the density of the $n\,$th factorial moment measure of the STINR
process  given in Corollary~\ref{newmu_n},
$\phi(\emptyset)$ denotes the value of the function $\phi$
evaluated on the empty configuration of points (no points)
and 
\begin{align*}
\phi_{t'_1}&=\phi(\{t'_1\})-\phi(\emptyset)\\
\phi_{t'_1,t'_2}&= \frac{1}{2}\Bigl(\phi(\{t'_1,t'_2\})- \phi(\{t'_1\})- \phi(\{t'_2\})+\phi(\emptyset)\Bigr)\\
\dots&\\
\phi_{t'_1,\ldots,t'_n}&=\frac1{n!}\sum_{k = 0}^{n} (-1)^{n-k}
\sum_{{t'_{i_1},\ldots,t'_{i_k}}\atop{\text{distinct}}} 
\phi(\{t'_{i_1},\ldots,t'_{i_k}\})\,.
\end{align*}
For example, setting  
$$\phi(\{t'_1,\ldots\})=
\Ind(\max_{i\ge1} t'_i>\tau')$$
and applying ~(\ref{e.fme}) 
one obtains~(\ref{e.ss1}) with $k=1$ in the case of constant
STINR threshold $\tau'$.   
More generally, using~(\ref{e.fme}) with 
$$\phi(\{t'_1,\ldots\})=\prod_{k}\Bigl(h(t'_k)\Ind(t'_k>\tau')
+\Ind(t'_k\le \tau')\Bigr)$$
one obtains the following  expansion of the probability generating
function (cf~\cite[Eq. 5.5.4]{daleyPPI2003}) of the SINR process
\begin{align*}
\E\Bigl[\prod_{Z'\in\Psi'}\Bigl(h(Z')\Ind(Z'>\tau')
+\Ind(Z'\le \tau')\Bigr)\Bigr]
=1+\sum_{n=1}^{n_{\max}}\frac{1}{n!} \int_{\tau'}^{1/\gamma}\hspace{-1.5em}\dots\hspace{-0.3em}
\int_{\tau'}^{1/\gamma}\!\!\prod_{k=1}^n
(h(t'_k)-1)\mu'^{(n)}(t_1',\ldots, t_n')\,dt'_n\dots dt'_1\, 
\end{align*}
where the structure of our point process imposes
$n_{\max}<1/(\gamma\tau')$, for any bounded function  $h(\cdot)$.
This expansion can be easily transformed to the one for STINR via the
mapping~(\ref{Zrelation1}).
 See~\cite[Sec.~5.5]{daleyPPI2003}
for other classical moment expansion results.

Thus the factorial-moment expansion provides a way for calculating
various SINR-based quantities, thus opening a range of other possible
models coupled with analytic methods.

\subsection{Varying path-loss exponent model}
For tractability we used a fixed path-loss exponent $\beta$ and $K$
throughout this work. However, an argument can be made based on the
Hata-like path-loss models that both constants
incorporate base station height~\cite[Section 2.7.3]{STUBER2011}. 
Consequently, more adequate heterogeneous
  network models may require 
$\beta$ and $K$ to be {\em random}  marks of base stations.
Extension of our model to random $K$ is straightforward. Considering
random $\beta$ is more complicated. 
It was recently shown that in terms of propagation losses, a Poisson
network with random $\beta$  and constant base station density
$\lambda$ is equivalent to some network with constant exponents and an
isotropic (but not power-law) density~\cite{equivalence2013}. 
Assuming such density
increases the complexity of the factorial moment integrals,
but they may remain amenable, particularly to numerical means.  
Multi-tier models with values of
$\beta$  depending on tiers have been considered
in~\cite{JO2012,MADBROWN2012}.

\subsection{Other path-loss functions}
Path-loss functions other than the commonly used (singular) power-law
function $\ell$ have been previously proposed, including exponential
types and modified  power-laws with the singularity removed (for example, $\ell(|x|)=\min(1,(|x|)^{\beta}$). Unfortunately, two immediate drawbacks arise when employing other path-loss functions. The propagation invariance (Lemma \ref{l.invariance}) would no longer hold~\footnote{For certain path-loss functions,  less elegant invariance results may exist where propagation processes rely upon two or more propagation moments.}, hence  the factorial moments would need to be calculated for different distributions of propagation effects. Furthermore, the algebraic tractability in our work would quickly diminish; for example, the change of variables used (in Section \ref{mainResultProofPart}) would, most probably, not work in simplifying the integrals.

\subsection{Non-Poisson base station configurations}
Our derivation of the explicit expressions of $\calS_n$, via the factorial moment measures of the SINR process, hinges heavily upon the tractable nature of the Poisson process (in particular, Slivnyak's theorem). Despite this, the $M^{(n)}$-dependent $\calS_n$ terms  and related expressions introduced in Section \ref{s.symsum} do not rely upon the Poisson assumption. More specifically, assume a simple point process $\Phi=\{X\}$ representing the base stations, which gives rise to a corresponding STINR process $\modPsi=\{\modZ\}$. Then deriving the $\calS_n$ terms is again equivalent to the task of finding the corresponding factorial moment measures of $\{\modZ\}$, which in turn immediately give those of $\{Z\}$. 

However, a non-Poisson base station configuration would require considerably more work due to Slivnyak's theorem no longer being applicable. Furthermore, we have leveraged propagation invariance considerably, which flows directly from the Poisson assumption. Without it, the factorial moment measure of $\{\modZ\}$ would need to be calculated for different distributions of propagation effects given that the propagation invariance would not hold. The propagation invariance also granted us the ability to create an equivalent and more tractable network $\Phi^*$ used in Proposition~\ref{generalSn}.
A cellular network model with base stations  positioned according to a Ginibre process 
 was recently considered in~\cite{Miyoshi2014,nakata2014spatial}
where an expression for the SINR coverage probability by the closest
base station was derived assuming Rayleigh fading (which does {\em not}
impact the choice of the serving station, as in~\cite{VU2012},
\cite[Section~IV.C]{kcovsingle} for the Poisson model).



\subsection{Conclusions}
We have presented a novel framework for studying 
general down-link characteristics in a heterogeneous cellular network. 
It is based on the explicit 
evaluation of the factorial moment measures of the point process 
formed by the  SINR values observed by a  typical user  with respect to all
network stations. They allow one, in particular, to 
express the finite dimensional distributions of the ordered vectors of $k$
strongest SINR values (SINR order statistics).
We have demonstrated the benefits of this approach
by extending existing results and deriving new ones
regarding the SINR coverages probabilities, in particular pertaining
to the usage of signal combination and interference cancellation   
techniques.

\appendix

\subsection{Remarks on $\calJ_{n,\beta}$ }\label{Jnremarks}
The integral $\calJ_{n,\beta}(x_1,\dots,x_n)$ can be written as
\begin{align*}
 \calJ_{n,\beta} (x_1,\dots,x_n)
=\frac{1}{n}\sum\limits_{j=1}^{n} \int_{[0,1]^{n-1}}
 \frac{   \prod\limits_{i=1}^{n-1}   v_i^{i(2/\beta+1)-1}(1-v_i)^{2/\beta}
  }{ \prod\limits_{i\neq j}  (x_i+\eta_i)} dv_1\dots
dv_{n-1}.
\end{align*}
For $n=1$  the integral $\calJ_{1,\beta} (x_1)=1$  and for $n=2$ the integral reduces to a sum of two integrals of the same type
\begin{align*}
 \calJ_{2,\beta} (x_1,x_2)
=\frac{1}{2}\int_0^1   v_1^{2/\beta }(1-v_1)^{2/\beta} 
 \left[
 \frac{ 1  }{  (x_1+v_1)}   +  \frac{ 1  }{  (x_2+1-v_1)}   \right]  dv_1 \,.
\end{align*}
The general solution to this integrals is
\begin{align*}
\int_0^{1}
\frac{v^{2/\beta}(1-v)^{2/\beta}}{
x+v }dv
=\frac{1}{x }B(2/\beta+1,2/\beta+1) \; {}_2F_1(1,2/\beta+1;2(2/\beta+1);-1/x
),
\end{align*}
where is ${}_2F_1$ a hypergeometric function ~\cite[Equation 15.6.1]{DLMF}.

Given the definition of $\calJ_{n,\beta}$, one can write 
\begin{align*}
 \calJ_{n,\beta} (x_1,\dots,x_n)=\frac{1}{n}  \E\left[H(x_1,\dots,x_n,\widetilde{V}_1,\dots,\widetilde{V}_n) \right]  \prod\limits_{i=1}^{n-1}B(1+2/\beta , i(2/\beta+1)),
\end{align*}
where the permutation-invariant function
\begin{align*}
H(x_1,\dots,x_n,\widetilde{V}_1,\dots,\widetilde{V}_n)
&=  \sum\limits_{j=1}^{n} (x_j+\widetilde{V}_j) \prod\limits_{i=1}^{n}   (x_i+\widetilde{V}_i)^{-1}   \\
&= (\sum\limits_{j=1}^{n} x_j+1)  \prod\limits_{i=1}^{n}   (x_i+\widetilde{V}_i)^{-1}   
\end{align*}
and  $\widetilde{V}_i$ are  beta random variables with distributions
$B(2/\beta+1,i(2/\beta+1))$. This interpretation may offer an
alternative way for numerically evaluating $\calJ_{n,\beta}$; for
example, simulating $V_i$ and estimating the mean of $H$. Furthermore,
as remarked earlier there seems to be a deeper connection between the
STINR process and Poisson-Dirichlet processes owing to them
both being connected to beta variables of this type, cf.~\cite{handa2009two}.

\subsection{Proof of Theorem
  \ref{mainResult}}\label{mainResultProofPart}
In view of the propagation invariance
  Lemma~\ref{l.invariance}, without loss of generality we can assume
  $K=1$ and $(PS)$ having exponential distribution with mean one and 
  replace $\lambda$ with  $a/(\pi \Gamma(1+2/\beta))$, where $a$ is
  given by~(\ref{e.a}) (note that  
$\Gamma(2/\beta+1)$ is the $2/\beta\,$th moment of exponential variable of mean one).

 The definition of $\Psi'=\{\modZ\}$ and its factorial moment
 measure~(\ref{e.Mn})  
give via the  Campbell's formula (for factorial moment measures) and Slivnyak theorem (for example, see~\cite{SKM:1995}) 
\begin{align*}
\modM^{(n)}(t'_1,\dots  t'_n)
=a^n  \int\limits_{(\R^2)^n} \E \left(
\prod_{j=1}^{n}  \Ind\left(\frac{\hatS_j t'_j}{W+\gamma[I+\sum\limits_{i=1}^n\hatS_i t'_i] }>t'_j\right)\right) 
 dx_1\dots dx_n ,
\end{align*} 
where  $\hatS_i:=(P_iS_i)/(\ell(r_i)t'_i)$. The integral reduces to
\begin{align*}
\modM^{(n)}(t'_1,\dots  t'_n)
&=(2\pi a)^n  \int\limits_{(\R_+)^n} \E \left(
\prod_{j=1}^{n}
\Ind\left(\frac{\hatS_j}{W+\gamma[I+\sum\limits_{i=1}^n \hatS_i t'_i]
  }>1 \right)\right) 
   r_1dr_1\dots r_ndr_n\\[2ex] 
&=(2\pi a)^n \!\!\!\int\limits_{(\R_+)^n} \!\!\!\Prob \left(
\frac{\min[\hatS_1,\dots,\hatS_n]}{W+\gamma[I+\sum\limits_{i=1}^n\hatS_it'_i ]}>1 \right)\,r_1dr_1\dots r_ndr_n.
\end{align*} 
Since $P_iS_i$ are exponentially distributed with unit means, then $\hatS_i$ are also exponentially distributed with means $1/\mu_i=1/(t'_i\ell(r_i))$, and let
\[
 \hatM=\min(\hatS_1,\dots,\hatS_n), 
\]
which is another exponential variable with parameter $\mu_M=\sum\limits_{i=1}^n \mu_i $.
Let
\[
\quad {D}=\sum_{i=1}^n \hatS_it'_i -\hatM\sum_{i=1}^n t'_i=
\sum_{i=1}^n (\hatS_i-\hatM) t'_i\,.
\]
Note, $\hatM=\hatS_j$ with probability $\mu_j/(\sum\limits_{i=1}^n\mu_i)$. Given this event
$\hatS_i-\hatS_j$ ($i\not=j$) are conditionally independent, exponential variables with parameters $\mu_i$, and independent of $\hatM$.  
(This is the well known  memoryless property of the exponential distribution.) 
Consequently $\hatM$ and
${D}$ are independent and the latter has the following mixed exponential distribution
\begin{equation}\label{DistD}
\Prob ({D} \leq
d) =\frac{1}{\sum\limits_{i=1}^n\mu_i}\left[\sum\limits_{j=1}^n \mu_j\Prob(\sum\limits_{i\neq j} \hatS_it'_i\leq
d)\right].
\end{equation}
Observe that
\begin{align*}
 \Prob  \left(\frac{\min(\hatS_1,\dots,\hatS_n)}{W+\gamma[I+\sum\limits_{i=1}^n\hatS_it'_i ]}>1 \right)  
&=
 \Prob \left(\frac{\hatM}{W+\gamma[I+D+\hatM\sum\limits_{i=1}^n t'_i  ]}>1 \right) \\
& = 
 \Prob \left(\frac{\hatM}{W+\gamma[I+D]}>\hatT_n/\gamma \right) \\
& = 
 \Prob \left(\frac{\hatM}{W/\gamma+I+D}>\hatT_n \right),
\end{align*} 
where
$$
\hatT_{n}:=\hatT_{n}(t_1,\dots,t_n)=\gamma/(1-\gamma\sum\limits_{i=1}^n t'_i).
$$
Given that $W$, $I$, $D$, and 
$\hatM$ are mutually independent, and  that $\hatM$ is exponentially distributed, hence
\begin{align*}
\Prob \left(\frac{\hatM}{W/\gamma+I+D}>\hatT_n \right)
=\Lap_{W/\gamma}(\mu_{M}\hatT_n)&\Lap_I(\mu_{M}\hatT_n)\Lap_{{D}}(\mu_{M}\hatT_n),
\end{align*}
which is a product of three Laplace transforms. The first transform  
$$
 \Lap_{W/\gamma}(\xi)=e^{-\xi W/\gamma}
$$
and the second (see equation $2.25$ in \cite{FnT1})
$$
 \Lap_I(\xi )=e^{-a\xi^{2/\beta}\pi C'(\beta)/K^2}
$$
 immediately follow.
Given expression (\ref{DistD}), then the Laplace transform of a
general exponential variable
and the convolution theorem
imply that the distribution of ${D}$ has the transform
$$
 \Lap_{{D}}(\xi)
=\frac{\prod\limits_{i=1}^{n}\mu_i}{\sum\limits_{i=1}^{n}\mu_i} \left[
\sum\limits_{j=1}^{n} \frac{1}{\left(\prod\limits_{i\neq j} [\mu_j+t'_j\xi
]\right)}\right].
$$
After substituting the explicit path-loss function (\ref{PATHLOSS}) with $K=1$, which will be recovered later, and some algebra we arrive at
\begin{align*}
 \Prob &\left(\frac{\hatM}{W/\gamma+I+D}>\hatT_n \right)  \nonumber \\[1.5ex] 
&=\frac{\left(\prod\limits_{i=1}^{n}
t'_ir_i^{\beta} e^{-(W/\gamma)\hatT_n t'_ir_i^{\beta}} \right)e^{-a(\hatT_n\sum\limits_{i=1}^{n}
t'_ir_i^{\beta})^{2/\beta}
\pi C'(\beta)} }{\sum\limits_{i=1}^{n} t'_i r_i^{\beta}} 
\sum\limits_{j=1}^{n} \left(\frac{1}{\prod\limits_{i\neq j}
[t'_ir_i^{\beta}+t_i\hatT_n \sum\limits_{k=1}^{n} t'_k r_k^{\beta}
]}\right).
\end{align*}
Hence, the integral
\begin{align*}
&\modM^{(n)}\left(t'_1,\dots ,t'_n\right) \nonumber
\\ &=(2\pi a)^n \int\limits_{(\R_+)^n} \frac{\left(\prod\limits_{i=1}^{n}
t'_ir_i^{\beta} e^{-(W/\gamma)\hatT_n t'_ir_i^{\beta}} \right)
}{\sum\limits_{i=1}^{n} t'_i r_i^{\beta}} 
 e^{-a(\hatT_n\sum\limits_{i=1}^{n}
t'_ir_i^{\beta})^{2/\beta}
\pi C'(\beta)} 
\sum\limits_{j=1}^{n} \left(\frac{1}{\prod\limits_{i\neq j}
[t'_ir_i^{\beta}+t'_i\hatT_n \sum\limits_{k=1}^{n} t'_k r_k^{\beta}
]}\right) r_1dr_1\dots r_ndr_n,
\end{align*} 
with the variable change $s_i:=r_it_i'^{1/\beta}(a \hatT_n^{2/\beta}\pi
C'(\beta))^{1/2}$ reduces to 
\begin{align}\label{integral_si}
&\modM^{(n)}\left(t'_1,\dots ,t'_n\right) \nonumber
\\ &=\frac{2^n \hatT_n^{-2n/\beta}}{(C'(\beta))^n} 
 \int\limits_{(\R_+)^n} \frac{\left(\prod\limits_{i=1}^{n}
t_i'^{-2/\beta} s_i^{\beta+1} e^{- (W/\gamma)  (a\pi
C'(\beta))^{-\beta/2}  s_i^{\beta}} \right) }{\sum\limits_{i=1}^{n}  s_i^{\beta}}  
e^{-(\sum\limits_{i=1}^{n}
s_i^{\beta})^{2/\beta}}  \sum\limits_{j=1}^{n} \left(\frac{1}{\prod\limits_{i\neq j}
[s_i^{\beta}+t'_i\hatT_n\sum\limits_{k=1}^{n}  s_k^{\beta}
]}\right) ds_1\dots ds_n .
\end{align} 

We introduce a change of variables inspired by the $n$-dimensional spherical coordinates 
\begin{align*}
 s_1&=u[\sin\theta_1 \sin\theta_2\dots\sin\theta_{n-1}]^{2/\beta}\\
s_2&=u[\cos\theta_1 \sin\theta_2\dots\sin\theta_{n-1}]^{2/\beta}\\
s_3&=u[\cos\theta_2 \sin\theta_3\dots\sin\theta_{n-1}]^{2/\beta}\\
&\cdots \nonumber \\
s_n&=u[\cos\theta_{n-1}]^{2/\beta}.
\end{align*}
Note that $u^{\beta}=\sum_{i=1}^{n} s_i^{\beta} $ and  $
 \prod_{i=1}^{n} s_i  =u^n [ \prod_{i=1}^{n} q_i]^{2/\beta}$ where we use the shorthand $q_i=q_i(\theta_1,\dots,\theta_n):=(s_i/u_i)^{\beta/2}$ for $u_i>0$. For $\beta=2$,  we obtain $n$-dimensional
spherical coordinates with a Jacobian
$$
 \hat{J}(\hat{u},\hat{\theta}_1,\dots,\hat{\theta}_n)=\hat{u}^{n-1} \prod_{i=1}^{n-1} \sin^{i-1}\hat{\theta}_{i},
$$
while our variables have the Jacobian
\begin{align*}
J(u,\theta_1,\dots,\theta_n)=\left(\frac{2}{\beta}\right)^{n-1}\hat{J}(u,
\theta_1,\dots,\theta_n)  \left[\prod_{i=1}^{n-1}\sin^{i}\theta_i\cos\theta_i
\right]^{2/\beta-1}.
\end{align*}

Denote $z:= (W/\gamma) (a\Gamma(1-2/\beta))^{-\beta/2}$. The change of variable renders the integral as
\begin{align*}
&\modM^{(n)}\left(t_1,\dots ,t_n\right)
\\ 
&=\frac{2^n \hatT_n^{-2n/\beta}}{(C'(\beta))^n} \int\limits_{(\R_+)^n} \frac{\left(\prod\limits_{i=1}^{n}
t_i'^{-2/\beta} s_i^{\beta+1} e^{-z s_i^{\beta}} \right)
}{\sum\limits_{i=1}^{n}  s_i^{\beta}}
 e^{-(\sum\limits_{i=1}^{n}
s_i^{\beta})^{2/\beta}}  \sum\limits_{j=1}^{n} \left(\frac{1}{\prod\limits_{i\neq j}
[s_i^{\beta}+t'_i \hatT_n \sum\limits_{k=1}^{n}  s_k^{\beta}
]}\right) ds_1\dots ds_n \\
&=\frac{2^n \hatT_n^{-2n/\beta}}{(C'(\beta))^n} \left(\frac{2}{\beta}\right)^{n-1} \prod\limits_{i=1}^{n}
t_i'^{-2/\beta} \int\limits_0^{\infty} u^{2n-1}  e^{-(
u^2 +z u^{\beta}) } du \mkern-23mu \int\limits_{[0,\pi/2]^{n-1}}\!   \sum\limits_{j=1}^{n}  \!  \frac{
  \prod\limits_{i=1}^{n-1}  \left[ \sin^{i}\theta_i\cos\theta_i \right]^{4/\beta+1} 
   [\sin\theta_{i}]^{i-1} }{   \prod\limits_{i\neq j}  [q_i^2 +t'_i\hatT_n 
] }   d\theta_1\dots d\theta_{n-1} .
\end{align*}
We then substitute $v_i=\sin^2 \theta_i$, and define $\eta_i$ and
$\hatt_i$ accordingly. 
We define $\calJ_n(x_1,\dots,x_n)$  to be analogous (to equation (15) in \cite{kcovsingle}), 
\begin{align*}
 \calJ_{n,\beta} (x_1,\dots,x_n)=\frac{1}{n}  \int\limits_{[0,1]^{n-1}} \sum\limits_{j=1}^{n}
 \frac{   \prod\limits_{i=1}^{n-1}   v_i^{i(2/\beta+1)-1}(1-v_i)^{2/\beta}
  }{ \prod\limits_{i\neq j}  (x_i+\eta_i)} dv_1\dots
dv_{n-1},
\end{align*}
but observe, since $\sum_{j=1}^{n}   \eta_j=1$, that the above reduces to integral (\ref{Jn}),  completing the proof. 

\subsection{Partial densities of $M'^{(n)}$}\label{densityMn}
We will now detail how to calculate the
 partial  derivatives $\frac{\partial^k}{\partial t'_1,\ldots\partial
  t'_k}$ of $M'^{(n)}$, for $k\le n$ and thus obtain explicit
expressions for $\mu'^{k+i}_k$. Note first that the integrals
$\calI_{n,\beta}$ do not  depend on $\{t'_i\}$ and write
\begin{align*}
\frac{M'^{(n)}(t'_1,\ldots,t'_n)}{ n! \calI_{n,\beta}(W a^{-2/\beta}) } =&   \prod\limits_{i=1}^{n} \hatt_i^{-2/\beta} \calJ_{n,\beta}(\{\hatt_i\})\\
=&\frac{1}{n} \int_{[0,1]^{n-1}} R(\{v_i\})Q(\{\hatt_i\},\{\eta_i\})  dv_i,
\end{align*} 
where
\begin{equation}\label{e.Q100}
 Q(\{\hatt_i\},\{\eta_i\})= \frac{  (1+\sum_{j=1}^{n}  \hatt_j)
  }{ \prod\limits_{i=1}^n    (\hatt_i^{2/\beta+1} +\hatt_i^{2/\beta} \eta_i)} ,
\end{equation}
\begin{equation}
 R(\{v_i\}):= \prod\limits_{i=1}^{n-1}   v_i^{i(2/\beta+1)-1}(1-v_i)^{2/\beta} ,
\end{equation}
$\{\hatt_{i}\}$ are related to $\{t'_i\}$ via~(\ref{hattn})
and $\{\eta_i\}$ are  related to $\{v_i\}$ via~(\ref{e.eta-v}).
Note that only $Q$ depends on $\{t'_i\}$,
specifically 
\begin{align*}
Q:=Q(\{ t_i' \})=  \frac{  \gamma^{-{b} }(1-\gamma\sum_{j=1}^{n}  t_j')^{{b} } }{ \prod\limits_{i=1}^n   [t_i'^{\alpha+1} +(\eta_i/\gamma)(1-\gamma\sum_{j=1}^{n}  t_j' )t_i'^{\alpha}  ] }  \frac{ \gamma  \sum_{j=1}^{n}  t_j'}{(1-\gamma\sum_{j=1}^{n}  t_j') } ,
\end{align*} 
where $\alpha=2/\beta$ and ${b}=n(\alpha+1)$.  

This can be written as 
\[
Q: = \bar Q+\sum_{j=1}^{n} h^{(j)} \bar Q
\]
where 
\[
\bar Q(\{ t_i' \}):=\bar Q =\frac{  \gamma^{-{b} }(1-\gamma\sum_{j=1}^{n}  t_j')^{{b} }}{ \prod\limits_{i=1}^n    [t_i'^{\alpha+1} +(\eta_i/\gamma)(1-\gamma\sum_{j=1}^{n}  t_j' )t_i'^{\alpha}  ]} ,
\]
and 
\[
  h^{(j)}(\{ t_i' \}):=h^{(j)}=\frac{\gamma t_j'}{(1-\gamma\sum_{i=1}^{n}  t_i')}.
\]

\begin{remark}
Calculating the partial derivatives $\frac{\partial^k}{\partial t'_1,\ldots\partial
  t'_k}$ of $\modM^{(n)}\left(t_1',\dots ,t_n'\right)$ reduces to the
calculation of these derivatives for  $\bar Q$ and $h^{(j)}\bar
Q$. This is a tedious but
straightforward task. One can use computer symbolic integration
(for example, Maple) to easily obtain explicit expressions for these
derivatives. Due to the space constraint we will not develop here
explicit expressions for the general case $k\le n$ but only for $k=2$ and
arbitrary $n\ge 2$. These expressions are used in the numerical
examples presented in this paper.
\end{remark}

\subsubsection{Second-order derivatives}
Assume $k=2$ and $n\ge 2$. Let us introduce the auxiliary functions
\begin{align*}
A:=&\gamma^{-{b} }(1-\gamma\sum_{j=1}^{n}  t_j')^{{b} }\\
 B:= &\prod\limits_{i=1}^n    [t_i'^{\alpha+1} +(\eta_i/\gamma)(1-\gamma\sum_{j=1}^{n}  t_j' )t_i'^{\alpha}  ]^{-1},
\end{align*} 
write $\bar Q=AB$, and adopt a subscript notation to denote partial
derivatives $\frac{\partial^2}{\partial t'_1,\partial '_2}$ such that
\begin{align*}
\bar Q_{12}:=&\frac{\partial^2 \bar Q}{\partial t_1'  t_2' }
=A_{12}B+A_{1}B_{2}+A_{2}B_{1}+AB_{12},
\end{align*} 
and 
\begin{align*}
[h^{(j)} \bar Q]_{12}:=\frac{\partial^2[h^{(j)} \bar Q] }{\partial t_1'  t_2' }
=h^{(j)}_{12}\bar Q+h^{(j)}_{1}\bar Q_{2}+h^{(j)}_{2}\bar Q_{1}+h^{(j)}\bar Q_{12},
\end{align*} 
where
\[
\bar Q_{1}=A_{1}B+AB_1, \quad \bar Q_{2}=A_{2}B+AB_2.
\]
Now only the partial derivatives of the functions $h^{(j)}$, $A$ and
$B$ remain to be calculated, which we do in what follows.

\paragraph{Derivatives of $h^{(j)}$}
For $j=1$ or $2$,  $h^{(j)}$ has the first order derivative
\[
h^{(j)}_j=\frac{\gamma [1-\gamma(\sum_{i=1}^{n}  t_i' - t_j')]}{(1-\gamma\sum_{i=1}^{n}  t_i')^2}
\]
and the second order derivative 
\[
h^{(j)}_{12}=\frac{-\gamma^2 +2\gamma^2[1-\gamma(\sum_{i=1}^{n}  t_i' - t_j')](1-\gamma\sum_{i=1}^{n}  t_i')^{-1}}{(1-\gamma\sum_{i=1}^{n}  t_i')^2}
\]
For $k>2$ and $j=1$ or $2$,  the first order derivative is
\[
h^{(k)}_j=\frac{\gamma^2 (t_k') }{(1-\gamma\sum_{i=1}^{n}  t_i')^2},
\]
while the second order derivative is
\[
h^{(k)}_{12}=\frac{2\gamma^3 (t_k') }{(1-\gamma\sum_{i=1}^{n}  t_i'^3)}.
\]

\paragraph{Derivatives of $A$}
For the function $A$, the partial derivatives  are immediately given
\begin{align*}
A_1&=A_2=-{b}\gamma^{-{b}+1 }(1-\gamma\sum_{j=1}^{n}  t_j')^{{b}-1 }\\
 A_{12}&= {b}({b}-1)\gamma^{-{b}+2 }(1-\gamma\sum_{j=1}^{n}  t_j')^{{b}-2 }.
\end{align*}


\paragraph{Derivatives of $B$}
The complexity of $B$ requires a couple of extra steps motivating further auxiliary functions
\[
C^{(n)}:=\prod_{j=3}^{n}   B^{(j)},
\]
where 
\[
B^{(j)} =[D^{(j)}- \eta_j t_j'^{\alpha}(t_1'+t_2')]^{-1}
\]
and
\[
D^{(j)} = t_j'^{\alpha+1}+(\eta_j/\gamma )(1 -\gamma\sum_{i=3}^{n}  t_i')   t_j'^{\alpha},
\]
while noting that $D^{(1)} $ and $D^{(2)}$ are respectively independent of $t_2'$ and $t_1'$. One can  write $B= B^{(1)} B^{(2)} C^{(n)}$ (for $n=2$, the empty product implies $C^{(2)}=1$), hence the partial derivative is  given by
\begin{align*}
B_{12}=\,&B^{(1)}_{12} B^{(2)} C^{(n)}+B^{(1)}_1 B^{(2)} _2C^{(n)}+B^{(1)}_2 B^{(2)}_1 C^{(n)}\\
&+B^{(1)} B^{(2)}_{12} C^{(n)}+B^{(1)} B^{(2)}_1 C^{(n)}_2+B^{(1)} B^{(2)}_2 C^{(n)}_1\\
&+B^{(1)}_1 B^{(2)} C^{(n)}_2+B^{(1)}_2 B^{(2)} C^{(n)}_1+B^{(1)} B^{(2)} C^{(n)}_{12}.
\end{align*}
The first order derivatives are given by
\begin{align*}
B^{(1)}_1=&-(D^{(1)}_1 - \eta_1[ (\alpha+1)t_1'^{\alpha} +\alpha t_1'^{\alpha-1}t_2]) (B^{(1)})^2 \\
B^{(2)}_2=&-(D^{(2)}_2 - \eta_2[ (\alpha+1)t_2'^{\alpha} +\alpha t_2'^{\alpha-1}t_1]) (B^{(2)})^2,
\end{align*}
where for $j=1$ or $2$, 
\[
D^{(j)}_j = (\alpha+1)t_j'^{\alpha}+(\eta_j/\gamma )\alpha t_j'^{\alpha-1}(1 -\gamma\sum_{i=3}^{n}  t_i') .
\]
Furthermore,
\begin{align*}
B^{(1)}_2=&\eta_1\alpha t_1'^{\alpha} (B^{(1)})^2 \\
B^{(2)}_1=&\eta_2\alpha t_2'^{\alpha} (B^{(2)})^2 .
\end{align*}
The second order derivatives are
\begin{align*}
B^{(1)}_{12}=\,&\alpha \eta_1 t_1'^{\alpha-1}(B^{(1)})^2 
-2\eta_1 t_1'^{\alpha}(D^{(1)}_1 - \eta_1[ (\alpha+1)t_1'^{\alpha} +\alpha t_1'^{\alpha-1}t_2]) (B^{(1)})^3 \\[1.5ex]
B^{(2)}_{12}=\,&\alpha \eta_2 t_2'^{\alpha-1}(B^{(2)})^2  
-2\eta_2 t_2'^{\alpha}(D^{(2)}_2 - \eta_2[ (\alpha+1)t_2'^{\alpha} +\alpha t_2'^{\alpha-1}t_1]) (B^{(2)})^3 .
\end{align*}
The product function $C^{(n)}$ has identical first derivatives
\[
C^{(n)}_{1}=C^{(n)}_2=C^{(n)}\sum_{i=3}^{n}  \eta_i t_i'^{\alpha}B^{(i)} ,
\]
while
\[
C^{(n)}_{12}=C^{(n)}_2\sum_{i=3}^{n}  \eta_i t_i'^{\alpha}B^{(i)}+ C^{(n)}\sum_{i=3}^{n} [\eta_i t_i'^{\alpha}B^{(i)}]^2 .
\]

\subsection{Matrix form of $M'^{(n)}$}\label{matrixformMn}
Factorial moment measures  $M'^{(n)}$ admit some matrix-determinant 
representation.~\footnote{This representation is
    different from that used to define determinantal point processes, which suggests, but does not
    prove, that the STINR process is not a determinantal
    point process.} We introduce some notation.
For $n\ge 1$, 
denote by $\Delta(\bsz)$ the $n\times n$ diagonal matrix 
with entries $\bsz=(z_1,\ldots,z_n)\in\R^n$ 
on the diagonal, by $\1=(1,\ldots, 1)^T$ the
$n$-dimensional column vector of 1's ($()^T$ stands for the matrix
transpose). Denote by $\bbI=\Delta(\1)$  the $n\times n$ identity matrix. 
We will use ``$\times$'' symbol for the matrix multiplication
and $|\cdot|$ or $\det()$  for the (square) matrix determinant.

The following  identity will be used in what follows
(cf \cite[Lemma 1.1]{ding2007eigenvalues} or~\cite[Theorem
13.7.3]{harville2008matrix}).  For $n\times n$ matrix  $\bbA$, 
$n$-dimensional column vector $\bf x$ and the row vector $\bsz$ 
\begin{equation}\label{e.MDL}
|\bbA + \bfx \bsz |=  |\bbA|(1+\bsz \bbA^{-1} \bfx) \,.
\end{equation}

For given functions $\varphi,\psi$ from $\R^+\times[0,1]$ to $\R^+$ 
and two vectors $\bfz=(z_1,\ldots,z_n)\in\R^n$, $\bfeta=(\eta_1,\ldots,\eta_n)\in[0,1]$  
denote
\begin{align}\label{e.bbK}
\bbK[\varphi,\psi](\bsz,\bfeta)=
\Delta((\varphi(z_1,\eta_1),\ldots,\varphi(z_n,\eta_n))
+\1\times ((\psi(z_1,\eta_1),\ldots,\psi(z_n,\eta_n))\,.
\end{align}
Applying the determinant identity~(\ref{e.MDL}) we obtain
\begin{equation}\label{e.detK}
\det\Bigl(\bbK[\varphi,\psi](\bsz,\bfeta)\Bigr)=
\prod_{i=1}^n\varphi(z_i,\eta_i)\Bigl(1+\sum_{i=1}^n
\frac{\varphi(z_i,\eta_i)}{\psi(z_i,\eta_i)}\Bigr)\,.
\end{equation}
Note that different rows of the matrix $\bbK[\varphi,\psi](\bsz,\bfeta)$  
depend on different entries of $\bsz$ and consequently by the Leibniz
formula for the determinant
\begin{equation}\label{e.derK}
\frac{\partial^n\det(\bbK[\varphi,\psi](\bsz,\bfeta))}%
{\partial z_1\ldots\partial z_n}
=\det(\bbK[\varphi',\psi'](\bsz,\bfeta))\,,
\end{equation}
where $\varphi'=\varphi'(z,\eta)=\frac{d}{dz}\varphi(z,\eta)$
and similarly for $\psi'$. Applying again the identity~(\ref{e.MDL}) we obtain
\begin{align}
\noalign{$\displaystyle{\frac{\partial^n\det(\bbK[\varphi,\psi](\bsz,\bfeta))}%
{\partial z_1\ldots\partial z_n}}$}
&=\prod_{i=1}^n\varphi'(z_i,\eta_i)\Bigl(1+\sum_{i=1}^n
\frac{\varphi'(z_i,\eta_i)}{\psi'(z_i,\eta_i)}\Bigr)\,.
\label{e.derK1}
\end{align}

Define the following two functions on $\R^+$ and $\R\times[0,1]$, respectively,
\begin{equation}\label{e.G_H}
 G(z)=z^{-1},\quad H(z,\eta)=z^{-2/\beta}(z+\eta)^{-1}\,.
\end{equation}

\begin{remark}
Using the determinant identity~(\ref{e.MDL}) and~(\ref{e.Q100}) it is easy to see that 
\begin{align}\label{e.Jxprod}
Q(\{\hatt_i\},\{\eta_i\})=\det\Bigl(\bbK\bigl[H,\frac HG\bigr](\hatbst,\bfeta)\Bigr).
\end{align}
The above representation may offer  an alternative way
of calculating the partial derivatives of $Q$,
following
the lines explained in
~\cite[Sections~15.8--15.9]{harville2008matrix}
and using the Sherman--Morrison formula; \cite[eq.~(2.25)]{harville2008matrix}).
\end{remark}

\subsection{Proof of Lemma \ref{l.markedinvariance}}\label{proofmarkedinvariance}
By the displacement theorem~\cite[Section 1.3.3]{FnT1} and Campbell's theorem~\cite[Corolloary 2.2]{FnT1}, $\Psi$ is a Poisson point process with intensity measure
\begin{align}
\Lambda(s,t)&=\E[\sum_{(Y,T)\in\markTheta} \Ind ( Y\leq s, T\leq t )] \\
&=\lambda \E \int_{\R^2}\Ind (Y\leq s) \Ind (  T\leq t ) dx \\
&=\lambda (2\pi)\E \int_0^{\infty}\Ind (r \leq (sPS)^{1/\beta}) \Ind ( T\leq t ) rdr.
\end{align}

\subsection{Schuette-Nesbitt formula}\label{App1}
The following result is often used in insurance mathematics
(see, for example, \cite{GERBER:1995} for a  proof). 
Let $B_1,B_2,\dots B_m$ denote arbitrary events, and define the $n$\,th
symmetric sum for them as 
$\mathcal{S}_n = \sum\limits_{{\,i_1,i_2,\dots,i_n\atop \text{distinct}}}
\Prob\Bigl(\,\bigcap_{j=1}^n B_{i_j}\,\Bigr)$.
Let the random variable $\mathcal{N}$ denote the number of the aforementioned events that occur, that is
$\mathcal{N}=\sum\limits_{i=1}^{m}\Ind_{B_i}$.

\begin{theorem}[Schuette-Nesbitt formula]
For arbitrary coefficients $c_1,c_2,\dots c_m$ the following holds true
\[
 \sum\limits_{n=0}^{m} c_n \Prob(\,\mathcal{N}=n\,)
=\sum\limits_{n=0}^{m} (\Delta^nc)_0 \calS_n,
\]
where $\Delta$ is the forward difference operator; that is, $(\Delta
c)_k=c_{k+1}-c_{k}$ and  $\Delta^k=\Delta\Delta^{k-1}$.
 \end{theorem}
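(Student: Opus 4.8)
The plan is to prove the identity through an operator calculus on sequences, which is the standard and cleanest route to the Schuette--Nesbitt formula. I first introduce the shift operator $E$ acting on a sequence $c=(c_0,c_1,\dots)$ by $(Ec)_k=c_{k+1}$, together with the identity operator $I$, so that the forward difference is $\Delta=E-I$ and $\Delta^n=(E-I)^n$. These operators commute and form a commutative polynomial algebra, and the crucial observation is that evaluating any polynomial in $E$ applied to $c$ at the index $0$ extracts exactly the scalars appearing on both sides of the claimed formula: $(E^nc)_0=c_n$ and $(\Delta^nc)_0$ is the term multiplying $\calS_n$.

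The key step is the random operator identity
\[
\prod_{i=1}^m\bigl(I+\Ind_{B_i}\Delta\bigr)=E^{\mathcal{N}},
\]
which holds because on the event $B_i$ the $i$-th factor equals $I+\Delta=E$, whereas on $B_i^c$ it equals $I$; thus the realization of the events simply decides which factors are $E$ and which are $I$, and the product collapses to $E$ raised to the number $\mathcal{N}=\sum_{i=1}^m\Ind_{B_i}$ of occurring events. Writing $E^{\mathcal{N}}=\sum_{n=0}^m\Ind_{\{\mathcal{N}=n\}}E^n$ records the left-hand structure, while expanding the product on the left by the distributive law gives $\sum_{S\subseteq\{1,\dots,m\}}\bigl(\prod_{i\in S}\Ind_{B_i}\bigr)\Delta^{|S|}$.

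Next I take expectations of the expanded representation. Grouping the subsets $S$ by their cardinality $n=|S|$ and using $\E[\prod_{i\in S}\Ind_{B_i}]=\Prob(\bigcap_{i\in S}B_i)$, the sum over $n$-element subsets is precisely the symmetric sum $\calS_n$, so that $\E[E^{\mathcal{N}}]=\sum_{n=0}^m\calS_n\,\Delta^n$ as an operator identity with scalar coefficients. Taking expectations of the other side gives $\E[E^{\mathcal{N}}]=\sum_{n=0}^m\Prob(\mathcal{N}=n)\,E^n$. Finally I apply both operator expressions to an arbitrary sequence $c=(c_0,\dots,c_m)$ and read off the entry at index $0$: the left side yields $\sum_n\Prob(\mathcal{N}=n)\,(E^nc)_0=\sum_n\Prob(\mathcal{N}=n)\,c_n$, and the right side yields $\sum_n\calS_n\,(\Delta^nc)_0$, which is exactly the assertion.

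The main obstacle is really the single conceptual step of recognizing and justifying the operator identity above; once it is in hand, taking expectations and evaluating at index $0$ is purely mechanical. The only bookkeeping to confirm is that the distributive expansion produces one term per \emph{subset} $S$, so the coefficient of $\Delta^n$ is the elementary symmetric sum $\calS_n=\sum_{|S|=n}\Prob(\bigcap_{i\in S}B_i)$ with no spurious $n!$ factor, matching the definition used here. No convergence questions arise because $m<\infty$ makes every sum finite, so the argument is entirely algebraic.
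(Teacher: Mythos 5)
Your proof is correct and complete. The paper does not actually prove this result itself---it defers to the cited reference (Gerber, 1995)---and your argument via the shift operator $E$, the random operator identity $\prod_{i=1}^m\bigl(I+\Ind_{B_i}\Delta\bigr)=E^{\mathcal{N}}$, taking expectations of the scalar coefficients, and evaluating at index $0$ is exactly the classical operator-calculus proof found in that reference; your explicit remark that $\calS_n$ must be read as a sum over \emph{subsets} (not ordered tuples, which would introduce a spurious $n!$) is also the correct reading, and indeed the one the paper uses implicitly throughout (cf.\ the $1/n!$ in its Proposition on symmetric sums).
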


\vspace{-2ex}
\bibliographystyle{IEEEtran}
\bibliography{coveragemultitier}
\end{document}